\newtheorem{conjecture}{Conjecture}
\newtheorem{example}{Example}
\newtheorem{theorem}{Theorem}[section]
\newtheorem{prop}[theorem]{Proposition}
\newtheorem{lemma}[theorem]{Lemma}
\newtheorem{cor}[theorem]{Corollary}
\newtheorem{claim}[theorem]{Claim}
\theoremstyle{definition}
\newtheorem{definition}[theorem]{Definition}
\newtheorem*{defn-non}{Definition}
\newtheorem{rmk}[theorem]{Remark}
\newlist{Case}{enumerate}{3}
\setlist[Case, 1]{%
    label           =   {\bfseries Case \arabic*.},
    labelindent=1em ,labelwidth=1cm, labelsep*=1em, leftmargin =!
}
\setlist[Case, 2]{%
    label           =   {\bfseries Subcase \arabic{Casei}.\arabic*.},
    labelindent=-1em ,labelwidth=1cm, labelsep*=1em, leftmargin =!
}
\setlist[Case, 3]{%
    label           =   {\bfseries Subsubcase \arabic{Casei}.\arabic{Caseii}.\arabic*.},
    labelindent=-1em ,labelwidth=1cm, labelsep*=1em, leftmargin =!
}
\newenvironment{poc}{\begin{proof}[Proof of claim]}{\end{proof}}
\newcommand{\cA}{\mathcal{A}}
\newcommand{\cC}{\mathcal{C}}
\newcommand{\cF}{\mathcal{F}}
\newcommand{\cG}{\mathcal{G}}
\newcommand{\cH}{\mathcal{H}}
\newcommand{\cI}{\mathcal{I}}
\newcommand{\cJ}{\mathcal{J}}
\newcommand{\cL}{\mathcal{L}}
\newcommand{\cS}{\mathcal{S}}
\newcommand{\cZ}{\mathcal{Z}}
\renewcommand{\le}{\leqslant}
\renewcommand{\leq}{\leqslant}
\renewcommand{\ge}{\geqslant}
\renewcommand{\geq}{\geqslant}
\DeclarePairedDelimiter\abs{\lvert}{\rvert}
\DeclarePairedDelimiter\ceil{\lceil}{\rceil}
\DeclarePairedDelimiter\floor{\lfloor}{\rfloor}
\DeclarePairedDelimiter\parenv{\lparen}{\rparen}
\DeclarePairedDelimiter\set{\{}{\}}
\DeclarePairedDelimiter\mset{\{\!\!\{}{\}\!\!\}}
\DeclareMathOperator{\supp}{supp}
\DeclareMathOperator{\wt}{wt}
\DeclareMathOperator{\asy}{asy}
\DeclareMathOperator{\Com}{Com}
\DeclareMathOperator{\Ord}{Ord}
\DeclareMathOperator{\gd}{d}
\DeclareMathOperator{\cod}{cod}
\DeclareMathOperator{\jd}{jd}
\DeclareMathOperator{\gdmax}{d_{\max}}
\DeclareMathOperator{\codmax}{cod_{\max}}
\DeclareMathOperator{\jdmax}{jd_{\max}}
\DeclareMathOperator{\Maj}{Maj}
\DeclareMathOperator*{\argmax}{arg\,max}
\newcommand{\ra}{r^*_{\asy}}
\newcommand{\La}{L_{\asy}}
\newcommand{\N}{\mathbb{N}}
\newcommand{\Z}{\mathbb{Z}}
\newcommand{\ba}{\boldsymbol{a}}
\newcommand{\bb}{\boldsymbol{b}}
\newcommand{\bc}{\boldsymbol{c}}
\newcommand{\be}{\boldsymbol{e}}
\newcommand{\bu}{\boldsymbol{u}}
\newcommand{\bv}{\boldsymbol{v}}
\newcommand{\bw}{\boldsymbol{w}}
\newcommand{\bx}{\boldsymbol{x}}
\newcommand{\by}{\boldsymbol{y}}
\newcommand{\bz}{\boldsymbol{z}}
\newcommand{\balpha}{\boldsymbol{\alpha}}
\newcommand{\bbeta}{\boldsymbol{\beta}}
\newcommand{\bgamma}{\boldsymbol{\gamma}}
\newcommand{\ch}{\mathfrak{ch}}
\DeclareMathOperator{\Ball}{Ball}
\algrenewcommand\algorithmicrequire{\textbf{Input:}}
\algrenewcommand\algorithmicensure{\textbf{Output:}}
\algnewcommand{\LineComment}[1]{\State \(\triangleright\) #1}
\title{Optimal Reconstruction Codes with Given Reads in Multiple Burst-Substitutions Channels}
\author{
Wenjun Yu\thanks{School
of Electrical and Computer Engineering, Ben-Gurion University of the Negev,
Beer Sheva 8410501, Israel
(e-mail: wenjun@post.bgu.ac.il).}
\and 
Yubo Sun\thanks{School of Mathematical Sciences, Capital Normal University, Beijing 100048, China (e-mail: 2200502135@cnu.edu.cn, gnge@zju.edu.cn). The research of G. Ge was supported by the National Key Research and Development Program of China under Grant 2020YFA0712100, the National Natural Science Foundation of China under Grant 12231014, and Beijing Scholars Program.}
\and
Zixiang Xu\thanks{Extremal Combinatorics and Probability Group (ECOPRO), Institute for Basic Science (IBS), Daejeon 34126, South Korea.
(e-mail: zixiangxu@ibs.re.kr). The research of Z. Xu was supported by IBS-R029-C4.}
\and
Gennian Ge\footnotemark[2]
\and
Moshe~Schwartz\thanks{Department of Electrical and Computer Engineering, McMaster University, Hamilton, ON, L8S 4K1, Canada, and on a leave of absence from the School
of Electrical and Computer Engineering, Ben-Gurion University of the Negev,
Beer Sheva 8410501, Israel
(e-mail: schwartz.moshe@mcmaster.ca).}
}
\begin{document}
\date{}
\maketitle
\begin{abstract}
We study optimal reconstruction codes over the multiple-burst substitution channel. Our main contribution is establishing a trade-off between the error-correction capability of the code, the number of reads used in the reconstruction process, and the decoding list size. We show that over a channel that introduces at most $t$ bursts, we can use a length-$n$ code capable of correcting $\epsilon$ errors, with $\Theta(n^\rho)$ reads, and decoding with a list of size $O(n^\lambda)$, where $t-1=\epsilon+\rho+\lambda$. In the process of proving this, we establish sharp asymptotic bounds on the size of error balls in the burst metric. More precisely, we prove a Johnson-type lower bound via Kahn's Theorem on large matchings in hypergraphs, and an upper bound via a novel variant of Kleitman's Theorem under the burst metric, which might be of independent interest.

Beyond this main trade-off, we derive several related results using a variety of combinatorial techniques. In particular, along with tools from recent advances in discrete geometry, we improve the classical Gilbert-Varshamov bound in the asymptotic regime for multiple bursts, and determine the minimum redundancy required for reconstruction codes with polynomially many reads. We also propose an efficient list-reconstruction algorithm that achieves the above guarantees, based on a majority-with-threshold decoding scheme.
\end{abstract}

\section{Introduction}
\subsection{Overview and related work}

The reconstruction problem introduced by Levenshtein in~\cite{levenshtein2001efficient} and~\cite{levenshtein2001efficientjcta}, involves transmitting a codeword through multiple identical noisy channels, resulting in a set of distinct outputs. The goal of the reconstruction process is to recover the original codeword from these outputs. Formally, we are given a channel through which we transmit a codeword $\bx$. This may result in a corrupted version of $\bx$, the set of all of which is called the error ball of $\bx$. In the reconstruction scheme, a codeword $\bx$ is transmitted through $N$ noisy channels with the same error model, $N$ distinct outputs (called reads) within the error ball of $\bx$ are received. This general reconstruction problem extends the classic error-correction problem, which corresponds to the special case where $N = 1$. This problem is particularly relevant in fields such as molecular biology, chemistry and advanced memory storage technologies including associative memories~\cite{yaakobi2012uncertainty}, racetrack memories~\cite{chee2018reconstruction}, and DNA storage~\cite{abu2021levenshtein}.

Levenshtein~\cite{levenshtein2001efficient} was the first to study the substitution, insertion, and deletion channels. The work in~\cite{abu2021levenshtein} extended this by investigating the case where errors are combinations of a single substitution and a single insertion. The channel involving single edits (i.e., a single substitution, insertion, or deletion), and its variants, have been explored in~\cite{cai2021coding}. Other types of channels, such as those involving tandem duplications~\cite{yehezkeally2019reconstruction}, limited-magnitude errors~\cite{wei2022sequence}, and single bursts of edits and their variants~\cite{sun2023sequence}, have also been studied. In particular, the trade-off between codebook redundancy and the number of reads required have been discussed in~\cite{cai2021coding}. Specifically, the authors in~\cite{cai2021coding} focused on the design of optimal reconstruction codes with minimal redundancy (equivalently, maximal number of codewords), given the number of reads. The work of~\cite{cai2021coding} also presented reconstruction codes with asymptotically optimal redundancy in the single-edit channel. For channels involving two insertions or deletions, the best reconstruction codes with a given number of reads can be found in~\cite{sabary2024survey,Sun2023reconstructions,Ye2023reconstructions}.

Previous research has primarily focused on non-burst errors or a single burst error. Here, a single burst-substitution error of length $b$ is an occurrence of substitution errors that are confined to $b$ consecutive positions in the codeword. In this paper, we are interested in the channel involving multiple burst errors. This type of error is relevant in various applications, including high-density magnetic recording channels~\cite{kuznetsov1993coding,levenshtein1993perfect} and DNA-based storage systems~\cite{jain2020coding}. For the single burst-substitution channel, Abdel-Ghaffar \textit{et al.}~\cite{abdel1986existence,abdel1988existence} and Etzion~\cite{etzion2001constructions} studied perfect correcting codes under specific parameters. More recently, Wei and Schwartz \cite{wei2022perfect} investigated single bursts with limited magnitude, and Nguyen \textit{et al.} \cite{nguyen2022optimal} studied single inversions, both are specific types of burst substitutions.
For the multiple burst-substitutions channel, Overveld~\cite{van1987multiple}, Blaum \textit{et al.}~\cite{BlaFarTil88}, and Chen \textit{et al.}~\cite{zhi1992constructions}, have used product codes and disjoint difference sets to construct codes capable of correcting multiple bursts under certain parameters. The redundancies of known constructions of error-correcting codes are relatively large.

The authors in~\cite{sabary2024survey} demonstrated that the optimal redundancies of reconstruction codes in the channel with two insertions or deletions can be classified into two classes based on the number of reads $N$. The redundancy decreases to zero as $N$ increases linearly with the codeword length $n$. Generally, as the number of errors increases, the optimal redundancies are divided into more classes. Therefore, we investigate the asymptotic behavior of the redundancy of reconstruction codes given a specific number of reads. Specifically, we shall explore the optimal redundancy of reconstruction codes within $\Theta(n^s)$ reads as the codeword length $n$ approaches infinity. Our focus is on how the order of reads $\Theta(n^s)$ impacts the best achievable redundancy.

Another issue we study is whether we can gain by moving from unique decoding to list decoding. The classical reconstruction problem focuses on recovering a unique codeword from multiple reads. The list-reconstruction problem, on the other hand, relaxes this requirement by allowing recovery of a bounded list of possible codewords. Specifically, a codeword $\bx$ is transmitted through $N$ noisy channels, producing $N$ distinct reads within the error ball of $\bx$. The list-reconstruction problem identifies all codewords $\by$ such that all $N$ reads are contained within the error ball of $\by$. 

In this paper, we combine all of these aspects together, noting the main two trade-offs possible: First, while the channel may introduce $t$ burst errors of length $b$, we may opt to trade-off code power for number of reads, and use a code designed to correct fewer burst errors, but ask for multiple reads. Second, we may trade-off unique decoding for reads, namely, reduce the number of reads required by the decoding algorithm, trading this reduction for an increase in the list size produced by the decoder.

Noted previous works on list reconstruction includes Junnila \emph{et al.}~\cite{junnila2021levenshtein}, in which bounds on the list size with reconstruction over the substitution channel were studied. In a follow up work~\cite{junnila2023levenshtein}, the same  authors investigated the maximum list size of error-correcting codes in the substitution channel and determined the read number for small list sizes. Finally, Yehezkeally and Schwartz~\cite{yehezkeally2021reconstruction} studied list-reconstruction over the uniform-tandem-duplication channel. We observe that no study thus far has tackled list reconstruction over a bursty channel.

\subsection{Our contributions}

Throughout the paper we focus on the channel $\ch(t,b)$, which may modify any transmitted vector by substituting at most $t$ bursts of length at most $b$. We note that this channel includes as special cases the standard multiple-substitution channel $\ch(t,1)$, and the single-burst channel $\ch(1,b)$, which have been studied extensively in the past. Thus, our results regarding list reconstruction apply to them as well.

Our first contribution concerns optimal error-correcting codes over $\ch(t,b)$. Here, for a code $\cC$, over an alphabet of size $q$, with codewords of length $n$, and a total of $M$ codewords, \emph{optimal} refers to the code having the largest $M$ of all codes capable of correcting errors induced by $\ch(t,b)$. The redundancy of the optimal code, $r(\cC)=n-\log_q M$, is therefore the smallest of all codes for the channel. We show in Corollary~\ref{cor:Redu}, that any optimal burst-correcting code for $\ch(t,b)$ satisfies
\[
(1-o(1))t\log_q n \leq r(\cC)\leq (1+o(1))2t\log_q n.
\]
The asymptotic regime studied throughout the paper assumes the alphabet size $q$, the number of burst errors $t$, and their maximal length $b$, are all fixed, while the code length $n\to\infty$. Through an improvement of the Gilbert-Varshamov bound, we give a stronger upper bound in Theorem~\ref{thm:impoved upperred},
\[
r(\cC) \leq 2t\log_q n - \log_q \log n + O(1).
\]
Interestingly, the bounds hint at the fact that the dominant term of the redundancy of the optimal error-correcting codes does not depend on the burst length $b$, but rather on the number of bursts. This is in particular interesting, since the common practical approach for correcting multiple bursts is by using interleaving of single-burst-correcting codes, or by concatenated codes and extension fields. Both of these approaches result in redundancy $(1+o(1))bt\log n$, which is worse than the guarantee of Corollary~\ref{cor:Redu}.

We then move to study the first trade-off: reducing the redundancy of the code by increasing the number of required reads (and thereby, moving from an error-correcting code to a reconstruction code). To that end, we say the number of reads $N(n)$ (which depends on the length of the code) has $b$-order of $s$ if $N(n)=\Omega(n^s)$, $N(n)=o(n^{s+1})$, and $N(n)$ is strictly larger than the error-ball of radius $s$ (see Definition~\ref{def:ord}). Then, in Theorem~\ref{thm:MARR for burst} we show that if $\cC'$ is an optimal reconstruction code for $\ch(t,b)$ with $N(n)$ of $b$-order $s$, and $\cC$ is an optimal error-correcting code for $\ch(t-s-1,b)$, then
\[
r(\cC') = (1+o(1)) r(\cC).
\]
From a theoretical point of view, this theorem shows how we can trade error-correcting power (hence, reducing redundancy) for number of reads. Roughly speaking, any reduction of one error-correction capability increases the polynomial degree of the number of required reads by one. From a practical point of view, our knowledge of multiple-burst-correcting codes is severely limited, with only a handful of constructions known for restrictive sets of parameters. In contrast, there is quite a wealth of knowledge concerning single-burst-correcting codes. With our result we can now use these known codes, designed to correct only a single burst, to correct multiple bursts when given polynomially-many reads.

The final trade-off we consider continues the previous one: we have the channel $\ch(t,b)$, and an error-correcting code, $\cC$, capable of correcting only $t-s-1$ burst errors of length at most $b$. While using $\cC$ over this channel, $\Theta(n^s)$ reads suffice for reconstruction, we further lower the number of reads to $\Theta(n^{s-h})$. In Theorem~\ref{Theorem:Main 1:List} we provide the main result, showing that in this case we get a list-decoder with list size $O(n^h)$. Roughly summarizing the complete trade-off between all parameters involved, we show that over a channel that introduces at most $t$ bursts, we can use a length-$n$ code capable of correcting $\epsilon$ errors, with $\Theta(n^\rho)$ reads, and decoding with a list of size $O(n^\lambda)$, where
\[t-1=\epsilon+\rho+\lambda.\]
This allows us to conveniently distribute our ``budget'' of $t-1$ between error-correcting capability, number of reads, and decoder list size. Once again, we note that the dominant asymptotic term here does not depend on the (constant) burst length $b$. In addition, we describe a polynomial-time list-reconstruction algorithm which only assumes a list-decoder for the code (i.e., using a single read), over $\ch(t-s+h,b)$.

\subsection{Techniques}

We use a variety of combinatorial techniques as well as coding-theoretic ones. We highlight the important ones.

When studying the burst metric, the most fundamental result we develop is a sufficiently sharp asymptotic estimate of the size of a ball with constant radius. We further extend this to give good estimates on the size of intersection of two balls. This allows us to prove the results on the redundancy of optimal burst-correcting codes and optimal burst-reconstruction codes. Together with a recent geometric result~\cite{campos2023new}, we can show an improvement to the classical Gilbert-Varshamov-like bound for the burst metric.

The main result of Theorem~\ref{Theorem:Main 1:List}, providing the complete trade-off between the error-correction capability, number of reads, and list size, is proved using separate upper and lower bounds. To prove the bounds, we employ various combinatorial arguments that may be of independent interest. To establish the upper bound, we prove a novel variant of the celebrated Kleitman's Theorem~\cite{1966Kleitman} under the burst distance using the shift argument. It shows that the largest set of vectors with bounded burst-distance diameter must be a ball in the burst metric. Conversely, we leverage Kahn's result~\cite{kahn1996linear} on finding large matchings to derive a Johnson-type bound, which bounds the number of codewords found within a ball of a given radius in the burst metric.

The list-reconstruction algorithm we propose uses a majority-with-threshold technique. All reads are considered to create a point-wise majority with a threshold. When the threshold is not achieved, a ``joker'' symbol is used. The threshold is carefully chosen to make sure only a constant number of jokers is generated. By exhaustively replacing jokers with all possible assignments from the alphabet and employing a list decoder over a weaker channel, a list is created, which we prove is the correct list-reconstruction.

\subsection{Paper organization}

The paper is organized as follows: Section~\ref{sec:prelim} introduces the relevant notations, definitions and tools. In Section~\ref{sec:ecc}, we study the size of error balls and the redundancy of burst-correcting codes. Section~\ref{sec:reconstruction code} presents the optimal redundancy of reconstruction codes with $\Theta(n^s)$ reads. Section~\ref{sec:list} obtains the list size of reconstructing codewords based on burst-correcting codes with fewer reads ($\Theta(n^{s-h})$). Section~\ref{sec:alg} provides an algorithm for list reconstructing codewords based on burst-correcting codes. Finally, Section~\ref{sec:con} concludes the paper with a summary and open questions.

\section{Preliminaries}\label{sec:prelim}
\subsection{Notations}
A \emph{hypergraph} \( \cH \) is a pair \( (V(\cH), E(\cH)) \), where \( V(\cH) \) (called the \emph{vertex set}) is a non-empty set, and \( E(\cH) \) (called the \emph{edge set}) is a collection of non-empty subsets of \( V(\cH) \). The hypergraph \( \cH \) is said to be \emph{\( r \)-uniform} if \( \abs{e} = r \) for every \( e \in E(\cH) \). In particular, a \( 2 \)-uniform hypergraph is also known as a \emph{graph}. Two vertices \( x, y \in V(\cH) \) are \emph{adjacent} in \( \cH \) if there exists an edge \( e \in E(\cH) \) such that \( x, y \in e \). For any two distinct vertices \( u, v \in V(\cH) \), the \emph{degree} of \( u \) is defined as  
\[
\gd(u) = \abs*{\set*{ e \in E(\cH) : u \in e }}.
\]  
Similarly, the \emph{codegree} of \( u \) and \( v \) is given by  
\[
\cod(u,v) = \abs*{\set*{ e \in E(\cH) : u, v \in e }}.
\]
The \emph{joint degree} of $u$ and $v$ counts the number of vertices that are simultaneously neighbors of $u$ and $v$,
\[
\jd(u,v)=\abs*{\set*{w\in V(\cH): \exists e,e'\in E(\cH) \text{ s.t. } \set{u,w}\subseteq e, \set{v,w}\subseteq e'}}.
\]
The \emph{maximum degree}, \emph{maximum codegree}, and \emph{maximum joint degree} of \( \cH \) are denoted by  
\begin{align*}
\gdmax(\cH) &= \max_{u \in V(\cH)} \gd(u), &
\codmax(\cH) &= \max_{u \neq v \in V(\cH)} \cod(u,v), &
\jdmax(\cH) &= \max_{u \neq v \in V(\cH)} \jd(u,v),
\end{align*}
respectively. An \emph{independent set} of \( \cH \) is a subset \( S \subseteq V(\cH) \) such that no edge of \( \cH \) is entirely contained in \( S \). The \emph{independence number} \( \alpha(\cH) \) is defined as the size of the largest independent set in \( \cH \). A \emph{matching} \( M \subseteq E(\cH) \) is a collection of edges such that no two edges share a common vertex. The \emph{matching number} \( \nu(\cH) \) is the size of the largest matching in \( \cH \).

For any \( i, j \in \Z \) with \( i \leq j \), define the interval  
\[
[i, j] = \set*{i, i+1, \dots, j}.
\]  
When considering cyclic structures, we extend this definition to the \emph{cyclic interval} \( [i, j]_C \), which coincides with \( [i, j] \) if \( i \leq j \) and is defined as \( [i, n] \cup [1, j] \) otherwise. For simplicity, we typically omit the subscript and use \( [i, j] \) for cyclic intervals unless explicitly stated otherwise.

Given \( n \in \N \), we denote \( [n] = [1, n] \), and define \( \overline{[i, j]} \) as the complement of \( [i, j] \) within \( [n] \), where $n$ is implicitly understood from the context. For a cyclic interval \( I = [i, j] \), its \emph{\( (k, b) \)-extension} is given by  
\begin{equation}
\label{eq:ext}
[i - kb, j + kb].
\end{equation}
The \emph{length} of a cyclic interval \( [i, j] \), denoted by \( \abs{[i, j]} \), is defined as  
\[
j - i + 1 \quad \text{if } i \leq j, \quad \text{and} \quad n + j - i + 1 \quad \text{otherwise}.
\]  
The \emph{gap} between two cyclic intervals \( [i_1, j_1] \) and \( [i_2, j_2] \) is given by  
\[
\min\set*{ i_2 - j_1 - 1, i_1 + n - j_2 - 1 }.
\]
Two cyclic intervals intersect if and only if their gap is negative.

Throughout the paper we shall work over a finite alphabet $\Sigma$. We assume it forms an Abelian group of size $q$, which we may emphasize by writing $\Sigma_q$. An \( n \)-tuple \( \bx = (x_1, x_2, \dots, x_n) \), where each \( x_i \in \Sigma \), is referred to as a \emph{vector} or \emph{codeword} of length \( n \) over \( \Sigma \). The set of all sequences of length \( n \) over \( \Sigma \) is denoted by \( \Sigma^n \). We use \( \bx^b \) to denote the concatenation of \( b \) copies of \( \bx \). The sequence consisting entirely of zeros is denoted by \( \mathbf{0} \) (where the length is implicitly understood from the context).

For indices \( i \leq j \), the consecutive symbols \( x_i, \dots, x_j \) are denoted by \( \bx[i, j] \), and the \( i \)-th symbol of \( \bx \) is written as \( \bx[i] \). Similarly, $\bx\overline{[i,j]}$ denotes the symbols $x_1,\dots,x_{i-1},x_{j+1},\dots,x_n$. The \emph{length} of \( \bx\in\Sigma^n \) is given by \( \abs{\bx}=n \), and its \emph{support} is defined as  
\[
\supp(\bx) = \set*{ i \in [n] : \bx[i] \neq 0 }.
\]  
For any two sequences \( \bx, \by \in \Sigma^n \), the support trivially satisfies  
\[
\supp(\bx + \by), \supp(\bx - \by) \subseteq \supp(\bx) \cup \supp(\by).
\]  
Finally, given a positive integer \( t \) and a set \( S \), the notation \(\binom{S}{t} \) denotes the collection of all subsets of \( S \) containing exactly \( t \) distinct elements.

\subsection{Tools from graph theory}
Finding the matching number $\nu(\cH)$ of a hypergraph $\cH$ is indeed a fundamental and challenging problem in combinatorics and optimization, which can help understanding the structure of hypergraphs. It is related to other problems like vertex covers, independent sets, and edge colorings. Since finding the exact matching number is NP-hard for general hypergraphs, we often focus on theoretical bounds. For example, Kahn's Theorem~\cite[Theorem 1.2]{kahn1996linear} provides an approach to finding large matchings in certain hypergraphs, which was generalized to the following version:

 \begin{lemma}[{{\cite[Lemma 2.4]{liu2025approximate}}}]
 \label{lem:kahn}
For fixed positive integer $r\ge 2$, let $\cH$ be an $r$-uniform hypergraph with $\frac{\codmax(\cH)}{\gdmax(\cH)} = o(1)$, where $o(1) \to 0$ as $\abs{V(\cH)} \to \infty$. Then we have 
\[
\nu(\cH) \geq (1-o(1)) \frac{\abs*{E(\cH)}}{\gdmax(\cH)}.
\]
 \end{lemma}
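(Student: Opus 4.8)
The plan is to derive the bound from the edge-colouring (chromatic-index) form of Kahn's theorem~\cite{kahn1996linear}, via a regularisation step and a pigeonhole argument. The guiding observation is that a proper edge colouring of $\cH$ is exactly a partition of $E(\cH)$ into matchings; hence if $\cH$ admits such a colouring using only $(1+o(1))\gdmax(\cH)$ colour classes, then by averaging at least one class contains at least $|E(\cH)|/((1+o(1))\gdmax(\cH)) = (1-o(1))|E(\cH)|/\gdmax(\cH)$ edges, and this class is the matching we want. Note that the ``fractional'' analogue of the claim is immediate: assigning weight $1/\gdmax(\cH)$ to every edge is a fractional matching of value $|E(\cH)|/\gdmax(\cH)$, since each vertex $v$ then carries load $\gd(v)/\gdmax(\cH)\le 1$. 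The content of the lemma is precisely that, under the hypothesis $\codmax(\cH)/\gdmax(\cH)=o(1)$, this fractional bound is asymptotically attained by an integral matching.

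To carry this out I would proceed in three steps, writing $D=\gdmax(\cH)$. \textbf{Step 1 (regularisation).} Embed $\cH$ into an $r$-uniform hypergraph $\widehat{\cH}$ on a vertex set $\widehat{V}\supseteq V(\cH)$ that is $(1-o(1))D$-regular and still satisfies $\codmax(\widehat{\cH})=o(D)$; this is done by padding each low-degree vertex up to degree $\approx D$ using fresh vertices together with a sufficiently spread-out auxiliary hypergraph, chosen so that no new pair of vertices acquires codegree larger than $o(D)$. \textbf{Step 2 (Kahn's theorem).} Apply the chromatic-index form of Kahn's theorem (i.e.\ the Pippenger--Spencer-type statement, using that $r$ is fixed) to $\widehat{\cH}$: since $\widehat{\cH}$ is near-$D$-regular with vanishing relative codegree, $E(\widehat{\cH})$ decomposes into $k=(1+o(1))D$ matchings $M_1,\dots,M_k$. \textbf{Step 3 (pigeonhole).} Restrict this decomposition to the original edges, $E(\cH)=\bigcup_{i=1}^{k}(M_i\cap E(\cH))$, so some $M_i\cap E(\cH)$ has at least $|E(\cH)|/k=(1-o(1))|E(\cH)|/D$ elements, and being a subset of $M_i$ it is a matching in $\cH$. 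One then checks that the $o(1)$ terms compose: the quantitative Pippenger--Spencer bound converts ``$\codmax/D\le\delta$'' into ``$k\le(1+\varepsilon)D$'' with $\varepsilon\to 0$ as $\delta\to 0$, which is exactly what the conclusion requires; if the cited form of Kahn's theorem is already phrased under a maximum-degree hypothesis rather than a regularity one, Step 1 is unnecessary and Steps 2--3 apply directly to $\cH$.

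I expect Step 1 to be the main obstacle: producing the regularised $\widehat{\cH}$ while keeping $\codmax(\widehat{\cH})=o(D)$ is delicate when $D$ is large compared to $|V(\cH)|$, since then padding within the original vertex set alone would force codegree $\Theta(D)$, and one must introduce a carefully controlled number of auxiliary vertices and a padding hypergraph whose pairwise intersections are small. An alternative route that avoids regularisation altogether is a semi-random ``nibble'': iteratively select each surviving edge independently with a small probability, discard any selected edge that collides with another selected edge, and repeat; the small relative codegree guarantees that in each round a $(1-o(1))$-fraction of the expected number of selected edges survives, so after sufficiently many rounds one accumulates a matching of the claimed size. This is self-contained modulo standard concentration inequalities, but is considerably longer than the colouring-plus-pigeonhole argument above.
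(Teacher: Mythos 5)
The paper does not prove this lemma at all: it is imported verbatim from \cite[Lemma 2.4]{liu2025approximate} as a black box, so there is no internal proof to compare against. Your derivation is nevertheless correct and is essentially the standard (and intended) one: Kahn's Theorem 1.2 in \cite{kahn1996linear} is already phrased under a maximum-degree hypothesis --- for every $\epsilon>0$ there is $\delta>0$ such that any $r$-bounded hypergraph with $\gdmax\le D$ and $\codmax\le\delta D$ has (list) chromatic index at most $(1+\epsilon)D$ --- so your Step~1 regularisation, which you correctly flag as the only delicate point, is simply unnecessary; Steps~2--3 (decompose $E(\cH)$ into $(1+o(1))\gdmax(\cH)$ matchings and pigeonhole) apply directly to $\cH$, and your quantifier bookkeeping ($\codmax/\gdmax=o(1)$ eventually beats every $\delta(\epsilon)$, so $\epsilon$ can be sent to $0$) is exactly what is needed to obtain the $(1-o(1))$ factor.
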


 We will take advantage of a recent breakthrough in discrete geometry. Specifically, Campos, Jenssen, Michelen, and Sahasrabudhe~\cite{campos2023new} have advanced a longstanding lower bound for sphere packing in high-dimensional spaces. A central idea from their work, encapsulated in the following lemma, enables us to derive an improved Gilbert–Varshamov bound in our context.

\begin{theorem}[{{\cite[Theorem 1.3]{campos2023new}}}]
\label{codegree}
Let $G$ be a graph on $n$ vertices. If the maximum degree and maximum joint degree of $G$ satisfy that $\gdmax(G) \leq \Delta$ and $\jdmax(G) \leq \frac{\Delta}{(2\log \Delta)^7}$, then
\[
    \alpha(G) \geq (1-o(1))\frac{n\log \Delta}{\Delta},
\]
where $o(1)\to 0$ as $\Delta \to \infty$.
\end{theorem}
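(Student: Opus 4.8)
This is the graph-theoretic input of Campos, Jenssen, Michelen, and Sahasrabudhe~\cite{campos2023new}; here is the route I would take. The plan is to run an occupancy-fraction argument on the \emph{hard-core model}. For a fugacity $\lambda>0$, let $\mu_\lambda$ be the distribution on independent sets of $G$ with $\mu_\lambda(I)\propto\lambda^{\abs{I}}$, and write $\alpha_G(\lambda)=\frac1n\E_{I\sim\mu_\lambda}\abs{I}$ for its occupancy fraction. Since $\mu_\lambda$ is supported on independent sets, $\alpha(G)\geq n\,\alpha_G(\lambda)$ for every $\lambda$, so it suffices to exhibit $\lambda=\lambda(\Delta)$ (going to infinity with $\Delta$) for which $\alpha_G(\lambda)\geq(1-o(1))\frac{\log\Delta}{\Delta}$.

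The heart of the argument is a local computation at a uniformly random vertex $v$, following the occupancy method for triangle-free graphs due to Davies, Jenssen, Perkins, and Roberts. First I would reveal the restriction of $I\sim\mu_\lambda$ to $V(G)\setminus N[v]$ and let $U_v\subseteq N(v)$ be the set of ``externally uncovered'' neighbours of $v$, i.e.\ those with no revealed neighbour in $I$. Conditioned on the reveal, $I\cap N[v]$ is hard-core on $G[\set{v}\cup U_v]$, so $\Pr[v\in I\mid\text{reveal}]=\lambda/(\lambda+Z_{G[U_v]}(\lambda))$, where $Z_{G[U_v]}$ is the independence polynomial of $G[U_v]$. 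For triangle-free $G$ this equals $\lambda/(\lambda+(1+\lambda)^{\abs{U_v}})$ exactly; in the present setting the joint-degree bound limits the number of edges inside $G[U_v]$ and forces $Z_{G[U_v]}(\lambda)=(1+o(1))(1+\lambda)^{\abs{U_v}}$. I would then close the loop: averaging over $v$, the law of $\abs{U_v}$ is itself pinned down by $\alpha_G(\lambda)$, because a neighbour $u$ is uncovered exactly when its external neighbourhood avoids $I$. This converts the resulting inequality into a self-referential optimization over distributions of $\abs{U_v}$ on $\set{0,1,\dots,\Delta}$, and solving it (equivalently, comparing with the extremal ODE) together with a suitable choice of $\lambda$ gives $\alpha_G(\lambda)\geq(1-o(1))\frac{\log\Delta}{\Delta}$, hence the claim.

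The hard part will be the quantitative error control: showing that passing from the exact triangle-free computation to its joint-degree-bounded analogue costs only a multiplicative $1+o(1)$ in the occupancy fraction. This is exactly where the hypothesis $\jdmax(G)\leq\Delta/(2\log\Delta)^7$ is needed: since any two neighbours $u,u'$ of $v$ share at most $\Delta/(2\log\Delta)^7$ external neighbours, the events ``$u\in U_v$'' are nearly pairwise independent, so a second-moment estimate shows that $\abs{U_v}$ concentrates tightly enough for the convexity loss in the optimization to be negligible, while the stray edges inside $G[U_v]$ perturb $Z_{G[U_v]}(\lambda)$ only at lower order. Tracking exactly how many powers of $\log\Delta$ of slack each of these estimates consumes --- which is what fixes the exponent $7$ --- is the bulk of the work; the overall architecture is otherwise the standard occupancy-method template.
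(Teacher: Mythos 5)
You should first note that the paper does not prove this statement at all: it is quoted verbatim from \cite{campos2023new} and used purely as a black box (in the proof of Theorem~\ref{thm:impoved upperred}). So there is no in-paper proof to compare with, and the question is only whether your sketch is a viable proof of the cited result.

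It is not, as written: the step where the theorem would actually have to be proved contains a gap that cannot be repaired within the strategy you describe. Your plan reduces the codegree hypothesis to the statement that $G[U_v]$ ``has few edges'' and then claims $Z_{G[U_v]}(\lambda)=(1+o(1))(1+\lambda)^{\abs{U_v}}$. Neither half survives scrutiny. First, the hypothesis $\jdmax(G)\le \Delta/(2\log\Delta)^{7}$ only bounds the number of edges inside a neighbourhood by roughly $\Delta^2/(2(2\log\Delta)^{7})$, and it still permits $U_v$ to contain a clique of size up to $\Delta/(2\log\Delta)^{7}+1$ (a clique $K\subseteq N(v)$ of size $k$ only forces $k-1\le\jdmax$); for such a $U_v$ the ratio $(1+\lambda)^{\abs{U_v}}/Z_{G[U_v]}(\lambda)$ is exponential in $\Delta/(\log\Delta)^{7}$, not $1+o(1)$. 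Second, and more fundamentally, no argument that uses only the edge count of $G[U_v]$ can reach the bound $(1-o(1))n\log\Delta/\Delta$: local sparsity at the level ``each neighbourhood spans at most $\Delta^2/(\log\Delta)^{7}$ edges'' is known to give only $\alpha(G)=\Omega(n\log\log\Delta/\Delta)$, and this is tight. (Take a $\Delta_0$-regular triangle-free graph $H$ with $\alpha(H)=\Theta(\abs{V(H)}\log\Delta_0/\Delta_0)$ and replace each vertex by a clique of size $s=\Delta/\Delta_0$ with complete joins along edges of $H$: every neighbourhood spans $O(\Delta^2/\Delta_0)$ edges, yet $\alpha=\Theta(n\log\Delta_0/\Delta)$, which for $\Delta_0=(\log\Delta)^{7}$ is $\Theta(n\log\log\Delta/\Delta)=o(n\log\Delta/\Delta)$.) So the full pairwise codegree condition must be exploited in a way that distinguishes it from mere local sparsity; the proof in \cite{campos2023new} does exactly this via a different, more global probabilistic construction of the independent set rather than a single local occupancy computation. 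Your second-moment remark about concentration of $\abs{U_v}$ is fine but peripheral; the missing idea is how the pairwise bound, rather than the edge count it implies, enters the extremal analysis.
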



\subsection{Burst metric and channel}

Building on previous studies on bursts of substitutions (e.g.,~\cite{abdel1986existence,abdel1988existence,etzion2001constructions}), bursts can be categorized as either cyclic or non-cyclic. Since the results we can obtain are similar in both scenarios, we choose to focus on cyclic burst substitutions.

A $b^{\leq}$-burst error is a vector $\be\in\Sigma_q^n$ whose support, $\supp(\be)$, is contained in a cyclic interval of length at most $b$. Let $I(\be)$ denote the smallest cyclic interval containing $\supp(\be)$. Two $b^{\leq}$-burst errors, $\be,\be'\in\Sigma_q^n$ are said to be disjoint if $I(\be)\cap I(\be')=\emptyset$.

For any non-negative integer $t$, define $B_{t,\leq b}^\star(q, n)$ as the set of sequences $\bw \in \Sigma_q^n$ that can be expressed as a sum of $t$ $b^{\leq}$-bursts, i.e., $\bw = \be_1 + \cdots + \be_t$, where each $\be_i$ is a $b^{\leq}$-burst. Let $B_{t,\leq b}(q, n)$ denote the subset of $B_{t,\leq b}^\star(q, n)$ where we further require that in some decomposition of $\bw=\be_1+\dots+\be_t$, we have that $\be_1,\dots,\be_t$ are pairwise disjoint $b^{\leq}$-burst errors. Note that $B_{0,\leq b}^\star(n,q) = B_{0,\leq b}(n,q) = \set{\mathbf{0}}$. If $q$ or $n$ are clear from context, we may omit them from our notations, e.g., writing $B_{t,\leq b}$ instead of $B_{t,\leq b}(n,q)$.

 \begin{example}
   Let $\bw = 01011 \in \Sigma_2^5$ be a sequence with $\supp(\bw) = \set{2, 4, 5}$. The sequence $\bw$ can be expressed in various ways, for example,
\begin{align*}
\bw &= 01100 + 00110 + 00001\in B_{3, \leq 2}^\star(5,2), \\ 
\bw &= 01000 + 00010 + 00001 \in B_{3, \leq 2}(5,2), \\ 
\bw &= 01000 + 00011 \in B_{2, \leq 2}(5,2).
\end{align*}
In the last expression, $\bw$ is represented by two disjoint $2^{\leq}$-bursts: $\be_1 = 01000$ and $\be_2 = 00011$, where the smallest cyclic intervals containing their supports are $I(\be_1) = [2, 2]$ and $I(\be_2) = [4, 5]$. The gap between $I(\be_1)$ and $I(\be_2)$ is $1$.
\end{example}

\begin{lemma}\label{lem:disjoint}
    For any integers $n,t,b,q\geq 2$, with $n \geq tb$, we have $B_{t,\leq b}^\star \subseteq \bigcup_{i=0}^t B_{i,\leq b}$. Consequently, $    
    \bigcup_{i=0}^t B_{i,\leq b}^\star  = \bigcup_{i=0}^t B_{i,\leq b}.$
\end{lemma}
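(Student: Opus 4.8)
The plan is to show that any $\bw \in B_{t,\leq b}^\star$ already admits a decomposition into at most $t$ \emph{pairwise disjoint} $b^{\leq}$-bursts, i.e., $\bw \in \bigcup_{i=0}^t B_{i,\leq b}$. Start with an arbitrary decomposition $\bw = \be_1 + \cdots + \be_t$ into $b^{\leq}$-bursts, and let $I_1 = I(\be_1), \ldots, I_t = I(\be_t)$ be the associated smallest cyclic intervals. The key observation is the following \emph{merging step}: if two of these cyclic intervals, say $I_j$ and $I_k$, intersect (equivalently, have negative gap), then their union is contained in a single cyclic interval of length at most $|I_j| + |I_k| - 1 \leq 2b - 1$. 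However, to stay within a $b^{\leq}$-burst we cannot simply merge two bursts of length $b$ into one of length $2b-1$; instead I would proceed by a greedy/grouping argument. Partition the index set $\{1,\ldots,t\}$ into blocks according to the transitive closure of the ``intersecting'' relation on $I_1,\ldots,I_t$. Within each block, the union of the member intervals is a single cyclic interval (connectedness of overlapping intervals on a cycle, as long as the total does not wrap all the way around — this is where $n \geq tb$ is used), but its length may exceed $b$. So rather than merging, I would re-slice: the sum of the $\be_j$ over a block is a vector supported on that union interval $J$ of length $\ell \leq tb < n$; cover $J$ by $\lceil \ell/b \rceil$ consecutive disjoint subintervals of length at most $b$, and split the block-sum accordingly into that many disjoint $b^{\leq}$-bursts. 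Since a block of $m$ mutually-overlapping length-$\le b$ intervals has union length at most $mb - (m-1) < mb$, it is recovered by at most $m$ disjoint bursts, so the total count over all blocks is still at most $t$.

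Concretely, the steps are: (1) fix a decomposition $\bw = \sum_{j=1}^t \be_j$ with intervals $I_j = I(\be_j)$; (2) form the ``overlap graph'' on $\{1,\ldots,t\}$ with an edge $jk$ whenever $I_j \cap I_k \neq \emptyset$, and let $C_1,\ldots,C_p$ be its connected components; (3) for each component $C_s$, show $\bigcup_{j\in C_s} I_j$ is a cyclic interval $J_s$ with $|J_s| \leq b\,|C_s| - (|C_s|-1) \leq b|C_s|$, using $n \geq tb$ to rule out full wraparound; (4) observe that distinct components give disjoint intervals $J_s$ (by maximality of components and the fact that overlapping intervals would force a merge of components); (5) write $\bu_s = \sum_{j \in C_s}\be_j$, which is supported inside $J_s$, partition $J_s$ into $k_s := \lceil |J_s|/b \rceil \leq |C_s|$ consecutive disjoint length-$\le b$ subintervals, and split $\bu_s$ into $k_s$ disjoint $b^{\leq}$-bursts supported on these subintervals; (6) concatenate over $s$: we obtain $\bw$ as a sum of $\sum_{s=1}^p k_s \leq \sum_{s=1}^p |C_s| = t$ pairwise disjoint $b^{\leq}$-bursts, where disjointness across different components follows from step (4). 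Pad with zero bursts if the count is strictly less than $t$, giving $\bw \in B_{i,\leq b}$ for $i = \sum_s k_s \leq t$. The final sentence of the lemma, $\bigcup_{i=0}^t B_{i,\leq b}^\star = \bigcup_{i=0}^t B_{i,\leq b}$, then follows: the inclusion $\supseteq$ is immediate since $B_{i,\leq b} \subseteq B_{i,\leq b}^\star$, and $\subseteq$ follows because each $B_{i,\leq b}^\star \subseteq \bigcup_{j=0}^i B_{j,\leq b} \subseteq \bigcup_{j=0}^t B_{j,\leq b}$ for $i \le t$.

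The main obstacle is the cyclic bookkeeping in steps (3)–(4): on a cycle of length $n$, a family of arcs each of length at most $b$ can have pairwise nonempty intersections yet union the whole cycle (e.g., three arcs each covering slightly more than a third), which would break the ``union is an interval of controlled length'' claim. The hypothesis $n \geq tb$ is exactly what prevents this pathology — the total length of all $t$ arcs is at most $tb \leq n$, so no connected component of overlapping arcs can wrap around, and within a component the union is a genuine (non-wrapping) sub-arc whose length is bounded by the sum of lengths minus the overlaps. I would state and prove this as a short standalone claim about cyclic intervals before assembling the rest, since once it is in hand, steps (5)–(6) are routine re-slicing of supports and the linearity $\supp(\ba+\bb)\subseteq\supp(\ba)\cup\supp(\bb)$ recorded in the preliminaries. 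One should also double-check the edge case $t=0$, where $B_{0,\leq b}^\star = \{\mathbf 0\} = B_{0,\leq b}$ trivially, and small $t$ where some $\be_j$ may be zero (then $I(\be_j)$ is empty or a single point, which only helps).
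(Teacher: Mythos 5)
Your argument is correct, but it is structured quite differently from the paper's proof. The paper proceeds by induction on $t$: it assumes $\be_1+\cdots+\be_{t-1}$ has already been rewritten as $s\leq t-1$ disjoint bursts, adds $I(\be_t)$, and runs a case analysis on how many of the old intervals partially overlap the new one (at most two, by disjointness), re-slicing a single interval of length at most $2b-1$ into two length-$\leq b$ pieces when needed. You instead give a one-shot global argument: take connected components of the overlap graph on all $t$ intervals at once, bound each component's union by $mb-(m-1)$ via a spanning-tree ordering, and re-slice each union into at most $m$ consecutive length-$\leq b$ subintervals. The two proofs share the same core mechanism (re-slicing an interval of length at most $mb$ into at most $m$ bursts), but yours avoids the induction and its case analysis at the cost of having to handle the cyclic pathologies explicitly --- which you correctly identify: the union of a connected family of arcs could in principle be the whole cycle, and $n\geq tb$ rules this out because a connected component with at least two arcs has at least one genuine overlap, so its union is strictly smaller than the sum of lengths, hence strictly smaller than $n$. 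One further small point worth making explicit in your step (5): disjointness of bursts is defined via the \emph{smallest} cyclic interval $I(\cdot)$ containing the support, and on a cycle the smallest interval containing a subset of a given arc need not a priori lie inside that arc; it does here because $n\geq tb\geq 2b$ forces the complementary gap to dominate. With that observation added, your plan goes through, and the deduction of the displayed equality from the inclusion (applied at each $i\leq t$, using $n\geq tb\geq ib$) is fine.
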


\begin{proof}[Proof of~\cref{lem:disjoint}]
    We prove the claim by induction on $t$. For the base cases, it is clear that $B_{0,\leq b}^\star = B_{0,\leq b}=\set{\mathbf{0}}$ and $B_{1,\leq b}^\star = B_{1,\leq b}$. Now assume that the result holds for all integers less than $t$. Let $\bw = \be_1 + \cdots + \be_t \in B_{t,\leq b}^\star$. By the induction hypothesis, we know that $\be_1 + \cdots + \be_{t-1} \in \bigcup_{i=0}^{t-1} B_{i,\leq b}$ can be represented by $s$ disjoint $b^{\leq}$-bursts, where $s \leq t - 1$. That is, there exist $b^{\leq}$-bursts $\set{\bu_{i}}_{i=1}^{s}$ such that $\be_1 + \cdots + \be_{t-1} = \bu_1 +\cdots + \bu_s$. Moreover, the sets in $\set{I_i = I(\bu_i)}_{i=1}^s$ are pairwise disjoint and $\abs{I_i} \leq b$ for $i \in [s]$.

    Define $I_t = I(\be_t)$. We now consider two cases based on the relationship between $I_t$ and $\set{I_1, \ldots, I_s}$.

    \begin{enumerate}
        \item If there exists an interval $I_t \subseteq I_j$ for some $j \in [s]$, then $\bu_j + \be_t$ forms a single $b^{\leq}$-burst with $I(\bu_j+\be_t)\subseteq I_j$. In this case, we have
        \[
    \bw = \bu_1 +\cdots + \bu_{j-1} + \bu_{j+1} + \cdots + \bu_s + (\bu_j + \be_t) \in  \bigcup_{i=0}^t B_{i,\leq b}.
    \]
    \item If $I_t \not\subseteq I_j$ for all $j \in [s]$, assume w.l.o.g., that $I_1,\dots,I_{s'}$, intersect $I_t$, $s\leq s'$. Since $I_1,\dots,I_s$ are pairwise disjoint, out of $I_1,\dots,I_{s'}$, we have $m$ that are not fully contained in $I_t$, $m\leq 2$.
    \begin{enumerate}
    \item If $m=0$, then $\bu_1+\dots+\bu_{s'}+\be_t$ is a $b^{\leq}$-burst, and $I(\bu_1+\dots+\bu_{s'}+\be_t),I_{s'+1},\dots,I_s$ are pairwise disjoint, hence $\bw\in\bigcup_{i=0}^t B_{i,\leq b}$.
    \item If $m = 1$, suppose the unique interval that intersects $I_t$, but is not contained in it, is $I_{s'}$. Then, $I_t \cup I_{s'}, I_{s'+1}, \dots, I_s$ are pairwise disjoint. However, $\bu_{s'}+\be_t$ is not necessarily a $b^{\leq}$-burst, but rather a $(2b-1)^{\leq}$-burst. If it is a $b^{\leq}$-burst, we are done. Otherwise, we can easily partition $I_t\cup I_{s'}$ into two disjoint cyclic intervals of size at most $b$, $I'_1$ and $I'_2$, and present $\bu_{s'}+\be_t=\be'_1+\be'_2$ where $\be'_1$ and $\be'_2$ are both $b^{\leq}$-bursts with $I(\be'_1)\subseteq I'_1$, $I(\be'_2)\subseteq I'_2$. It then follows that $\bw\in\bigcup_{i=0}^t B_{i,\leq b}$.
    \item If $m = 2$, the same argument as in b) applies. We omit the repeated details.
    \end{enumerate}
    \end{enumerate}
    This completes the proof.
\end{proof}

For simplicity, we define $B_{\leq t,\leq b} = \bigcup_{i=0}^t B_{i,\leq b}^{\star}= \bigcup_{i=0}^t B_{i,\leq b}$, where the last equality is due to Lemma~\ref{lem:disjoint}. The set $B_{\leq t,\leq b}$ contains all the error patterns resulting from at most $t$ $b^{\leq}$-bursts. Since these are exactly the errors introduced by the channel we study, we have the following definition.

\begin{definition}
\label{def:ch}
 Define $\ch(t, b)$ as the channel allowing up to $t$ bursts, each of length at most $b$. In particular, when $b = 1$, the channel has at most $t$ single-symbol substitutions, in which case we omit $b$ from the notation and write $\ch(t)$ for simplicity.
\end{definition}

The \emph{error ball} of radius $t$ around a sequence $\bx\in\Sigma_q^n$, denoted $\Ball_{t, b}(\bx)$, consists of all sequences that can be obtained from $\bx$ by applying at most $t$ $b^{\leq}$-bursts. Formally, we define
\[\Ball_{t,b}(\bx) = \set*{\bx + \be: \be \in B_{\leq t,\leq b}}.\]
When considering up to $t$ occurrences of $b^{\leq}$-bursts, we may assume by Lemma~\ref{lem:disjoint} that these bursts are pairwise disjoint. Thus, the channel $\ch(t,b)$ maps any incoming sequence $\bx$ to a possibly corrupted output from $\Ball_{t,b}(\bx)$. We can now define a burst-correcting code formally.

\begin{definition}
A \emph{$(t,b)$-burst-correcting code} is a set $\cC \subseteq \Sigma_q^n$ such that
\[
\Ball_{t,b}(\bu) \cap \Ball_{t,b}(\bv) = \emptyset,
\]
for any two distinct $\bu,\bv \in \cC$. The \emph{rate} of the code $\cC$ is defined as
\[R(\cC)= \frac{1}{n}\log_q\abs{\cC},\]
and the \emph{redundancy} of $\cC$ is given by 
\[r(\cC) = (1 - R(\cC))n.\]
\end{definition}

For any $\bx,\by \in \Sigma_q^n$, the \emph{Hamming distance} $d_H(\bx,\by)$ is the minimum number of substitutions required to transform $\bx$ into $\by$, i.e.,
\[
d_H(\bx,\by) = \abs*{\set*{i\in[n] : \bx[i]\neq \by[i]}}.
\]
The Hamming weight of $\bx$ is defined as $\wt_H(\bx)=d_H(\bx,\mathbf{0})$.

We define the \emph{$b^{\leq}$-burst distance}, $d_b(\bx, \by)$, as the minimum number of $b^{\leq}$-bursts required to transform $\bx$ into $\by$. Formally,
\[
d_{b}(\bx,\by) = \min \set*{t: \bx - \by \in B_{\leq t,\leq b}}=\min\set{ t : \by\in\Ball_{t,b}(\bx)}.
\]
It is straightforward to verify that the $b^{\leq}$-burst distance is a valid metric on $\Sigma_q^n$. The \emph{weight} of $\bx$ with respect to $b^{\leq}$-burst distance, denoted $\wt_b(\bx)$, is defined as $\wt_b(\bx) = d_b(\bx, \mathbf{0})$. For a collection of vectors $\cA$, the \emph{diameter} of $\cA$ with respect to the $b^{\leq}$-burst distance is defined as
\[
D(\cA) = \max_{\bx\neq \by \in \cA} d_b(\bx,\by).
\]

Due to $d_b$ being a metric, we make the standard observation that $\cC\subseteq\Sigma_q^n$ is a $(t,b)$-burst-correcting code, if and only if, for any distinct $\bx,\by\in\cC$, we have $d_b(\bx,\by)\geq 2t+1$.

\subsection{The reconstruction problem}

We now introduce the \emph{reconstruction problem} in the context of a general noisy channel $\ch$, where each sequence $\bx$ has an associated error ball, $\Ball(\bx)$. In this model, a sequence $\bx \in \cC \subseteq \Sigma_q^n$ is transmitted multiple times through the channel, with each transmission resulting in a sequence from $\Ball(\bx)$. Suppose we obtain $N$ distinct reads, $\by_1, \ldots, \by_{N} \in \Ball(\bx)$. Using these $N$ reads, we aim to uniquely decode the original sequence $\bx$. A code $\cC$ allowing this is called an $(n,\ch,N)_q$ reconstruction code. We refer to the minimal integer $N$ allowing unique decoding as $N(\cC)$, called the \emph{reconstruction degree}. Obviously, if $\cC$ is a $(n,\ch,N(\cC))_q$ reconstruction code, it is also an $(n,\ch,N)_q$ reconstruction code for all $N\geq N(\cC)$. It is clear that successful unique decoding implies:
\begin{equation}
\label{eq:nc}
N(\cC) = \max_{\bu\neq \bv \in \cC} \abs*{\Ball(\bu) \cap \Ball(\bv)} + 1.
\end{equation}
Trivially, the range of the maximum intersection between error balls of any two distinct codewords in $\cC$ satisfies
\[ 
0 \leq \max_{\bu\neq \bv \in \cC} \abs*{\Ball(\bu) \cap \Ball(\bv)} \leq \max_{\bu\in \cC} \abs*{\Ball(\bu)}.
\]
When the reconstruction degree is $1$, the reconstruction code $\cC$ becomes an error-correcting code.

Throughout the paper, we will often use \emph{the same code} $\cC$ over different channels. The channels will differ by the number of errors they may introduce. In most cases, we will simply write $N(\cC)$ to denote the reconstruction degree of the code when the channel is clear from the context. If we wish to emphasize the number of errors the channel may introduce, $t$, we shall write $N_t(\cC)$.


\begin{definition}
Let $\ch$ be a noisy channel, and $q \geq 2$ be the alphabet size. For a given number of reads, $N\in \N$, a reconstruction code $\cC \subseteq \Sigma_q^n$ is optimal if
\[
    r(\cC) = \min\set{r(\cC'): \cC' \text{ is an $(n,\ch,N)_q$ reconstruction code}}.
\]
The minimum reconstruction redundancy is denoted as $r^*(n,\ch,N)_q$, or $r^*(\ch,N)$ if $n$ and $q$ are understood from context.
\end{definition}

In this paper, we focus on designing reconstruction codes for the channel $\ch(t, b)$ (see Definition~\ref{def:ch}). We take a closer look at the asymptotic behavior of $r^*(n, \ch(t,b), N)_q$ when $q$, $t$, and $b$, are fixed, while $n,N\to\infty$. Not all growth rates of $N$ with respect to $n$ produce interesting results. To study those of interest, we give the following definition:

\begin{definition}
\label{def:ord}
Let $N:\N\to\Z$ be a function. We say that $N(n)$ has $b$-order $s$, denoted by $\Ord_{b}(N(n))=s$, if for all sufficiently large $n$,
\[
\abs*{\Ball_{s,b}(\mathbf{0})} < N(n) = o(\abs*{\Ball_{s+1,b}(\mathbf{0})}).
\]
The constant function $N(n)=1$ will be said to have $b$-order of $-1$, i.e., $\Ord_{b}(1)=-1$.
\end{definition}

As will become clear later, when $\Ord_{b}(N(n))=s$, the minimum reconstruction redundancy, $r^*(n,\ch(t,b),N(n))_q$ will then be shown to scale as $\Theta(\log n)$. Thus, to capture the relevant constant, we define the asymptotic redundancy factor,
\begin{equation}
\label{eq:defrasy}
\ra(\ch(t,b),s)_q = \inf_{N(n):\Ord_{b}(N(n))=s}\liminf_{n\to\infty} \frac{r^*(n,\ch(t,b),N(n))_q}{\log_q n},
\end{equation}
If $q$ is understood from context, we simply write $\ra(\ch(t,b),s)$. We note that $\ra(\ch(t,b),-1)$ denotes the asymptotic redundancy factor when only a single read is allowed, namely, the code is a regular error-correcting code.

\subsection{The list-reconstruction problem}

Assume \( \cC \subseteq \Sigma_q^n \) is a code, the channel is \( \ch(t,b) \), and consider $N$ reads, where $\Ord_{b}(N)=s'$. Denote the reads by  
\[
Y = \set*{\by_1, \ldots, \by_N} \subseteq \Ball_{t,b}(\bx),
\]  
where each read originates from the error ball of some codeword \( \bx \in \cC \). Define \( \bx_1, \ldots, \bx_{p(Y)} \in \cC \) as 
\[
\set*{\bx_1,\dots,\bx_{p(Y)}}=\cC\cap \bigcap_{i=1}^N \Ball_{t,b}(\by_i).
\]
The set \( Y \) is said to be \emph{list-reconstructed} by the list \( \set{\bx_1, \ldots, \bx_{p(Y)}} \), where \( p(Y) \) represents the list size. Clearly,  
\[
\set*{\by_1, \ldots, \by_N} \subseteq \bigcap_{i=1}^{p(Y)} \Ball_{t,b}(\bx_i).
\]  
The list-reconstruction problem seeks to determine the maximum possible list size, \( p(Y) \), over all choices of \( N \) reads. This is denoted as \( L(n, \cC, \ch(t,b), N)_q \), which is formally defined as  
\[
L(n, \cC, \ch(t,b), N)_q = \max_{Y} \set*{ p(Y) : Y \subseteq \Ball_{t,b}(\bx) \text{ for some } \bx \in \cC, \abs*{Y} = N }.
\]  
We put these notations together and say that a code \( \cC \subseteq \Sigma_q^n \) is an \((n, \ch(t,b), N)_q\) list-reconstruction code with a list size of \( L(n, \cC, \ch(t,b), N)_q \).

It is evident that if there exist \( p \) distinct codewords \( \bx_1, \ldots, \bx_p \in \cC \) such that  
\[
\abs*{ \bigcap_{i=1}^p \Ball_{t,b}(\bx_i)} = N,
\]  
then \( L(n, \cC, \ch(t,b), N)_q \geq p \). Moreover, \( L(n, \cC, \ch(t,b), N)_q \) is a non-increasing function with respect to \( N \).

In this paper, we focus on the channel $\ch(t,b)$ and a $(t-s-1,b)$-burst-correcting code $\cC$, where $0 \leq s \leq t-1$. Let $L(n,t,b,s,N)_{q}$ denote the maximum list size among all $(t-s-1,b)$-burst-correcting codes in $\Sigma_q^n$ that are also $(n,\ch(t,b),N)_q$ list-reconstruction codes. Formally, this can be expressed as:  
\[
L(n,t,b,s,N)_{q} = \max_{\cC}\set*{L(n,\cC,\ch(t,b),N)_q : \cC \text{ is a } (t-s-1,b)\text{-burst-correcting code}}.
\]  
For simplicity, we denote it as $L(s,N)$ when $n$, $t$, $b$, and $q$ are known from context. It is evident that $L(s,N)$ is also a non-increasing function with respect to $N$.

Intuitively, $L(s,N)$ gives us a guarantee of a sufficient list size, when using any $(t-s-1,b)$-burst-correcting code, over $\ch(t,b)$, with $N$ reads. In an analogous manner to the definition of $\ra(\ch(t,b),s)$, we would like to find the asymptotic growth rate of $L(s,N)$ as $n,N\to\infty$. We therefore define
\[
\La(h) = \inf_{N(n):\Ord_{b}(N(n))=s-h}\liminf_{n\to\infty} \frac{\log_q L(s,N(n))}{\log_q n}.
\]

\section{\texorpdfstring{$(t,b)$}{}-Burst-Correcting Codes}\label{sec:ecc}

In this section, we study $(t,b)$-burst-correcting codes via combinatorial arguments. We find a sharp asymptotic estimation of the size of an error ball in the channel $\ch(t,b)$, over an alphabet $\Sigma_{q}$, for fixed positive integers $t$, $b$, up to a constant factor, as $n\to\infty$. This results in asymptotic bounds on the redundancy of any $(t,b)$-burst-correcting code. 

We begin by bounding the size of an error ball. When $t=1$, it is folklore that the size of single $b^\leq$-burst error ball is $1 + n(q-1)q^{b-1}$. We turn to consider up to $t$ $b^{\leq}$-bursts. We note that due to trivial translation invariance, the size of a ball $\abs{\Ball_{t,b}(\bx)}$, does not depend on the choice of the center, $\bx$.

\begin{theorem}\label{thm:ball size}
For any fixed positive integers $q,t,b$, channel $\ch(t,b)$, and any $\bx\in\Sigma_q^n$,
\[
(1-o(1))\parenv*{1-\frac{1}{q}}^t(q^b-1)^t\binom{n}{t} \leq \abs*{\Ball_{t,b}(\bx)} \leq (1+o(1))(q^{b} - 1)^t \binom{n}{t},
\]
as $n\to\infty$, and in particular,
\[
\abs*{\Ball_{t,b}(\bx)}=\Theta(n^t).
\]
\end{theorem}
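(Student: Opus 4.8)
The plan is as follows. Since $\Ball_{t,b}(\bx)=\bx+B_{\leq t,\leq b}$ and translation by $\bx$ is a bijection of $\Sigma_q^n$, it suffices to estimate $\abs{B_{\leq t,\leq b}}$ (this is also the source of the center-independence noted before the statement). I would establish the two inequalities separately and then read off the $\Theta(n^t)$ conclusion.

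For the upper bound, \cref{lem:disjoint} gives $B_{\leq t,\leq b}=\bigcup_{i=0}^{t}B_{i,\leq b}$, so every $\bw\in B_{\leq t,\leq b}$ is a sum of at most $t$ pairwise disjoint $b^{\leq}$-bursts; discarding the zero summands, $\bw$ is the sum of a set of at most $t$ \emph{distinct nonzero} $b^{\leq}$-bursts (distinct because pairwise disjoint nonzero supports force distinct vectors). There are exactly $n(q-1)q^{b-1}$ nonzero $b^{\leq}$-bursts -- the count behind the classical $t=1$ value -- so the map sending such a set to its sum is onto $B_{\leq t,\leq b}$, and hence
\[
\abs{B_{\leq t,\leq b}}\leq\sum_{k=0}^{t}\binom{n(q-1)q^{b-1}}{k}=(1+o(1))\parenv{(q-1)q^{b-1}}^{t}\binom{n}{t}\leq(1+o(1))(q^{b}-1)^{t}\binom{n}{t},
\]
the last step using $(q-1)q^{b-1}=q^{b}-q^{b-1}\leq q^{b}-1$. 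This already proves the upper half (in fact with a marginally smaller constant than stated).

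For the lower bound I would exhibit an explicit injective family inside $B_{\leq t,\leq b}$. Fix a separation constant $c=c(b)$ large enough (e.g.\ $c=3b$) and call a \emph{configuration} a choice of positions $\ell_1,\dots,\ell_t\in[n]$ whose pairwise cyclic gaps are all at least $c$, together with, for each $j$, arbitrary values for the length-$b$ window $[\ell_j,\ell_j+b-1]$ subject only to the leftmost value being nonzero. Each configuration determines the vector $\bw=\be_1+\dots+\be_t$, where $\be_j$ is the burst supported on window $j$; since the windows are pairwise disjoint, $\bw\in B_{t,\leq b}\subseteq B_{\leq t,\leq b}$. A standard necklace count shows the number of admissible position tuples is $(1-o(1))\binom{n}{t}$ for fixed $t$ and $c$, and there are $\parenv{(q-1)q^{b-1}}^{t}$ choices of window contents, so the family has size $(1-o(1))\parenv{(q-1)q^{b-1}}^{t}\binom{n}{t}$. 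The heart of the argument is injectivity: given $\bw$, form the graph on $\supp(\bw)$ joining two support positions at cyclic distance at most $b-1$; because consecutive windows are separated by more than $b$, the connected components of this graph are exactly the supports $\supp(\be_j)$, each contained in a short (hence non-wrapping, unambiguous) cyclic interval whose left endpoint is $\ell_j$, after which $\be_j$ is simply the restriction of $\bw$ to $[\ell_j,\ell_j+b-1]$. Hence distinct configurations yield distinct vectors, giving
\[
\abs{B_{\leq t,\leq b}}\geq(1-o(1))\parenv{(q-1)q^{b-1}}^{t}\binom{n}{t}\geq(1-o(1))\parenv*{1-\tfrac1q}^{t}(q^{b}-1)^{t}\binom{n}{t},
\]
where the last inequality uses $(q-1)q^{b-1}=\tfrac{q-1}{q}\,q^{b}\geq\tfrac{q-1}{q}(q^{b}-1)$.

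Putting the two bounds together with $\binom{n}{t}=\Theta(n^{t})$ and the positivity of the (constant, $n$-independent) leading coefficients yields $\abs{\Ball_{t,b}(\bx)}=\Theta(n^{t})$. I expect the main obstacle to be the injectivity step of the lower bound: one must be careful with the cyclic geometry -- keeping ``gap'' distinct from ``distance'', ruling out wrap-around of the short intervals that appear, and remembering that a burst's support need not fill its whole length-$b$ window -- in order to be sure the clustering procedure reconstructs a configuration uniquely; the asymptotics of the count of well-separated $t$-subsets of $\Z_n$ is routine but should be stated with care. The upper bound, in contrast, is a clean partition-and-count once \cref{lem:disjoint} is available.
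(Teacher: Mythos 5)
Your proof is correct. It shares the paper's overall template---an over-counting argument for the upper bound and an injectively parametrized, well-separated family for the lower bound---but both halves are executed differently, and your version is in fact sharper. For the upper bound, the paper sums $\abs{E_I}$ over all placements of $k$ starting positions and lengths and obtains the constant $(q^b-1)^t$; you instead note that every element of $B_{\leq t,\leq b}$ is the sum of a set of at most $t$ distinct nonzero single bursts, of which there are at most $n(q-1)q^{b-1}$, giving the smaller constant $\parenv{(q-1)q^{b-1}}^t$. For the lower bound, the paper also uses separated configurations (separation $2b-1$), but to make the sets $E_I$ pairwise disjoint (\cref{Claim:Disjoint}) it forces \emph{both} endpoints of every burst to be nonzero, costing a factor $1-\frac{1}{q}$ per burst; you fix only the leading symbol of each window and instead recover the configuration from the sum by clustering $\supp(\bw)$ into connected components of the ``cyclic distance at most $b-1$'' graph, which is valid because your separation $3b$ keeps distinct windows more than $b$ apart and the minimal cyclic interval of each cluster is unambiguous once $n>2b$. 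The net effect is that you determine the exact asymptotics $\abs*{\Ball_{t,b}(\bx)}=(1\pm o(1))\parenv{(q-1)q^{b-1}}^t\binom{n}{t}$ with matching constants, which implies the stated bounds via $(1-\frac{1}{q})(q^b-1)\leq (q-1)q^{b-1}\leq q^b-1$. The remaining routine points (e.g., the necklace count of well-separated position tuples being $(1-o(1))\binom{n}{t}$) are exactly the ones you flag, and they go through.
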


\begin{proof}[Proof of~\cref{thm:ball size}]
By translation invariance, we may assume \( \bx = \boldsymbol{0} \) without loss of generality. By Lemma~\ref{lem:disjoint}, it follows that  
\[
\abs*{\Ball_{t,b}(\boldsymbol{0})} = \abs*{B_{\leq t, \leq b}},
\]  
where  
\[
B_{\leq t, \leq b} = \bigcup_{i=0}^{t} B_{i, \leq b}.
\]  

For each error in \( B_{k, \leq b} \) with \( k \in [0, t] \), we define the collection of \( k \) $b^\leq$-bursts, which start \emph{exactly} at positions \( 1 \leq p_1 < p_2 < \dots < p_k \leq n \) with respective lengths \emph{exactly} \( 1 \leq \ell_1, \ell_2, \dots, \ell_k \leq b \), as
$E_{\set{(p_1, \ell_1), \dots, (p_k, \ell_k)}}$. Here, by ``exactly'', we mean that positions $p_i$ and $p_{i}+{\ell_i}-1$ are non-zero for all $i\in[k]$. Since there are \( k \) disjoint $b^{\leq}$-bursts with cyclic intervals $[p_1, p_1 + \ell_1 - 1],\dots,[p_k,p_k+\ell_k-1]$, respectively, we obtain the ordering constraint  
\[
1 \leq p_1 \leq p_1 + \ell_1 - 1 < p_2 \leq p_2 + \ell_2 - 1 < \dots < p_k \leq p_k + \ell_k - 1 < p_1 + n.
\]  
In particular, we must have \( p_k \leq n \).

Define  
\[
\cI = \set*{ \set*{(p_1, \ell_1), \dots, (p_k, \ell_k)} : 1 \leq p_1 < \dots < p_k \leq n, 1 \leq \ell_1, \dots, \ell_k \leq b }.
\]  
To facilitate counting, let \( \cJ \) be the subset of \( \cI \) given by  
\[
\cJ = \set*{ \set*{(p_1, \ell_1), \dots, (p_k, \ell_k)} \in \cI : p_{i+1} \geq p_i + 2b - 1 \text{ for } i \in [k], \text{ where } p_{k+1} = p_1 + n }.
\]  
The following observation follows directly from these definitions.
\begin{claim}
  $  \bigcup_{I\in \cJ} E_I \subseteq B_{k,\leq b}  \subseteq \bigcup_{I\in \cI} E_I$.
\end{claim}
\begin{poc}
By definition, it is clear that $B_{k,\leq b}  \subseteq \bigcup_{I\in \cI} E_I$. For the other direction, $I = \set{(p_1,\ell_1),\ldots, (p_k,\ell_k)}\in \cJ$ must satisfy 
  $p_{i+1} \geq p_i + 2b-1 > p_i + \ell_i - 1 \geq p_i$ for any $i\in [k]$, which yields that $E_I \subseteq B_{k,\leq b}$.
\end{poc}

Moreover, we have the following crucial observation.
\begin{claim}\label{Claim:Disjoint}
    For any $I,I' \in \cJ$, $E_I \cap E_{I'} \neq \emptyset$ if and only if $I = I'$.
\end{claim}

\begin{poc}
Let \( I = \set{ (p_1, \ell_1), \ldots, (p_k, \ell_k) }\in\cJ \) and \( I'= \set{ (p_1^\prime, \ell_1^\prime), \ldots, (p_k^\prime, \ell_k^\prime) }  \in \cJ \), and assume $\be\in E_I\cap E_{I'}$ is an error pattern. Define the multiset  
\[
Q = \mset{ p_1, p_1 + \ell_1 -1, \ldots, p_k, p_k + \ell_k -1 },
\]  
where \( \mset{\cdot} \) denotes a multiset. Note that \( Q \) reduces to a set if \( \ell_i > 1 \) for all \( i \in [k] \). Note that since $\be\in E_I$, the positions of $\be$ that are in $Q$ contain non-zero entries. However, these positions must fall within the $b^{\leq}$-bursts corresponding to $I'$, a restriction which we now proceed to examine.

For each \( i \in [k] \), define  
\[
R_i = Q \cap [p_i^\prime, p_i^\prime + \ell_i^\prime -1],
\]  
which represents the sub-multiset of \( Q \) contained within the interval \( [p_i^\prime, p_i^\prime + \ell_i^\prime -1] \). We claim that \( R_i \) contains at most two distinct elements for any \( i \in [k] \). Otherwise, there exists some \( i \in [k] \) such that \( R_i \) contains at least three elements, appearing in one of the following two possible forms:  
\begin{align*}
    &\mset{ p_j, p_j+\ell_j-1, p_{j+1} },\\
    &\mset{ p_j+\ell_j-1, p_{j+1}, p_{j+1}+\ell_{j+1}-1 },
\end{align*}  
where \( 1 \leq j \leq k \). 

In both cases, we obtain the bound  
\[
\ell_i^\prime \geq \min \set*{ p_{j+1} - p_j + 1, p_{j+1}+\ell_{j+1}-p_{j}-\ell_{j} + 1 }.
\]  
Since \( p_{j+1} - p_j \geq 2b-1 \) and \( 1 \leq \ell_j, \ell_{j+1} \leq b \), it follows that  
\[
\ell_i^\prime \geq b+1,
\]  
which contradicts the assumption that \( \ell_i^\prime \leq b \).
  
 On the other hand, the average size of \( R_i \) is  
\[
\frac{\abs*{Q}}{k} = 2,
\]  
which implies that \( \abs{R_i} = 2 \) for all \( 1 \leq i \leq k \). Consequently, \( R_i \) must belong to the set  
\[
\set*{ \mset{p_1, p_1 + \ell_1 - 1}, \mset{p_1 + \ell_1 - 1, p_2}, \mset{p_2, p_2 + \ell_2 - 1}, \dots, \mset{p_k, p_k + \ell_k - 1}, \mset{p_k + \ell_k - 1, p_1} }.
\]  

The gap between \( p_j + \ell_j - 1 \) and \( p_{j+1} \) is given by  
\[
p_{j+1} - p_j - \ell_j + 2 \geq 2b - 1 - b + 2 = b+1.
\]  
Thus, \( R_i \) must take the form \( \mset{p_j, p_j + \ell_j - 1} \) for some \( j \in [k] \), which implies  
\[
[p_j, p_j + \ell_j - 1] \subseteq [p_i^\prime, p_i^\prime + \ell_i^\prime - 1].
\]  
By the increasing order of \( p_1, \dots, p_k \) and \( p_1^\prime, \dots, p_k^\prime \), we deduce that  
\[
[p_i, p_i + \ell_i - 1] \subseteq [p_i^\prime, p_i^\prime + \ell_i^\prime - 1] \quad \text{for } 1 \leq i \leq k.
\]  
Applying the same argument, we also obtain  
\[
[p_i^\prime, p_i^\prime + \ell_i^\prime - 1] \subseteq [p_i, p_i + \ell_i - 1] \quad \text{for } 1 \leq i \leq k,
\]  
which implies \( I = I' \). This completes the proof. 
\end{poc}

By Claim~\ref{Claim:Disjoint}, we have 
\[
  \abs*{\bigcup_{I\in \cJ} E_I} = \sum_{I\in \cJ} \abs*{E_I}.
\]
For each $I =\set{ (p_1,\ell_1),\ldots, (p_k,\ell_k)} \in \cI$, by definition of $\cI$ we have 
  \[
  \abs{E_I} \leq \prod_{i=1}^k (q-1)q^{\ell_i - 1} = (q-1)^k q^{-k + \sum_{i=1}^k \ell_i}.
  \]
Therefore, we have
   \begin{align}
      \abs*{B_{k,\leq b}} &\leq \sum_{I\in \cI} \abs*{E_I} \leq \sum_{\substack{ 1 \leq p_1 <\cdots < p_k \leq n \\ 1\leq \ell_1, \ldots,\ell_k \leq b}} (q-1)^k q^{-k + \sum_{i=1}^k \ell_i} \nonumber \\
      & = \binom{n}{k} \cdot (q-1)^k q^{-k} \sum_{1\leq \ell_1, \ldots,\ell_k \leq b} q^{\sum_{i=1}^k \ell_i} \nonumber \\
      &= \binom{n}{k}  (q^{b} - 1)^k, \label{eq:claimupper}
  \end{align}
 where in the last equality, we take advantage of the fact that $\sum_{1\leq \ell_1, \ldots,\ell_k \leq b} q^{\sum_{i=1}^k \ell_i}$ is the expansion of $(q + q^2 + \cdots + q^b)^k = \frac{(q^{b+1} - q)^k}{(q-1)^k}$.
  
We now turn to consider a lower bound on $\abs{B_{k,\leq b}}$. For each $I\in \cJ$, by definition of $\cJ$, we have 
  \[
  \abs{E_I} \geq \prod_{i=1}^k (q-1)^2 q^{\ell_i - 2} = (q-1)^{2k} q^{-2k + \sum_{i=1}^k \ell_i}.
  \]
Thus,
    \begin{align*}
      \abs*{B_{k,\leq b}} &\geq \sum_{I\in \cJ} \abs*{E_I} \geq \sum_{ \substack{ p_{i} +2b-1 \leq p_{i+1} \\ 1\leq i \leq k}} \sum_{1\leq \ell_1, \ldots,\ell_k \leq b} (q-1)^{2k} q^{-2k + \sum_{i=1}^k \ell_i} \\
      & = \sum_{ \substack{ p_{i} +2b-1 \leq p_{i+1} \\ 1\leq i \leq k}} (q-1)^{2k} q^{-2k} \sum_{1\leq \ell_1, \ldots,\ell_k \leq b} q^{\sum_{i=1}^k \ell_i} \\
      &= \sum_{ \substack{ p_{i} +2b-1 \leq p_{i+1} \\ 1\leq i \leq k}} \frac{(q-1)^k (q^{b+1} - q)^k}{q^{2k}}. \\
      &\geq \binom{n-(2b-2)k}{k}\cdot  \frac{(q-1)^k (q^{b} - 1)^k}{q^{k}},
  \end{align*}
 where for the final inequality, we under-count by marking $k$ positions, immediately inserting $2b-2$  positions after each, and then insert the remaining $n-k(2b-1)$, which is equivalent to throwing $n-k(2b-1)$ balls into $k+1$ bins.

Rewriting the lower bound, and combining it with the upper bound of~\eqref{eq:claimupper} we obtain,
\[
\parenv*{1-\frac{1}{q}}^k(q^b-1)^k\binom{n}{k}(1-o(1)) \leq\abs*{B_{k,\leq b}} = (q^{b} - 1)^k \binom{n}{k},
\]
where we assume $k,b,t,q$ are constants and $n\to\infty$. Summing over all $k\in[0,t]$ for the upper bound, we obtain 
\begin{align*}
  \parenv*{1-\frac{1}{q}}^t(q^b-1)^t\binom{n}{t}(1-o(1)) &= \abs*{B_{t,\leq b}} \leq \abs*{B_{\leq t,\leq b}} \leq  \sum_{k=0}^t \abs{B_{k,\leq b}} = \sum_{k=0}^t (q^{b} - 1)^k \binom{n}{k} \\
  & = (1+o(1))(q^b-1)^t\binom{n}{t}.
\end{align*}
\end{proof}

\begin{rmk}
\label{rmk:ballsizecomb}
If we wish for a non-asymptotic version of Theorem~\ref{thm:ball size}, by examining its proof we can see that for all $n\geq 2bt$,
\[
\parenv*{1-\frac{1}{q}}^t(q^b-1)^t\binom{n-(2b-2)t}{t} \leq \abs*{B_{\leq t,\leq b}} \leq  (t+1)(q^{b} - 1)^t \binom{n}{t}.
\]
\end{rmk}

With Theorem~\ref{thm:ball size} established, one can utilize the classical argument underpinning the sphere-packing bound and the Gilbert–Varshamov (GV) bound to derive immediate bounds on the redundancy of $(t,b)$-burst-correcting codes.

\begin{cor}\label{cor:Redu}
For any fixed positive integers $q,t,b$, and channel $\ch(t,b)$, the minimum redundancy of $(t,b)$-burst-correcting codes of length $n$ over $\Sigma_q$ is at least $t\log_q n + \Omega(1)$ and at most $2t\log_q n + O(1)$, namely,
\[
t\leq \ra(\ch(t,b),-1)\leq 2t.
\]
\end{cor}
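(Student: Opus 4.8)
The plan is to apply the two classical counting bounds in coding theory—sphere-packing for the lower bound on redundancy, and Gilbert–Varshamov for the upper bound—using the ball-size estimate from Theorem~\ref{thm:ball size}. Throughout, $\cC$ is a $(t,b)$-burst-correcting code of length $n$ over $\Sigma_q$, which (by the observation following the definition of the burst metric) is equivalent to requiring minimum $b^{\leq}$-burst distance at least $2t+1$, and we want to translate size bounds on $\abs{\cC}$ into redundancy bounds $r(\cC)=n-\log_q\abs{\cC}$, then divide by $\log_q n$ and take the limit to get the statement about $\ra(\ch(t,b),-1)$.

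For the lower bound on redundancy (sphere-packing): since the balls $\Ball_{t,b}(\bu)$ for $\bu\in\cC$ are pairwise disjoint and all sit inside $\Sigma_q^n$, we get $\abs{\cC}\cdot\min_{\bu}\abs{\Ball_{t,b}(\bu)}\leq q^n$. By Theorem~\ref{thm:ball size}, $\abs{\Ball_{t,b}(\bu)}\geq(1-o(1))(1-1/q)^t(q^b-1)^t\binom{n}{t}=\Theta(n^t)$, so $\abs{\cC}\leq q^n/\Theta(n^t)$, hence $r(\cC)=n-\log_q\abs{\cC}\geq t\log_q n - O(1)$. Dividing by $\log_q n$ and letting $n\to\infty$ yields $\ra(\ch(t,b),-1)\geq t$.

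For the upper bound on redundancy (Gilbert–Varshamov): a greedy/maximal-code argument shows there exists a $(t,b)$-burst-correcting code $\cC$ with $\abs{\cC}\geq q^n/\max_{\bu}\abs{\Ball_{2t,b}(\bu)}$, since a code that is maximal with respect to minimum distance $2t+1$ has the property that every point of $\Sigma_q^n$ lies within $b^{\leq}$-burst distance $2t$ of some codeword (the balls of radius $2t$ cover the space). Again by Theorem~\ref{thm:ball size}, $\abs{\Ball_{2t,b}(\bu)}\leq(1+o(1))(q^b-1)^{2t}\binom{n}{2t}=\Theta(n^{2t})$, so $\abs{\cC}\geq q^n/\Theta(n^{2t})$, giving $r(\cC)\leq 2t\log_q n + O(1)$. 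Since the optimal code has redundancy no larger than this, dividing by $\log_q n$ and taking the limit gives $\ra(\ch(t,b),-1)\leq 2t$.

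There is no real obstacle here—this corollary is a routine consequence of Theorem~\ref{thm:ball size}—but the one point requiring a little care is the reduction of the burst-distance setting to the standard sphere-packing/GV framework: one must verify that the burst distance $d_b$ is a genuine metric (stated in the excerpt) so that the triangle-inequality arguments behind both bounds go through, and that "correcting $t$ bursts" is exactly "minimum $b^{\leq}$-burst distance $\geq 2t+1$" (also stated). Given those, the only quantitative input is that both $\abs{\Ball_{t,b}}$ and $\abs{\Ball_{2t,b}}$ are polynomial in $n$ of degrees $t$ and $2t$ respectively, so their logarithms are $t\log_q n+O(1)$ and $2t\log_q n+O(1)$, and the $(1\pm o(1))$ and constant prefactors wash out after division by $\log_q n$.
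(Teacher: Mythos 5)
Your proposal is correct and follows exactly the paper's argument: sphere-packing with the lower bound on $\abs{\Ball_{t,b}}$ from Theorem~\ref{thm:ball size} for the redundancy lower bound, and a maximal-code (Gilbert--Varshamov) argument with the upper bound on $\abs{\Ball_{2t,b}}$ for the redundancy upper bound. The extra care you take to note that $d_b$ is a metric and that $t$-burst-correction is equivalent to minimum burst distance $2t+1$ is exactly the implicit justification the paper relies on.
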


\begin{proof}[Proof of~\cref{cor:Redu}]
By standard arguments for the sphere-packing bound and GV bound~\cite{macwilliams1977theory}, the size of any code does not exceed $q^n/\abs{\Ball_{t,b}(\bx)}$, and there exists a code of size at least $q^n/\abs{\Ball_{2t,b}(\bx)}$. We then simply apply the bound of Theorem~\ref{thm:ball size}.
\end{proof}

As a final note for this section, it is well known~\cite{abdel1986existence,abdel1988existence,etzion2001constructions} that there is not difference in the asymptotic redundancy factor between a single burst and a single substitution channel,
\[
\ra(\ch(1,b),-1)=\ra(\ch(1),-1).
\]
We conjecture this is also the case for multiple substitutions/bursts.

\begin{conjecture}\label{conj: burst}
    Let $ q, t, b$ be constant positive integers with $ q \geq 2 $. Then,
    \[
    \ra(\ch(t,b),-1)=\ra(\ch(t),-1).
    \]
\end{conjecture}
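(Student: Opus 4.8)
The plan is to prove the claimed identity as two separate inequalities, of which only one is substantial.

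\emph{The inequality $\ra(\ch(t,b),-1)\geq\ra(\ch(t),-1)$ is essentially free.} A burst of length at most $1$ is precisely a single substitution, so $B_{\leq t,\leq 1}\subseteq B_{\leq t,\leq b}$ and hence $\Ball_{t,1}(\bx)\subseteq\Ball_{t,b}(\bx)$ for every $\bx\in\Sigma_q^n$. Thus every $(t,b)$-burst-correcting code has pairwise-disjoint radius-$t$ Hamming balls, i.e.\ it is an ordinary $t$-substitution-correcting code (equivalently, $d_b\leq d_H$, so minimum $b^{\leq}$-burst distance at least $2t+1$ forces minimum Hamming distance at least $2t+1$). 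Therefore $r^*(n,\ch(t,b),1)_q\geq r^*(n,\ch(t),1)_q$ for every $n$, and dividing by $\log_q n$ and taking $\liminf$ gives the inequality; together with \cref{cor:Redu} this also re-derives $\ra(\ch(t),-1)\geq t$.

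\emph{The inequality $\ra(\ch(t,b),-1)\leq\ra(\ch(t),-1)$ is the real content.} Here one must exhibit, for each $n$, a $(t,b)$-burst-correcting code of redundancy $(1+o(1))$ times the optimal redundancy of a $t$-substitution-correcting $q$-ary code of the same length. The natural attack is an algebraic syndrome construction in the spirit of BCH/alternant codes. A $b^{\leq}$-burst is specified by a start position $p\in[n]$ together with a content vector of length at most $b$; one would design a parity-check matrix over an extension field $\F_{q^m}$ with $m=(1+o(1))\log_q n$, whose column indexed by $j$ packages a locator $\alpha^{j}$ with a bounded-length block recording the content, so that the syndrome of any at-most-$t$ pairwise-disjoint bursts (disjointness being free by \cref{lem:disjoint}) satisfies a key equation that a Berlekamp--Massey-type decoder can solve for both the burst positions and their contents. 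An alternative, essentially equivalent, route is a windowing reduction: grouping the $n$ coordinates into blocks of length $b$ turns a $b^{\leq}$-burst over $\Sigma_q$ into a burst of macro-length at most $2$ over $\Sigma_{q^b}$, reducing general $b$ to the case $b=2$ over a larger alphabet.

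\emph{The main obstacle} — and the reason the statement is only a conjecture — is driving the leading constant down from $2t$ to $t$. In the substitution setting this saving comes from Frobenius conjugacy (half of the power-sum syndromes are determined by the others), and it is far from clear that an analogue survives when each ``error'' carries an internal content vector rather than a scalar: the key equation becomes an equation over a module, and the number of genuinely independent syndrome coordinates threatens to double. Nearly-overlapping bursts, which coalesce into a single longer pattern exactly as in the proof of \cref{lem:disjoint}, further tangle the error-locator combinatorics. Every construction known at present — interleaving single-burst codes, product or concatenated codes, extension-field constructions — yields redundancy $(1+o(1))bt\log_q n$, and even the improved Gilbert--Varshamov bound of \cref{thm:impoved upperred} reaches only $2t\log_q n-\log_q\log n+O(1)$; matching the sphere-packing value $t\log_q n$ requires a genuinely new construction, and finding one is precisely what the conjecture demands.
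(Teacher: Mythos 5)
The statement you were asked to prove is stated in the paper as \cref{conj: burst}, i.e.\ as an \emph{open conjecture}: the paper offers no proof, and explicitly leaves it as future work in the conclusion. Your proposal, read as a proof, therefore has a genuine gap, but it is the unavoidable one. The first half of your argument is correct and complete: since every substitution is a $b^{\leq}$-burst of length one, $\Ball_{t,1}(\bx)\subseteq\Ball_{t,b}(\bx)$ (equivalently $d_b\leq d_H$), so every $(t,b)$-burst-correcting code is a $t$-substitution-correcting code, whence $r^*(n,\ch(t,b),1)_q\geq r^*(n,\ch(t),1)_q$ for every $n$ and $\ra(\ch(t,b),-1)\geq\ra(\ch(t),-1)$. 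This direction is indeed essentially free.

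The reverse inequality $\ra(\ch(t,b),-1)\leq\ra(\ch(t),-1)$ is, as you say yourself, the entire content of the conjecture, and your proposal does not prove it: the syndrome/key-equation sketch and the windowing reduction are stated as plausible attack routes, not as arguments, and you candidly concede that no known construction beats $2t\log_q n$ (the paper's best upper bound, \cref{thm:impoved upperred}, is $2t\log_q n-\log_q\log n+O(1)$). So the proposal should be read as a correct proof of one inequality together with an accurate assessment of why the other is open --- which matches the paper's own position exactly --- rather than as a proof of the stated identity. If you intended to submit this as a proof of the conjecture, the missing piece is a construction of $(t,b)$-burst-correcting codes with redundancy $(1+o(1))\,\ra(\ch(t),-1)\log_q n$, and no such construction is currently known for $t\geq 2$.
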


\section{Reconstruction Codes in \texorpdfstring{$\ch(t,b)$}{}}\label{sec:reconstruction code}

In this section, we examine the trade-off between the redundancy of a reconstruction code and its reconstruction degree. Specifically, we aim to determine $\ra(\ch(t,b),s)$. For the lower bound, we carefully analyze the difference between codewords. For the upper bound, we build a bridge between burst-reconstruction codes and burst-correcting codes.

We first observe that for every \( s \geq t \), we immediately have
\[
\ra(\ch(t,b), s) = 0,
\]
since we can take a code $\cC=\Sigma_q^n$. Thus, in the following discussion, we may assume \( 0 \leq s \leq t - 1 \).  We also recall that by Theorem~\ref{thm:ball size}, $\abs{\Ball_{t,b}(\bx)}=\Theta(n^t)$. Thus, a function $N(n)$ such that $\Ord_{b}(N(n))=s$ is equivalent to having $N(n)=\Omega(n^s)$, $N(n)=o(n^{s+1})$, and that for all sufficiently large $n$ we have $N(n)>\abs{\Ball_{s,b}(\bx)}$.

We first describe a simple trade-off between reconstruction degree and redundancy, in substitution channels, i.e., $b=1$.

\begin{theorem}\label{thm:MARR for sub}
    Given positive integers $q,t$ with $q \geq 2$, and $\ch(t)$ (the channel with at most $t$ substitutions), we have 
    \[
    \ra(\ch(t),s) =  \ra(\ch(t-s-1),-1),
    \]
    for any integer $-1\leq s \leq t-1$. 
\end{theorem}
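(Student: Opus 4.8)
The plan is to prove the two inequalities $\ra(\ch(t),s) \le \ra(\ch(t-s-1),-1)$ and $\ra(\ch(t),s) \ge \ra(\ch(t-s-1),-1)$ separately, the key being to build a correspondence between reconstruction codes over $\ch(t)$ with $N(n)$ of $1$-order $s$, and ordinary error-correcting codes over $\ch(t-s-1)$. For the upper bound, I would take an optimal $(t-s-1)$-substitution-correcting code $\cC$ and argue that it is automatically an $(n,\ch(t),N(n))_q$ reconstruction code for a suitable $N(n)$ with $\Ord_1(N(n))=s$. Indeed, if $\bu,\bv\in\cC$ are distinct, then $d_H(\bu,\bv)\ge 2(t-s-1)+1$, and one computes $\abs{\Ball_{t,1}(\bu)\cap\Ball_{t,1}(\bv)}$: any word in the intersection is at Hamming distance $\le t$ from both $\bu$ and $\bv$, and on the (at least $2t-2s-1$) coordinates where $\bu,\bv$ differ, the word must disagree with at least one of them in a way constrained by the distance budget. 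A careful count shows this intersection has size $\Theta(n^{s})$ — actually $O(n^{t - \lceil (2t-2s-1)/2\rceil}) = O(n^{s})$ — so setting $N(n)$ to be that intersection size plus one gives $\Ord_1(N(n)) = s$ by Theorem~\ref{thm:ball size} and the monotonicity in~\eqref{eq:nc}. Hence $r^*(n,\ch(t),N(n))_q \le r(\cC)$, and dividing by $\log_q n$ and taking limits yields the bound.

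For the lower bound, I would take, for each $n$, an optimal reconstruction code $\cC'$ for $\ch(t)$ with some $N(n)$ of $1$-order $s$, and show it must essentially be a $(t-s-1)$-substitution-correcting code up to a negligible loss in redundancy. The idea: if $\cC'$ contained two codewords $\bu,\bv$ with $d_H(\bu,\bv)\le 2(t-s-1)$, I would show $\abs{\Ball_{t,1}(\bu)\cap\Ball_{t,1}(\bv)}$ is $\Omega(n^{s+1})$ — because the overlap of the two balls then contains a full $\Ball_{t-d',1}$-like set in the coordinates outside the difference, where $d' \le t-s-1$, giving room for at least $s+1$ free substitutions — contradicting $N(n) = o(n^{s+1})$ (which holds since $\Ord_1(N(n))=s$). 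Therefore any two codewords are at Hamming distance $\ge 2(t-s-1)+1$, so $\cC'$ is a $(t-s-1)$-substitution-correcting code, giving $r^*(n,\ch(t),N(n))_q = r(\cC') \ge r^*_{\text{ECC}}(n,\ch(t-s-1))$. Taking infimum over admissible $N(n)$ and the appropriate limits gives $\ra(\ch(t),s)\ge\ra(\ch(t-s-1),-1)$.

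The boundary cases need separate attention: when $s = t-1$, $\ch(t-s-1) = \ch(0)$ is the trivial (noiseless) channel with redundancy $0$, and one checks that $\ra(\ch(t),t-1)=0$ directly since the required intersection bound $N(n)=\Omega(n^{t-1})$ but $o(n^t)$ is achievable by the whole space essentially — actually here one wants to confirm $\cC' = \Sigma_q^n$ works once $N(n)$ exceeds $\abs{\Ball_{t-1,1}}$, which is exactly the $\Ord_1(N(n)) = t-1$ condition. When $s=-1$ the statement is the tautology $\ra(\ch(t),-1)=\ra(\ch(t),-1)$.

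The main obstacle I anticipate is the precise two-sided estimate of $\abs{\Ball_{t,1}(\bu)\cap\Ball_{t,1}(\bv)}$ as a function of $d_H(\bu,\bv)$: I need that this intersection size, as $n\to\infty$ with all other parameters fixed, is $\Theta(n^{t - \lceil d_H(\bu,\bv)/2\rceil})$. The upper direction is a union-bound/counting argument splitting coordinates into the difference set and its complement; the lower direction requires exhibiting enough words in the intersection, which means choosing substitutions on the complement freely ($\Theta(n^{t-\lceil d/2 \rceil})$ ways) while fixing a valid pattern on the difference set — this is where a clean combinatorial lemma, perhaps stated as an auxiliary result before the theorem, would make the argument go through smoothly. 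Getting the ceiling/floor bookkeeping exactly right so that $d_H\ge 2(t-s-1)+1$ corresponds precisely to intersection size $O(n^s)$ while $d_H\le 2(t-s-1)$ corresponds to $\Omega(n^{s+1})$ is the delicate point.
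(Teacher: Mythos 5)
Your proposal follows essentially the same route as the paper: both directions hinge on the two-sided estimate $\abs{\Ball_{t,1}(\bu)\cap\Ball_{t,1}(\bv)}=\Theta(n^{t-\ceil{d_H(\bu,\bv)/2}})$, which the paper does not re-derive but simply imports as Levenshtein's exact closed-form expression for the intersection of two Hamming balls (your anticipated "auxiliary lemma"). The one small wrinkle you gloss over is that the intersection size plus one need not exceed $\abs{\Ball_{s,1}(\mathbf{0})}$, so to certify $\Ord_1(N(n))=s$ the paper sets $N'(n)=\max\set{N_t(\cC_n),\abs{\Ball_{s,1}(\mathbf{0})}+1}$.
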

\begin{proof}[Proof of~\cref{thm:MARR for sub}]
In one direction, if $\Ord_{1}(N(n))=-1$, by definition, $N(n)=1$ for all $n$, and thus the $\inf$ part of~\eqref{eq:defrasy} is degenerate. Consider a sequence of error-correcting codes $\cC_n$ of length $n$, over $\ch(t-s-1)$, such that
\[
r(\cC_n)=r^*(n,\ch(t-s-1,1),1)_q,
\]
and therefore,
\[
\liminf_{n\to\infty}\frac{r(\cC_n)}{\log_q n}=\ra(\ch(t-s-1,-1).
\]
Since these are error-correcting codes, $N_{t-s-1}(\cC_n)=1$.

By Levenshtein~\cite{levenshtein2001efficient}, if $\bu,\bv\in\Sigma_q^n$ and $d_1(\bu,\bv)=d$, then the size of the intersection of the balls of radius $d$ around $\bu$ and $\bv$ does not depend on the choice of $\bu$ and $\bv$, and is equal to
\begin{align}
\cI_t(n,d) &= \abs*{\Ball_{t,1}(\bu) \cap \Ball_{t,1}(\bv)} \nonumber \\
&= \sum_{i=0}^{t - \ceil{\frac{d}{2}}} \binom{n-d}{i}(q-1)^i \cdot \parenv*{\sum_{k = d-t+i}^{t-i} \sum_{\ell = d-t+i}^{t-i} \binom{d}{k} \binom{d-k}{\ell} (q-2)^{d-k-\ell}} \nonumber \\
&= \Theta( n^{t - \ceil{d/2}}). \label{eq:Ind}
\end{align}
Let us now use the codes $\cC_n$ over the channel $\ch(t,1)$. In this setting, with at most $t$ errors present, for any $\cC \subseteq \Sigma_q^n$ with minimum Hamming distance $d$, we have
\[
N_t(\cC) = 1+\cI_t(n,d) = \Theta(n^{t-\ceil{d/2}}).
\]
Since our codes $\cC_n$ have minimum distance at least $2(t-s)-1$ (to be able to correct $t-s-1$ errors by the channel), we have
\[N_t(\cC_n)\leq 1+\cI_t(n,2(t-s)-1)=O(n^s).\]
We now define the sequence
\[N'(n)=\max\set*{N_t(\cC_n),\abs*{\Ball_{s,1}(\mathbf{0})}+1}.\]
By the above discussion and Theorem~\ref{thm:ball size}, we have $N'(n)=O(n^s)$, as well as $\Ord_{1}(N'(n))=s$. Now, by definition
\[
\ra(\ch(t),s) \leq \liminf_{n\to\infty} \frac{r(\cC_n)}{\log_q n}= \ra(\ch(t-s-1),-1).
\]

In the other direction, consider a sequence of reconstruction codes, $\cC_n$, of length $n$, and a sequence $N(n)$ such that $\Ord_1(N(n))=s$ and $\cC_n$ can uniquely reconstruct sequences transmitted over $\ch(t,1)$ with $N(n)$ reads. By Theorem~\ref{thm:ball size}, we have $N(n)=o(n^{s+1})$.

Assume to the contrary, that for infinitely many values of $n$, $\cC_n$ contains two distinct codewords $\bu,\bv\in\cC_n$ with $d_1(\bu,\bv)\leq 2(t-s-1)$. However, by~\eqref{eq:Ind}, for this subsequence of codes, $N_t(\cC_n)=\Omega(n^{s+1})$, contradicting $N(n)=o(n^{s+1})$. Thus, there exists $n_0\in\N$ such that for all $n\geq n_0$, the minimum distance of $\cC_n$ is at least $2(t-s)-1$. For convenience we remove the codes of length $n<n_0$ from the sequence.

If we now use $\cC_n$ over $\ch(t-s-1)$, then $N_{t-s-1}(\cC_n)=1$. Thus, by definition,
\[
\ra(\ch(t-s-1),-1)\leq \liminf_{n\to\infty} \frac{r(\cC_n)}{\log_q n}.
\]
Since this holds for any sequence of codes $\cC_n$, and any $N(n)$ with the requirements set above, we have
\[
\ra(\ch(t-s-1),-1)\leq \inf_{N(n):\Ord_{1}(N(n))=s}\liminf_{n\to\infty} \frac{r^*(n,\ch(t),N(n))_q}{\log_q n}=\ra(\ch(t),s).
\]
\end{proof}

We derive an upper bound on $\ra(\ch(t),s)$ by known constructions from the theory of error-correcting codes. More precisely, when
\begin{align*}
&n = q^m-1, \text{ for } m\in \N,\\
&n-k \leq \begin{cases}
    \ceil{\frac{d-1}{2}}m, &q = 2,\\
    2\ceil{\frac{d-1}{2}}m, &q > 2,
\end{cases}
\end{align*}
by taking advantage of the classic primitive narrow-sense $[n,k,d]_q$ BCH codes~\cite{macwilliams1977theory}, we have  
\[
\ra(\ch(t-s-1),-1) \leq \begin{cases}
    (t-s-1), &q = 2,\\
    2(t-s-1), &q > 2.
\end{cases}
\]
On the other hand, we can easily obtain the lower bound by applying the classical sphere-packing bound, which states that any $q$-ary $n$-length code with minimum Hamming distance $d$ has redundancy at least $\floor{\frac{d-1}{2}} \log_q n + O(1)$. We thus have the following:

\begin{theorem}
Let $t,q$ be positive integers with $q \geq 2$, and let $s$ be an integer with $-1\leq s \leq t-1$. Then
    \[
    t-s-1 \leq \ra(\ch(t),s)_q \leq 2(t-s-1).
    \]
    Moreover for $q=2$,
    \[
    \ra(\ch(t),s)_2 = t-s-1.
    \]
\end{theorem}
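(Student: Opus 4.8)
The plan is to combine the previously established reduction (Theorem~\ref{thm:MARR for sub}), which says $\ra(\ch(t),s)_q = \ra(\ch(t-s-1),-1)_q$, with classical coding-theoretic bounds on error-correcting codes. So the whole statement reduces to proving that for any positive integer $\epsilon$ (playing the role of $t-s-1$) and any alphabet size $q\geq 2$, one has $\epsilon \leq \ra(\ch(\epsilon),-1)_q \leq 2\epsilon$, with equality to $\epsilon$ when $q=2$. The case $\epsilon=0$ (i.e.\ $s=t-1$) is degenerate: a single read over $\ch(t-s-1)=\ch(0)$ needs no redundancy, and indeed $\ra(\ch(0),-1)=0$, consistent with the claimed range $[0,0]$; so I may assume $\epsilon\geq 1$.

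First I would establish the lower bound $\ra(\ch(\epsilon),-1)_q \geq \epsilon$. An $\epsilon$-substitution-correcting code must have minimum Hamming distance $d\geq 2\epsilon+1$, so $\lfloor (d-1)/2\rfloor \geq \epsilon$. The classical sphere-packing (Hamming) bound gives $\abs{\cC}\leq q^n/\abs{\Ball_{\epsilon,1}(\mathbf 0)}$, and since $\abs{\Ball_{\epsilon,1}(\mathbf 0)}=\sum_{i=0}^{\epsilon}\binom{n}{i}(q-1)^i=\Theta(n^\epsilon)$ (this is the $b=1$ instance of Theorem~\ref{thm:ball size}), taking $\log_q$ yields $r(\cC)\geq \epsilon\log_q n - O(1)$. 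Dividing by $\log_q n$ and taking $\liminf$ over any sequence of optimal codes gives $\ra(\ch(\epsilon),-1)_q\geq \epsilon$. Next, for the upper bound, I would exhibit codes. Take $n=q^m-1$ and use primitive narrow-sense BCH codes with designed distance $d=2\epsilon+1$: the standard redundancy bound for BCH codes (see~\cite{macwilliams1977theory}) gives $n-k\leq \lceil (d-1)/2\rceil\, m = \epsilon m$ when $q=2$, and $n-k\leq 2\lceil(d-1)/2\rceil\, m = 2\epsilon m$ when $q>2$. Since $m=\log_q(n+1)=(1+o(1))\log_q n$, these are $\epsilon$-substitution-correcting codes with redundancy $(1+o(1))\epsilon\log_q n$ (resp.\ $(1+o(1))2\epsilon\log_q n$), so $\ra(\ch(\epsilon),-1)_q\leq \epsilon$ for $q=2$ and $\leq 2\epsilon$ for $q>2$. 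One technical wrinkle is that the $\inf/\liminf$ in the definition of $\ra$ ranges over all lengths $n\to\infty$, not only lengths of the form $q^m-1$; I would handle this by a standard shortening/padding argument (shorten a BCH code of the next larger admissible length down to the desired $n$, which does not increase redundancy by more than $o(\log n)$ since consecutive values $q^m-1$ are within a constant factor). Combining with the reduction theorem then gives $t-s-1\leq \ra(\ch(t),s)_q\leq 2(t-s-1)$, and equality $\ra(\ch(t),s)_2=t-s-1$ for $q=2$.

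The main obstacle, such as it is, is bookkeeping rather than a genuine mathematical difficulty: one must make sure the BCH redundancy estimate is stated and applied with the correct constant ($\epsilon$ vs.\ $2\epsilon$ depending on the parity-adjusted designed distance and on $q=2$ vs.\ $q>2$), and one must take care that the $\liminf$ over \emph{all} $n$ is not spoiled by the gaps in the set $\{q^m-1\}$. Both are routine once phrased correctly; the shortening argument is the only place requiring a small amount of care, and it is completely standard. No new ideas beyond Theorem~\ref{thm:MARR for sub}, Theorem~\ref{thm:ball size} (at $b=1$), the Hamming bound, and the BCH redundancy bound are needed.
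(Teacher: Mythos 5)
Your proposal matches the paper's argument essentially verbatim: the paper also obtains this theorem by combining the reduction $\ra(\ch(t),s)=\ra(\ch(t-s-1),-1)$ from Theorem~\ref{thm:MARR for sub} with the sphere-packing lower bound and the primitive narrow-sense BCH upper bound (with redundancy $\ceil{\frac{d-1}{2}}m$ for $q=2$ and $2\ceil{\frac{d-1}{2}}m$ for $q>2$). Your extra care about lengths not of the form $q^m-1$ (handled by shortening) is a point the paper silently elides, so if anything your write-up is slightly more complete.
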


\subsection{A lower bound on  \texorpdfstring{$\ra(\ch(t,b),s)$}{}}

To derive a lower bound on $\ra(\ch(t,b),s)$, we investigate the types of structures presented in reconstruction codes, that prevent successful codeword reconstruction within $o(n^{s+1})$ reads. 

\begin{lemma}\label{lem:forbidden}
Let $q,b,t$ be constant positive integers, $q \geq 2$. Let $\cC_n \subseteq \Sigma_q^n$ be a sequence of reconstruction codes over $\ch(t,b)$. If $N_t(\cC_n)=o(n^{s+1})$ then for all sufficiently large $n$, $\cC_n$ is also a  $(t-s-1,b)$-burst-correcting code.
\end{lemma}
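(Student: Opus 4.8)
The plan is to prove the contrapositive. Suppose that along some infinite set of values of $n$ the code $\cC_n$ \emph{fails} to be a $(t-s-1,b)$-burst-correcting code; I will show that $N_t(\cC_n)=\Omega(n^{s+1})$ along that subsequence, contradicting $N_t(\cC_n)=o(n^{s+1})$. (If $t-s-1<0$ there is nothing to prove, so assume $0\le s\le t-1$.) Fix such an $n$ and a pair of distinct codewords $\bu,\bv\in\cC_n$ with $d:=d_b(\bu,\bv)\le 2(t-s-1)$. By the definition of $d_b$ together with \cref{lem:disjoint}, I may write $\bu-\bv=\be_1+\dots+\be_d$ as a sum of $d$ $b^{\le}$-bursts (allowing zero bursts if fewer are genuinely needed).

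The heart of the argument is a ``midpoint plus perturbation'' construction that exhibits many common elements of $\Ball_{t,b}(\bu)$ and $\Ball_{t,b}(\bv)$. Set
\[
\bz=\bv+\be_1+\dots+\be_{\ceil{d/2}},
\]
so that $\bz-\bv$ is a sum of $\ceil{d/2}$ $b^{\le}$-bursts while $\bu-\bz=\be_{\ceil{d/2}+1}+\dots+\be_d$ is a sum of $\floor{d/2}$ of them; hence $d_b(\bz,\bv)\le\ceil{d/2}$ and $d_b(\bz,\bu)\le\floor{d/2}\le\ceil{d/2}$. Now take \emph{any} $\bw\in\Sigma_q^n$ with $\wt_b(\bw)\le t-\ceil{d/2}$. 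Since $d_b$ is a metric, the triangle inequality yields
\[
d_b(\bz+\bw,\bu)\le\wt_b(\bw)+d_b(\bz,\bu)\le(t-\ceil{d/2})+\ceil{d/2}=t,
\]
and likewise $d_b(\bz+\bw,\bv)\le t$, so $\bz+\bw\in\Ball_{t,b}(\bu)\cap\Ball_{t,b}(\bv)$. Invoking the metric here (instead of tracking how the perturbing bursts overlap with $\be_1,\dots,\be_d$) is what keeps this step clean; I expect this to be the only genuinely conceptual point, the remainder being counting.

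Finally I would count. The map $\bw\mapsto\bz+\bw$ is injective, so
\[
\abs*{\Ball_{t,b}(\bu)\cap\Ball_{t,b}(\bv)}\ge\abs*{\set*{\bw:\wt_b(\bw)\le t-\ceil{d/2}}}=\abs*{\Ball_{t-\ceil{d/2},b}(\mathbf{0})}=\Theta\parenv*{n^{\,t-\ceil{d/2}}},
\]
where the last estimate is the lower-bound half of \cref{thm:ball size} (note $t-\ceil{d/2}\ge 1$ since $\ceil{d/2}\le t-1$). From $d\le 2(t-s-1)$ we get $\ceil{d/2}\le t-s-1$, hence $t-\ceil{d/2}\ge s+1$ and the displayed quantity is $\Omega(n^{s+1})$. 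By~\eqref{eq:nc}, $N_t(\cC_n)\ge\abs{\Ball_{t,b}(\bu)\cap\Ball_{t,b}(\bv)}+1=\Omega(n^{s+1})$ along the subsequence, which is the desired contradiction. Therefore, for all sufficiently large $n$, every two distinct codewords of $\cC_n$ are at $b^{\le}$-burst distance at least $2(t-s-1)+1$, i.e., $\cC_n$ is a $(t-s-1,b)$-burst-correcting code. The only nontrivial ingredient is \cref{thm:ball size}, which is already available.
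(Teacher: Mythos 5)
Your proof is correct, and it reaches the same contradiction as the paper (two codewords at $b^{\le}$-burst distance at most $2(t-s-1)$ force $\abs{\Ball_{t,b}(\bu)\cap\Ball_{t,b}(\bv)}=\Omega(n^{s+1})$, violating~\eqref{eq:nc}), but the way you lower-bound the intersection is genuinely different. The paper works entirely inside the disjoint-burst decomposition: it writes $\bx-\by=\bu-\bv$ with $i+j\le 2(t-s-1)$ pairwise disjoint bursts, finds by averaging a gap interval of length $\Omega(n)$ free of those bursts, and inserts $s+1$ additional disjoint bursts there, common to both error patterns; counting the placements gives $\Omega(n^{s+1})$ pairs and hence that many common points. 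This requires the disjointness bookkeeping of \cref{lem:disjoint} and the averaging step. You instead take the metric midpoint $\bz$ between $\bu$ and $\bv$ and observe that the entire translated ball $\bz+\Ball_{t-\ceil{d/2},b}(\mathbf{0})$ lies in the intersection by the triangle inequality and translation invariance of $d_b$ — no disjointness tracking at all, with the count delegated to the lower bound of \cref{thm:ball size}. Your route is shorter and in fact proves something slightly stronger: the intersection is $\Omega(n^{t-\ceil{d_b(\bu,\bv)/2}})$, a Levenshtein-type lower bound matching the upper bound of \cref{cor:Intersection}. The paper's explicit construction buys nothing extra here; its only advantage is that it stays self-contained within the combinatorial framework used elsewhere in that section. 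All the ingredients you invoke (translation invariance, the fact that $d_b$ is a metric, the exact decomposition of a distance-$d$ difference into $d$ bursts, and the ball-size estimate for radius $t-\ceil{d/2}\ge s+1\ge 1$) are available in the paper, so there is no gap.
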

\begin{proof}[Proof~\cref{lem:forbidden}]
Let us define
\[
\cC_n^\Delta = \set*{\bx - \by: \bx,\by \in \cC_n}.
\]
We first contend that if $N_t(\cC_n)=o(n^{s+1})$, then $\cC_n^\Delta \cap B_{\leq 2(t-s-1), \leq b}=\emptyset$. 

Assume to the contrary that there exists some \( \bw \in \cC_n^\Delta \cap B_{\leq 2(t-s-1), \leq b} \). Then, we have \( \bw = \bx - \by \) for some two codewords \( \bx, \by \in \cC_n \). Since \( \bw \in B_{\leq 2(t-s-1), \leq b} \), we can also express it as \( \bw = \bu - \bv \), where  
\[
\bu = \be_1 + \cdots + \be_i, \quad \bv = \be_1^\prime + \cdots + \be_j^\prime,
\]  
with \( i, j \leq t-s-1 \), and the set \( \set{ \be_1, \dots, \be_i, \be_1^\prime, \dots, \be_j^\prime } \) consists of pairwise disjoint $b^\leq$-bursts. 

The elements of $[n]$ \emph{outside} of these $i+j$ intervals, may be partitioned into at most $i+j$ pair-wise disjoint cyclic intervals. By simple averaging, at least one such cyclic interval, denoted as \( I = [f, g] \) (assuming without loss of generality that \( f < g \)), has a length of at least  
\[
\frac{n - (i+j)b}{i+j} \geq \frac{n}{2(t-s-1)} - b.
\]

Taking this interval $I$, let us now attempt to place $s+1$ more $b^{\leq}$-bursts inside of it. Consider arbitrary $s+1$ disjoint $b^{\leq}$-bursts, $\bz_1,\ldots,\bz_{s+1}$, contained within $I$. Let $I_k = I(\bz_k)$ for $k\in[s+1]$. By considering merely the placement of these bursts (and ignoring the error values), the number of ways to choose $\bz_1,\ldots,\bz_{s+1}$ is at least the number of solutions $w_1,\ldots,w_{s+1}$ to
\[
    f \leq w_1 < w_1 +b \leq w_2 < w_2+b \leq \ldots \leq w_{s+1} < w_{s+1}+b \leq g.
\]
This number is exactly $\binom{\abs{I} - (b-1)(s+1)}{s+1} = \Omega(\abs{I}^{s+1})=\Omega(n^{s+1})$.
    
Let $S$ consist of all possible pairs $(\bu',\bv')$, where 
\begin{align*}
    &\bu' = \be_1 + \cdots + \be_i + \bz_1 + \cdots + \bz_{s+1} \in B_{i+s+1,\leq b} \subseteq B_{\leq t,\leq b}\\
    &\bv' = \be_1^\prime + \cdots + \be_j^\prime + \bz_1 + \cdots + \bz_{s+1} \in B_{j+s+1,\leq b} \subseteq B_{\leq t,\leq b}.
    \end{align*}
   Note that for any pair $(\bu',\bv')$, we have $\bx - \by = \bw = \bu' - \bv'$. Therefore 
    \[
    \abs*{\Ball_{t,b}(\bx) \cap \Ball_{t,b}(\by) } \geq \abs{S} = \Omega(n^{s+1}),
    \]
which contradicts the reconstruction degree $N_t(\cC_n)=o(n^{s+1})$.

We therefore have for all sufficiently large $n$, $\cC_n^\Delta \cap B_{\leq 2(t-s-1), \leq b}=\emptyset$. Now, assume to the contrary that for these $n$, $\cC_n$ is not a $(t-s-1,b)$-burst-correcting code. Then there exist $\bx,\by\in\cC_n$, and two errors pattern $\bu,\bv\in B_{\leq t-s-1,\leq b}$, such that $\bx+\bu=\by+\bv$. But that implies $\bx-\by=\bv-\bu$. Note now that $\bx-\by\in\cC_n^\Delta$, and $\bv-\bu\in B_{\leq 2(t-s-1),\leq b}$, contradicting the fact that $\cC_n^\Delta \cap B_{\leq 2(t-s-1), \leq b}=\emptyset$.
\end{proof}

Therefore, we have the following lower bound.
\begin{theorem}\label{thm:lower}
Let $q,b,t$ be constant positive integers, $q \geq 2$. Then, for all $0\leq s\leq t-1$,
\[
\ra(\ch(t,b),s) \geq  \ra(\ch(t-s-1,b),-1)\geq t-s-1.
\]
\end{theorem}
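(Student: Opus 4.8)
\textbf{Proof proposal for Theorem~\ref{thm:lower}.}

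The plan is to chain together two inequalities. The second inequality, $\ra(\ch(t-s-1,b),-1)\geq t-s-1$, is essentially the statement of Corollary~\ref{cor:Redu} applied with parameter $t-s-1$ in place of $t$: by the sphere-packing bound, any $(t-s-1,b)$-burst-correcting code of length $n$ has redundancy at least $\log_q\abs{\Ball_{t-s-1,b}(\bx)} = (t-s-1)\log_q n + \Omega(1)$ by Theorem~\ref{thm:ball size}, so dividing by $\log_q n$ and taking $\liminf$ gives exactly $\ra(\ch(t-s-1,b),-1)\geq t-s-1$. (One should note the degenerate edge case $s=t-1$, where $t-s-1=0$ and the bound reads $\ra\geq 0$, which is trivially true since redundancy is nonnegative.)

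The first inequality, $\ra(\ch(t,b),s)\geq \ra(\ch(t-s-1,b),-1)$, is where Lemma~\ref{lem:forbidden} does the work, and the argument mirrors the ``other direction'' of the proof of Theorem~\ref{thm:MARR for sub}. First I would unwind the definition~\eqref{eq:defrasy}: fix any sequence of reconstruction codes $\cC_n\subseteq\Sigma_q^n$ over $\ch(t,b)$ and any function $N(n)$ with $\Ord_b(N(n))=s$ such that $\cC_n$ uniquely reconstructs with $N(n)$ reads, i.e., $N_t(\cC_n)\leq N(n)$. By Definition~\ref{def:ord} and Theorem~\ref{thm:ball size}, $\Ord_b(N(n))=s$ forces $N(n)=o(\abs{\Ball_{s+1,b}(\mathbf 0)})=o(n^{s+1})$, hence $N_t(\cC_n)=o(n^{s+1})$. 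Lemma~\ref{lem:forbidden} then applies directly: for all sufficiently large $n$, $\cC_n$ is a $(t-s-1,b)$-burst-correcting code. Discarding the finitely many small-$n$ codes from the sequence (which does not affect the $\liminf$), every $\cC_n$ is now a $(t-s-1,b)$-burst-correcting code, so $r(\cC_n)\geq r^*(n,\ch(t-s-1,b),1)_q$ for each $n$. Dividing by $\log_q n$ and taking $\liminf$ yields
\[
\liminf_{n\to\infty}\frac{r(\cC_n)}{\log_q n}\;\geq\;\liminf_{n\to\infty}\frac{r^*(n,\ch(t-s-1,b),1)_q}{\log_q n}\;=\;\ra(\ch(t-s-1,b),-1).
\]
Since this holds for every admissible sequence $\cC_n$ and every admissible $N(n)$, taking the infimum over all of them on the left-hand side gives $\ra(\ch(t,b),s)\geq\ra(\ch(t-s-1,b),-1)$, as desired.

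I do not anticipate a genuine obstacle here, since the heavy lifting was already done in Lemma~\ref{lem:forbidden} (whose proof is the delicate part—constructing $\Omega(n^{s+1})$ common ball elements by packing $s+1$ extra disjoint bursts into a long gap) and in Theorem~\ref{thm:ball size}. The only points requiring a little care are bookkeeping ones: verifying that ``$\Ord_b(N(n))=s$'' indeed delivers the hypothesis ``$N_t(\cC_n)=o(n^{s+1})$'' of Lemma~\ref{lem:forbidden} via the ball-size estimate, and checking that passing from ``for all sufficiently large $n$'' to ``for all $n$'' by truncating the sequence is harmless for the $\liminf$. If one wants to be fully careful about the $\inf$ in~\eqref{eq:defrasy}, one notes that the $\inf$ ranges over all functions $N(n)$ of $b$-order $s$, and the inequality just established holds uniformly in that choice, so it descends to the infimum.
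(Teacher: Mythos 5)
Your proposal is correct and follows essentially the same route as the paper, which proves Theorem~\ref{thm:lower} by combining Lemma~\ref{lem:forbidden} with Corollary~\ref{cor:Redu} via the same liminf/infimum bookkeeping as in the second direction of the proof of Theorem~\ref{thm:MARR for sub}. Your added care about truncating the sequence and about the degenerate case $s=t-1$ is fine but not a departure from the paper's argument.
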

\begin{proof}[Proof of~\cref{thm:lower}]
Simply combine Lemma~\ref{lem:forbidden} with Corollary~\ref{cor:Redu}, using the same logic as the proof of Theorem~\ref{thm:MARR for sub}.
\end{proof}

\subsection{An upper bound on  \texorpdfstring{$\ra(\ch(t,b),s)$}{}}

To construct a reconstruction code $\cC \subseteq \Sigma_q^n$ with a reconstruction degree of $o(n^{s+1})$, it is necessary for $\cC$ to be a $(t-s-1,b)$-burst-error correcting code. In this subsection, we establish that this requirement is both necessary and sufficient. In other words, any $(t-s-1,b)$-burst-error correcting code $\cC \subseteq \Sigma_q^n$ inherently possesses a reconstruction degree of $O(n^s)$.

\begin{theorem}\label{thm:construction degree}
Let $q,b,t$ be constant positive integers, with $q \geq 2$. If $\cC \subseteq \Sigma_q^n$ is a $(t-s-1,b)$-burst-correcting code, $0\leq s \leq t-1$, then the reconstruction degree of $\cC$ satisfies
\[
     N(\cC) \leq (t+1)^2 f_2 n^s = O(n^s),
\]
where $f_2 = \frac{t^{t-s} 2^{t-s+1} b^{2(t-s)} q^{bt}}{s!}$.
\end{theorem}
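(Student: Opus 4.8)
The plan is to bound the reconstruction degree $N(\cC)$ by bounding $\max_{\bx\neq\by\in\cC}\abs{\Ball_{t,b}(\bx)\cap\Ball_{t,b}(\by)}$ and then invoking~\eqref{eq:nc}. Fix two distinct codewords $\bx,\by\in\cC$ and set $\bw=\bx-\by$. Since $\cC$ is a $(t-s-1,b)$-burst-correcting code, we have $d_b(\bx,\by)\geq 2(t-s-1)+1 = 2(t-s)-1$, so $\bw\notin B_{\leq 2(t-s-1),\leq b}$; write $d=d_b(\bx,\by)\geq 2(t-s)-1$. Any element $\bz$ of the intersection satisfies $\bz=\bx+\be=\by+\be'$ with $\be,\be'\in B_{\leq t,\leq b}$, hence $\bw=\be'-\be$. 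So the intersection size is at most the number of pairs $(\be,\be')$ of error patterns in $B_{\leq t,\leq b}$ with $\be'-\be=\bw$, counted up to the equivalence that gives the same $\bz$ (equivalently, I can just bound the number of valid $\be$, since $\be$ determines $\bz$).

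The key structural step is: if $\be\in B_{\leq t,\leq b}$ and $\be+\bw\in B_{\leq t,\leq b}$ where $\wt_b(\bw)=d$, then $\be$ must ``cover'' most of $\bw$. More precisely, write $\be=\be_1+\dots+\be_i$ and $\be+\bw=\be_1'+\dots+\be_j'$ as disjoint $b^\leq$-burst decompositions with $i,j\leq t$. Then $\bw=(\be_1'+\dots+\be_j')-(\be_1+\dots+\be_i)$ is expressible using at most $i+j\leq 2t$ bursts, so by Lemma~\ref{lem:disjoint}, $\supp(\bw)$ is contained in a union of at most $\min\{i+j, \text{something}\}$ disjoint cyclic intervals; but more usefully, since $d_b(\bw,\mathbf 0)=d$, at least $d$ of the bursts among $\be_1,\dots,\be_i$ or their interaction with $\bw$ are ``forced.'' The cleanest route: the $b^\leq$-bursts of $\be$ and of $\be+\bw$ together occupy at most $i+j$ cyclic intervals, and outside their union $\bw$ is zero. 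The number of ``free'' bursts in $\be$ — those disjoint from $I(\bw)$'s covering intervals — is at most $i - (d - (\text{bursts needed from }\be'\text{ side}))$; after bounding, one shows $\be$ has at most $t-(t-s)=s$ bursts that are placed freely (in $\Omega(n)$ possible positions each), while the remaining $\geq t-s$ bursts are confined to $O(1)$-length windows determined by $\bw$ (at most $2(2t-d)$ such windows, of length at most $2tb$, since $d\geq 2(t-s)-1$ forces $2t-d\leq 2s+1$, bounded). Thus the number of choices for $\be$ is at most $\binom{n}{s}\cdot(\text{const})$: choose $\leq s$ free burst positions ($\binom{n}{s}$ ways), their lengths ($b^s$), their values ($q^{bs}$), and for each of the $\leq t-s$ confined bursts a position within a constant-size window and a value — all $O(1)$.

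The main obstacle is making the ``at most $s$ free bursts'' counting rigorous: one must carefully argue, via the cyclic-interval gap bookkeeping already developed in the proof of Theorem~\ref{thm:ball size} and Lemma~\ref{lem:disjoint}, that whenever a burst of $\be$ lies far from $\supp(\bw)$, it must be exactly cancelled by a matching burst of $\be+\bw$, so the pair $(\be,\be+\bw)$ ``wastes'' two bursts there; since $\be+\bw$ must also produce all of $\bw$ (costing effectively $d$ bursts of the combined budget $i+j\leq 2t$), the number of far-away bursts is at most $\lfloor(2t-d)/1\rfloor\cdot(\text{const})$, and since $d\geq 2(t-s)-1$ this is $\leq 2s+1$; combined with near bursts being confined, only $O(\binom{n}{s})$ total configurations survive. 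Once the count $\abs{\Ball_{t,b}(\bx)\cap\Ball_{t,b}(\by)}\leq c\binom{n}{s}$ is established with an explicit constant $c=(t+1)^2 f_2/\text{(something)}$, plugging into $N(\cC)=\max_{\bu\neq\bv}\abs{\Ball_{t,b}(\bu)\cap\Ball_{t,b}(\bv)}+1$ and absorbing $\binom{n}{s}\leq n^s/s!$ and the additive $1$ into the stated constant gives $N(\cC)\leq (t+1)^2 f_2 n^s$. I would double-check the bookkeeping of the constant $f_2=\frac{t^{t-s}2^{t-s+1}b^{2(t-s)}q^{bt}}{s!}$ against: $t-s$ confined bursts, each with $\leq 2\cdot(2t-d+1)\leq$ roughly $2t$ candidate windows of size $\leq 2tb$ hence $\leq 4t^2 b$ position choices — here the factors $t^{t-s}$, $2^{t-s+1}$, $b^{2(t-s)}$ should emerge — times $q^{bt}$ for all burst values, times $1/s!$ from $\binom{n}{s}$, with the leading $(t+1)^2$ absorbing the choices of $i,j\in\{0,\dots,t\}$.
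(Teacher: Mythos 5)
Your overall strategy is the paper's: reduce to counting error patterns $\be\in B_{\leq t,\leq b}$ with $\be+\bw\in B_{\leq t,\leq b}$ for $\bw=\bx-\by$, split the bursts of $\be$ into ``free'' ones (disjoint from $\supp(\bw)$, contributing $\Theta(n)$ choices each) and ``confined'' ones (anchored to the constant-size set $\supp(\bw)$), and show only $\binom{n}{s}\cdot O(1)$ configurations survive. But the central claim you rely on --- that $\be$ itself always has at most $s$ free bursts --- is false for $b>1$, and the budget accounting you offer to justify it does not hold. Take $t=2$, $s=1$, $b=2$, $q=2$, so $\cC$ need only be a $(0,2)$-burst-correcting code and $\wt_b(\bw)\geq 1$. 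Let $\supp(\bw)=\set{2,5}$ and let $\balpha=\be$ have support $\set{1,4}$: then $\be\in B_{2,\leq 2}$ has two bursts, both disjoint from $\supp(\bw)$, yet $\be+\bw$ has support $\set{1,2,4,5}=[1,2]\cup[4,5]\in B_{2,\leq 2}$. Here $r=2>s=1$ free bursts, and your inequality ``$i+j\geq 2r+d$'' fails ($4<2\cdot 2+2$) precisely because each burst of $\be+\bw$ does double duty, simultaneously covering a free burst of $\be$ and part of $\supp(\bw)$. (Your own arithmetic also wavers between ``$\leq s$'' and ``$\leq 2s+1$'' free bursts; the latter would only give $O(n^{2s+1})$.)

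The paper escapes this with an asymmetric statement (its Claim~\ref{claim}): for every valid pair $(\balpha,\bbeta)$ with $\bbeta-\balpha=\bw$, \emph{either} $\balpha$ has at least $i-s$ bursts meeting $\supp(\bw)$ \emph{or} $\bbeta$ has at least $j-s$ bursts meeting $\supp(\bw)$ --- proved by a case analysis on how many bursts of each side are contained in the support of the other, with the minimum-distance hypothesis invoked only in the hardest case. Since $\bbeta=\balpha+\bw$ is determined by $\balpha$ and vice versa, it then suffices to bound the number of ``sufficient'' patterns on each side separately, which is where the $\binom{n}{s}\cdot O(1)$ count (via a double-counting/bipartite-graph argument over $\binom{\supp(\bw)}{i-s}$, using $\abs{\supp(\bw)}\leq 2tb$) legitimately applies. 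In your counterexample it is $\bbeta$, not $\balpha$, whose free bursts are scarce, and counting the $\bbeta$'s is what pins down the seemingly unconstrained $\balpha$'s. Without this either-or step (or some substitute showing that patterns $\be$ with more than $s$ far-away free bursts are themselves only $O(n^s)$ in number), the proof is incomplete.
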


\begin{proof}[Proof of~\cref{thm:construction degree}]
Recall~\eqref{eq:nc},
\[
N(\cC) = \max_{\bx\neq \by \in \cC} \abs*{\Ball(\bx) \cap \Ball(\by)} + 1.
\]  
Thus, it suffices to establish an upper bound on the size of the intersection of two error balls centered at distinct codewords, $\bx,\by\in\cC$. Define
\[
M_{\leq t} = \set*{ (\balpha, \bbeta) : \balpha, \bbeta \in B_{\leq t, \leq b}, \, \bx - \by = \bbeta - \balpha }.
\]  
Since
\[
\Ball_{t,b}(\bx) = \set*{ \bx + \be : \be \in B_{\leq t, \leq b} }, \quad \Ball_{t,b}(\by) = \set*{ \by + \be : \be \in B_{\leq t, \leq b} },
\]  
we have
\[
\abs*{ \Ball_{t,b}(\bx) \cap \Ball_{t,b}(\by) } = \abs*{M_{\leq t}}.
\]

For each \( 0 \leq i, j \leq t \), define  
\[
M_{i,j} = \set*{ (\balpha, \bbeta) : \balpha \in B_{i, \leq b}, \, \bbeta \in B_{j, \leq b}, \, \bx - \by = \bbeta - \balpha }.
\]  
Then, we obtain the bound  
\[
\abs*{M_{\leq t}} \leq \sum_{0 \leq i, j \leq t} \abs*{M_{i,j}}.
\]
Since $\cC$ is a $(t-s-1,b)$-burst-correcting code, necessarily $M_{i,j} = \emptyset$ for any $i+j \leq 2(t-s-1)$. Otherwise, there would exist a pattern of $i$ $b^{\leq}$-bursts $\balpha\in B_{i,\leq b}$, and a pattern of $j$ $b^{\leq}$-bursts $\bbeta\in B_{j,\leq b}$, such that $\bx-\by=\bbeta-\balpha$, implying that $d_b(\bx,\by)\leq 2(t-s-1)$, and contradicting the minimum distance of the code.

For $i \leq s$, we have 
\[
\abs*{M_{i,j}} \leq \abs*{\set*{\balpha : \balpha \in B_{i,\leq b}}} \leq \abs*{B_{s,\leq b}} \leq (1+o(1))  (q^{b} - 1)^s \binom{n}{s}\leq (1+o(1)) f_1 n^s,
\]
where $f_1 = \frac{(q^{b} - 1)^s }{s!}$, and we take advantage of Theorem~\ref{thm:ball size} in the third inequality. Applying the same argument, we obtain the same bound on $\abs{M_{i,j}}$ for any $j \leq s$.

We now consider the remaining cases of $i,j\geq s+1$. At this point we would like to introduce the notion of a \emph{sufficient burst pattern} (w.r.t. $\bx-\by$). Let $\bgamma\in B_{\ell,\leq b}$ be a burst pattern. We say $\bgamma$ is $\ell$-sufficient for given integer $s$, if there exist pair-wise disjoint $b^{\leq}$-bursts, $\bc_1,\dots,\bc_\ell$, $\bgamma=\bc_1+\dots+\bc_\ell$, such that $\supp(\bc_k)\cap\supp(\bx-\by)\neq\emptyset$, for all $k\in[\ell-s]$. Namely, if we can find $\ell-s$ positions $p_k\in\supp{\bc_k}$, for all $k\in[\ell-s]$, such that $p_k\in\supp(\bx-\by)$. We call $p_1,\dots,p_{\ell-s}$ \emph{sufficient representatives of $\bgamma$} (w.r.t. to the partition $\bc_1,\dots,\bc_\ell$).

For any $(\balpha,\bbeta) \in M_{i,j}$, by definition, we can write $\balpha$ and $\bbeta$ as,
\begin{align*}
     \balpha &= \ba_1 + \cdots + \ba_i, & 
     \bbeta  &= \bb_1 + \cdots + \bb_j,
\end{align*}
where $\ba_1,\dots,\ba_i$ are pairwise disjoint $b^{\leq}$-bursts, and similarly for $\bb_1,\dots,\bb_j$.
     
Let $m_1$ be the number of $b^\leq$-bursts from $\set{\ba_1,\ldots, \ba_i}$ whose support is contained in $\supp(\bbeta)$, and let $m_2$ be the number of $b^\leq$-bursts from $\set{\bb_1,\ldots, \bb_j}$ whose support is contained in $\supp(\balpha)$. Without loss of generality, we assume 
\begin{align*}
         {\supp(\ba_{1}),\ldots,\supp(\ba_{m_1})} &\subseteq \supp(\bbeta),\\
         {\supp(\bb_{1}),\ldots,\supp(\bb_{m_2})} &\subseteq \supp(\balpha).
\end{align*}  

\begin{claim}\label{claim}
$\balpha$ is $i$-sufficient, or $\bbeta$ is $j$-sufficient.
\end{claim}

\begin{poc}
We divide the proof of claim according to the values of $m_1$ and $m_2$.

\textbf{Case 1:} If $m_1 \leq s$, then $\balpha$ contains $i-m_1 \geq i-s$ disjoint $b^\leq$-bursts, $\ba_{m_1+1}, \ldots,\ba_{i}$, none of whose supports is contained in $\supp(\bbeta)$. Therefore we can choose $i-m_1$ points $p_k \in \supp(\ba_{m_1+k}) \setminus \supp(\bbeta) \subseteq \supp(\bbeta - \balpha) = \supp(\bx-\by)$ for $k\in [i-m_1]$. 

\textbf{Case 2:} If $m_2 \leq s$, we can apply the identical argument as above.

\textbf{Case 3:} If $m_1,m_2 \geq s+1$, let us now denote by $r$ the number of $b^\leq$-bursts from $\set{\ba_1,\ldots, \ba_i}$ whose supports are disjoint from $\supp(\bx - \by)$. Without loss of generality, assume $\supp(\ba_\ell)\cap \supp(\bx-\by) = \emptyset$ for all $\ell \in [r]$. We claim that $r \leq s$.

Assume to the contrary, that $r\geq s+1$. Since the supports of $\bb_1,\ldots,\bb_{m_2}$ are contained in $\balpha$, we have 
    \[
    \supp\parenv*{\sum_{k=1}^{m_2}\bb_k - \balpha} \subseteq \supp(\balpha).
    \]
Define $\supp^*(\bb_{\ell}) = \supp(\bb_{\ell}) \setminus \supp(\balpha)$ for $\ell\in [m_2+1,j]$, obviously we have 
    \[
    \bigcup_{\ell = m_2+1}^j \supp^*(\bb_\ell) \cap \supp(\balpha) = \emptyset.
    \]
     Moreover, we can see that
     \[
    \supp\parenv{\bx - \by} = \supp\parenv{\bbeta - \balpha} \subseteq  \bigcup_{\ell = m_2+1}^j \supp(\bb_\ell) \cup \supp(\balpha) =  \bigcup_{\ell = m_2+1}^j \supp^*(\bb_\ell) \cup \bigcup_{k=1}^i \supp(\ba_k).
    \]
By our assumption, $ \bigcup_{k=1}^r \supp(\ba_k) \cap \supp(\bx-\by) = \emptyset$, which implies that
     \begin{align*}
    \supp\parenv{\bx - \by}  &\subseteq  \bigcup_{\ell = m_2+1}^j \supp^*(\bb_\ell) \cup \bigcup_{k=1}^i \supp(\ba_k) \setminus \bigcup_{k=1}^r \supp(\ba_k) \\
    &= \bigcup_{\ell = m_2+1}^j \supp^*(\bb_\ell) \cup \bigcup_{k=r+1}^i \supp(\ba_k) \subseteq B_{\leq i-r+j-m_2, \leq b}.
    \end{align*}
But now, since $i-r+j-m_2 \leq t - (s+1) + t - (s+1) = 2(t-s-1)$, we have $\supp\parenv{\bx - \by} \in B_{\leq 2(t-s-1), \leq b}$, which is a contradiction to $\cC$ being a $(t-s-1,b)$-burst-correcting code.

It follows that indeed $r\leq s$. Hence, $\supp(\bx-\by)$ contains $i-r \geq i-s$ number of positions $p_1,\ldots,p_{i-r}$, which belong to $\supp(\ba_{r+1}), \ldots,\supp(\ba_i)$, respectively.
\end{poc}

We continue the proof of the main claim. Define the set of sufficient burst patterns as
\[
S_\ell = \set*{ \bgamma\in B_{\ell,\leq b}: \text{$\bgamma$ is $\ell$-sufficient}}.
\]
For any \( (\balpha, \bbeta) \in M_{i,j} \), Claim~\ref{claim} implies that at least one of the conditions \( \balpha \in S_i \) or \( \bbeta \in S_j  \) must hold. Moreover, any two distinct pairs \( (\balpha, \bbeta), (\balpha^\prime, \bbeta^\prime) \in M_{i,j} \) satisfy \( \balpha \neq \balpha^\prime \) and \( \bbeta \neq \bbeta^\prime \). Consequently,  
\[
\abs*{M_{i,j}} \leq \abs*{S_i}+\abs*{S_j}.
\]  

We now upper bound $\abs{S_i}$. A similar argument will bound $\abs{S_j}$. Since \( \abs{\supp(\bx-\by)} = \abs{\supp(\bbeta - \balpha)} \leq 2tb \), it follows that  
\[
\abs*{\binom{\supp(\bx-\by)}{i-s} } \leq (2tb)^{i-s}.
\]  
We construct an auxiliary bipartite graph \( G = S_i \times \binom{\supp(\bx-\by)}{i-s} \), where the vertex set is given by  
\[
V(G) = S_i \cup \binom{\supp(\bx-\by)}{i-s}.
\]  
A pair of vertices \( \balpha \in S_i \) and \( \set{ p_1, \dots, p_{i-s} } \in \binom{\supp(\bx-\by)}{i-s} \) are adjacent if and only if \( p_1, \dots, p_{i-s} \) are sufficient representatives of $\balpha$ (w.r.t. to the partition $\ba_1,\dots,\ba_i$).

We then double count the number of edges in $G$. On one hand, we can see every $\balpha\in S_i$ is adjacent to at least one of the elements in $\binom{\supp(\bx-\by)}{i-s}$, which yields $\abs{E(G)} \geq \abs{S_i}$. 

On the other hand, for any \( \set{ p_1, \ldots, p_{i-s} } \in \binom{\supp(\bx-\by)}{i-s} \), let \( n_{\set{p_1, \ldots, p_{i-s}}} \) denote the number of elements \( \balpha \in S_i \) such that \( \balpha \) is adjacent to \( \set{ p_1, \ldots, p_{i-s} } \). By definition, each \( \balpha = \ba_1 + \cdots + \ba_i \in B_{i,\leq b} \) has \( i-s \) disjoint \( b^\leq \)-bursts \( \set{ \ba_1, \ldots, \ba_{i-s} } \), where each \( p_k \in \supp(\ba_k) \) for \( k \in [i-s] \). Continuing our bounding, given $p_k\in\supp(\ba_k)$, the number of ways completing $\ba_k$ is crudely upper bounded by $b$ choices of the relative position of $p_k$ in the $\ba_k$, and then no more than $q^b$ ways of filling the values of the burst, i.e., no more than $bq^b$ ways. Continuing in this manner for all $k\in[i-s]$, $\ba_1,\dots,\ba_{i-s}$ can be chosen from no more than $(bq^b)^{i-s}$ options. Moreover, by Theorem~\ref{thm:ball size}, the number of ways to choose the remaining \( s \) disjoint \( b^\leq \)-bursts, \( \set{ \be_{i-s+1}, \ldots, \be_i } \), is at most $\abs{B_{s,\leq b}} \leq \binom{n}{s} (q^b - 1)^s$. Thus, we obtain  
\[
n_{\set{p_1, \ldots, p_{i-s}}} \leq \binom{n}{s} (b q^b)^{i-s} (q^b - 1)^s \leq g_1 n^s,
\]  
where $g_1 = \frac{q^{bi} b^{i}}{s! b^{s}}$.

In total, the number of edges in $G$ can be upper bounded as follows:
\[
\abs*{S_i}\leq \abs*{E(G)} = \sum_{\set{p_1,\ldots,p_{i-s}} \in \binom{\supp(\bx-\by)}{i-s}} n_{\set{p_1,\ldots,p_{i-s}}} \leq \sum_{\set{p_1,\ldots,p_{i-s}} \in \binom{\supp(\bx-\by)}{i-s}} g_1 n^s \leq g_1 (2tb)^{i-s} n^s.
\]
A similar argument gives us
\[
\abs*{S_j}\leq  g_2 (2tb)^{j-s} n^s,
\]
where $g_2=\frac{q^{bj}b^j}{s!b^s}$. Then, we have
\[
\abs*{M_{i,j}} \leq \abs*{S_i} + \abs*{S_j} \leq g_1 (2tb)^{i-s} n^s + g_2 (2tb)^{j-s} n^s = \frac{(2tb)^{i} b^{i} q^{bi} + (2tb)^{j} b^{j} q^{bj}}{s!(2tb)^s b^{s}} \leq f_2 n^s,
\]
where $f_2 = \frac{t^{t-s} 2^{t-s+1} b^{2(t-s)} q^{bt}}{s!}$.

By combining all the cases we obtain,
\[
\abs*{M_{i,j}} \leq \begin{cases}
    0, &\text{ if } i+j \leq 2(t-s-1),\\
    f_1 n^s, &\text{ if } i \leq s \text{ or } j \leq s,\\
    f_2 n^s, &\text{ if } i,j \geq s+1.
\end{cases}
\]
Since $f_2 \geq f_1$, we can unify the bound, and say that for all $i+j\geq 2(t-s-1)+1$,
\[
\abs*{M_{\leq t}} \leq \sum_{0\leq i,j\leq t} \abs*{M_{i,j}} \leq (t+1)^2 f_2 n^s - 1.
\]
Finally, the reconstruction degree of $\cC$ satisfies
\[
N(\cC) = 1 + \max_{\bx\neq \by \in \cC} \set{M_{\leq t}} \leq (t+1)^2 f_2 n^s.
\]
\end{proof}

One corollary of Theorem~\ref{thm:construction degree} gives us the order of the intersection of balls.

\begin{cor}\label{cor:Intersection}
Let $q,b,t,k$ be constant positive integers, with $q \geq 2$. If $\bx,\by\in\Sigma_q^n$ are such that $d_b(\bx,\by)\geq 2k-1$, then
\[
    \abs*{\Ball_{t,b}(\bx) \cap \Ball_{t,b}(\by)} = O(n^{t-k}).
\]
\end{cor}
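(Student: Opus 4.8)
The plan is to obtain this as a direct consequence of Theorem~\ref{thm:construction degree}, applied to a two-point code. First I would separate out the degenerate range $k \geq t+1$. In that case $d_b(\bx,\by) \geq 2k-1 \geq 2t+1$, so by the triangle inequality for the metric $d_b$ no vector can lie within burst distance $t$ of both $\bx$ and $\by$; hence $\Ball_{t,b}(\bx) \cap \Ball_{t,b}(\by) = \emptyset$ and the claimed bound $O(n^{t-k})$ holds vacuously, the target exponent $t-k$ being negative.

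For the main range $1 \leq k \leq t$, the idea is to set $s = t-k$, so that $0 \leq s \leq t-1$, and to consider the code $\cC = \set{\bx,\by}$. The hypothesis $d_b(\bx,\by) \geq 2k-1 = 2(t-s-1)+1$ says precisely that $\cC$ is a $(t-s-1,b)$-burst-correcting code: its only pair of distinct codewords is at burst distance at least $2(t-s-1)+1$. Theorem~\ref{thm:construction degree} then applies and yields $N(\cC) \leq (t+1)^2 f_2 n^s$, where $f_2$ is a constant once $q,b,t$ are fixed, since $s = t-k$ is itself determined by $t$ and $k$.

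It then remains to translate $N(\cC)$ back into an intersection size. Since $\cC$ has exactly two codewords, the reconstruction-degree identity~\eqref{eq:nc} reduces to $N(\cC) = \abs{\Ball_{t,b}(\bx) \cap \Ball_{t,b}(\by)} + 1$, whence $\abs{\Ball_{t,b}(\bx) \cap \Ball_{t,b}(\by)} \leq (t+1)^2 f_2 n^{t-k} = O(n^{t-k})$, as desired. Combining the two ranges of $k$ completes the argument.

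There is no genuinely hard step here; the only point requiring care is parameter bookkeeping — choosing $s = t-k$ so that the minimum-burst-distance hypothesis on $\cC$ matches exactly the ``corrects $t-s-1$ bursts'' condition demanded by Theorem~\ref{thm:construction degree}, and isolating the case $k > t$ in which the intersection is empty and the target exponent is negative, so that nothing needs to be estimated there.
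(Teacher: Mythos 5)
Your proof is correct and is essentially the paper's own argument: take the two-point code $\cC=\set{\bx,\by}$, observe that the distance hypothesis makes it a $(t-s-1,b)$-burst-correcting code with $s=t-k$, and read off the intersection size from the reconstruction degree bound of Theorem~\ref{thm:construction degree}. Your explicit handling of the degenerate range $k\geq t+1$ (where the intersection is empty by the triangle inequality and the theorem's hypothesis $0\leq s\leq t-1$ would otherwise fail) is a small piece of bookkeeping the paper leaves implicit, but it does not change the approach.
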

\begin{proof}[Proof of~\cref{cor:Intersection}]
Take the $(k-1,b)$-burst-correcting code $\cC=\set{\bx,\by}$. Its reconstruction degree (minus one) is the size we are looking for. We then apply Theorem~\ref{thm:construction degree}.
\end{proof}

Another consequence of Theorem~\ref{thm:construction degree} is the following.

\begin{theorem}\label{thm:upper}
Let $q,b,t$ be constant positive integers, $q \geq 2$. Then, for all $0\leq s\leq t-1$,
\[
\ra(\ch(t,b),s) \leq  \ra(\ch(t-s-1,b),-1)\leq 2(t-s-1).
\]
\end{theorem}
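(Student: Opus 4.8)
The statement asserts that over the burst channel $\ch(t,b)$, the asymptotic redundancy factor for reconstruction codes with $b$-order $s$ is at most $\ra(\ch(t-s-1,b),-1)\le 2(t-s-1)$. The strategy is to mirror the "easy" direction of the substitution-channel argument in Theorem~\ref{thm:MARR for sub}, using Theorem~\ref{thm:construction degree} as the crucial bridge: a $(t-s-1,b)$-burst-correcting code automatically serves as a reconstruction code over $\ch(t,b)$ with only $O(n^s)$ reads.

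\begin{proof}[Proof of~\cref{thm:upper}]
Fix $0\le s\le t-1$. The second inequality $\ra(\ch(t-s-1,b),-1)\le 2(t-s-1)$ is exactly the upper bound of Corollary~\ref{cor:Redu} applied with $t$ replaced by $t-s-1$, so it remains to prove the first inequality.

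Choose a sequence of $(t-s-1,b)$-burst-correcting codes $\cC_n\subseteq\Sigma_q^n$ that are redundancy-optimal as error-correcting codes for $\ch(t-s-1,b)$, so that
\[
r(\cC_n)=r^*(n,\ch(t-s-1,b),1)_q,\qquad
\liminf_{n\to\infty}\frac{r(\cC_n)}{\log_q n}=\ra(\ch(t-s-1,b),-1).
\]
Now view each $\cC_n$ as a code over the stronger channel $\ch(t,b)$. By Theorem~\ref{thm:construction degree}, the reconstruction degree of $\cC_n$ over $\ch(t,b)$ satisfies $N(\cC_n)\le (t+1)^2 f_2\, n^s = O(n^s)$, where $f_2$ is the explicit constant from that theorem. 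Define
\[
N'(n)=\max\set*{N(\cC_n),\ \abs*{\Ball_{s,b}(\mathbf{0})}+1}.
\]
Then $N'(n)=O(n^s)$ by the above and by Theorem~\ref{thm:ball size}; moreover $N'(n)>\abs{\Ball_{s,b}(\mathbf{0})}$ for all $n$, and $N'(n)=o(n^{s+1})=o(\abs{\Ball_{s+1,b}(\mathbf{0})})$ again by Theorem~\ref{thm:ball size}. Hence $\Ord_b(N'(n))=s$. Since $\cC_n$ reconstructs over $\ch(t,b)$ with $N(\cC_n)\le N'(n)$ reads, each $\cC_n$ is an $(n,\ch(t,b),N'(n))_q$ reconstruction code, so $r^*(n,\ch(t,b),N'(n))_q\le r(\cC_n)$. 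Taking $\liminf$ and then the infimum over all admissible read functions in the definition~\eqref{eq:defrasy},
\[
\ra(\ch(t,b),s)\ \le\ \liminf_{n\to\infty}\frac{r^*(n,\ch(t,b),N'(n))_q}{\log_q n}\ \le\ \liminf_{n\to\infty}\frac{r(\cC_n)}{\log_q n}\ =\ \ra(\ch(t-s-1,b),-1).
\]
\end{proof}

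The only nontrivial ingredient is Theorem~\ref{thm:construction degree}, which is already established; the remaining obstacle is purely bookkeeping, namely checking that the auxiliary sequence $N'(n)$ genuinely has $b$-order $s$ (so it is a legitimate competitor in the infimum defining $\ra(\ch(t,b),s)$), and this follows immediately from the ball-size estimate of Theorem~\ref{thm:ball size}. Combining this theorem with the lower bound of Theorem~\ref{thm:lower} yields $\ra(\ch(t,b),s)=\ra(\ch(t-s-1,b),-1)$ whenever the latter quantity is itself pinned down (e.g.\ for $b=1$ via Theorem~\ref{thm:MARR for sub}, or more generally under Conjecture~\ref{conj: burst}).
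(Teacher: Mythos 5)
Your proposal is correct and follows exactly the route the paper intends: it instantiates the "same logic as Theorem~\ref{thm:MARR for sub}" by taking redundancy-optimal $(t-s-1,b)$-burst-correcting codes, invoking Theorem~\ref{thm:construction degree} to bound their reconstruction degree over $\ch(t,b)$ by $O(n^s)$, and padding to $N'(n)=\max\{N(\cC_n),\abs{\Ball_{s,b}(\mathbf{0})}+1\}$ so that $\Ord_b(N'(n))=s$, with the second inequality coming from Corollary~\ref{cor:Redu}. This is a faithful (and more detailed) expansion of the paper's one-line proof.
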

\begin{proof}[Proof of~\cref{thm:upper}]
Simply combine Theorem~\ref{thm:construction degree} with Corollary~\ref{cor:Redu}, using the same logic as the proof of Theorem~\ref{thm:MARR for sub}.
\end{proof}

\subsection{Summary of \texorpdfstring{$\ra(\ch(t,b),s)$}{}}

We summarize our bounds on $\ra(\ch(t,b),s)$:

\begin{theorem}
\label{thm:MARR for burst}
Let $q,b,t$ be constant positive integers, $q \geq 2$. Then
\[
\ra(\ch(t,b),s) =  \ra(\ch(t-s-1,b),-1),
\]
for all $0\leq s\leq t-1$. If Conjecture~\ref{conj: burst} is true, we further have
\[
\ra(\ch(t,b),s) =  \ra(\ch(t-s-1,1),-1),
\]
for all $0\leq s\leq t-1$.
\end{theorem}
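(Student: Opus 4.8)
The plan is to prove Theorem~\ref{thm:MARR for burst} as an immediate consequence of the matching lower and upper bounds we have already established for $\ra(\ch(t,b),s)$, and then to deduce the conditional second equality from Conjecture~\ref{conj: burst}. Concretely, the first equality is obtained by sandwiching: Theorem~\ref{thm:lower} gives
\[
\ra(\ch(t,b),s) \geq \ra(\ch(t-s-1,b),-1),
\]
while Theorem~\ref{thm:upper} gives
\[
\ra(\ch(t,b),s) \leq \ra(\ch(t-s-1,b),-1),
\]
and combining these two inequalities yields equality for every $0\leq s\leq t-1$. There is genuinely nothing more to do for the first claim, since the substantive work — the ball-size estimate of Theorem~\ref{thm:ball size}, the forbidden-configuration argument of Lemma~\ref{lem:forbidden}, and the reconstruction-degree bound of Theorem~\ref{thm:construction degree} — has already been carried out.

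For the conditional part, I would simply substitute $t' := t-s-1$ and $b$ into Conjecture~\ref{conj: burst}, which (if true) asserts $\ra(\ch(t',b),-1)=\ra(\ch(t'),-1)=\ra(\ch(t',1),-1)$. Chaining this with the first equality gives
\[
\ra(\ch(t,b),s)=\ra(\ch(t-s-1,b),-1)=\ra(\ch(t-s-1,1),-1),
\]
as claimed. One should double-check the edge cases: when $s=t-1$ we get $t-s-1=0$, and $\ra(\ch(0,b),-1)=0$ since the zero-error channel admits the trivial code $\cC=\Sigma_q^n$ with zero redundancy, consistent with both sides; this is worth a one-line remark so the statement is not vacuously misread at the boundary.

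I do not anticipate a real obstacle here, since the theorem is a packaging result; the only thing to be careful about is making sure the quantifier ranges and the $\liminf/\inf$ definitions in~\eqref{eq:defrasy} are applied identically on both sides — i.e., that the lower bound of Theorem~\ref{thm:lower} and the upper bound of Theorem~\ref{thm:upper} are stated with respect to the same asymptotic redundancy factor $\ra$, which they are. If any subtlety arises, it would be in confirming that the constructions witnessing the upper bound (built from $(t-s-1,b)$-burst-correcting codes via Theorem~\ref{thm:construction degree}) indeed realize a sequence $N(n)$ with $\Ord_b(N(n))=s$; but this is exactly what the proof of Theorem~\ref{thm:upper} (modeled on Theorem~\ref{thm:MARR for sub}) guarantees, so the present proof can just cite it.

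\begin{proof}[Proof of~\cref{thm:MARR for burst}]
The first equality follows immediately by combining the lower bound of Theorem~\ref{thm:lower}, namely $\ra(\ch(t,b),s) \geq \ra(\ch(t-s-1,b),-1)$, with the upper bound of Theorem~\ref{thm:upper}, namely $\ra(\ch(t,b),s) \leq \ra(\ch(t-s-1,b),-1)$, for every $0\leq s\leq t-1$. (When $s=t-1$ both sides equal $\ra(\ch(0,b),-1)=0$, witnessed by the trivial code $\cC=\Sigma_q^n$.)

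For the second equality, assume Conjecture~\ref{conj: burst} holds. Applying it with the parameters $t-s-1$ and $b$ in place of $t$ and $b$ gives $\ra(\ch(t-s-1,b),-1)=\ra(\ch(t-s-1),-1)=\ra(\ch(t-s-1,1),-1)$. Chaining this with the first equality yields
\[
\ra(\ch(t,b),s) = \ra(\ch(t-s-1,b),-1) = \ra(\ch(t-s-1,1),-1),
\]
for all $0\leq s\leq t-1$, as desired.
\end{proof}
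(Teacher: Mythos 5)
Your proof is correct and matches the paper's own argument exactly: the paper likewise obtains the first equality by combining the lower bound of Theorem~\ref{thm:lower} with the upper bound of Theorem~\ref{thm:upper}, and the conditional second equality by invoking Conjecture~\ref{conj: burst} with parameters $t-s-1$ and $b$. Your additional remark on the boundary case $s=t-1$ is a harmless clarification and does not alter the approach.
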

\begin{proof}[Proof of~\cref{thm:MARR for burst}]
Combine the lower and upper bounds of Theorem~\ref{thm:lower} and Theorem~\ref{thm:upper}, to obtain the desired result.
\end{proof}

Loosely speaking, Theorem~\ref{thm:MARR for burst} shows that we can use a burst-correcting code over a channel that introduces $s+1$ more burst errors than the code is designed to correct. When we do so, the code becomes a reconstruction code whose reconstruction degree is $O(n^{s})$.

\subsection{Improvement of the GV bound for burst-correcting codes}

While this section is devoted to reconstruction codes, the results we have obtained in it affect bounds on burst-correcting codes. In this subsection, we further refine the upper bound on the minimal redundancy of $(t,b)$-burst-correcting codes. While Corollary~\ref{cor:Redu} gives an upper bound of $(2t\log_q n)(1+o(1))$, the following theorem is slightly stronger. This is accomplished by harnessing Corollary~\ref{cor:Intersection} on the size of the intersection of error balls.

\begin{theorem}\label{thm:impoved upperred}
Let $q,b,t$ be constant positive integers, $q \geq 2$. Then there exists a $(t,b)$-burst-correcting code,  $\cC\subseteq\Sigma_q^n$, with
\[
r(\cC)\leq 2t\log_q n - \log_q \log n + O(1).\]
\end{theorem}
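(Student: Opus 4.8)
The plan is to run a graph-coloring / independent-set argument on an auxiliary ``conflict graph'' and feed it into the recent result of Campos, Jenssen, Michelen, and Sahasrabudhe (Theorem~\ref{codegree}), which improves the classical Gilbert--Varshamov bound by a logarithmic factor whenever the joint degree is sufficiently smaller than the maximum degree. Concretely, I would let $G$ be the graph on vertex set $\Sigma_q^n$ in which $\bu$ and $\bv$ are adjacent exactly when $d_b(\bu,\bv)\leq 2t$; an independent set in $G$ is precisely a $(t,b)$-burst-correcting code. So it suffices to lower bound $\alpha(G)$.

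First I would compute the relevant parameters of $G$. The number of vertices is $N=q^n$. The maximum degree is $\gdmax(G)=\abs{\Ball_{2t,b}(\bx)}-1$, which by Theorem~\ref{thm:ball size} equals $\Delta:=(1+o(1))(q^b-1)^{2t}\binom{n}{2t}=\Theta(n^{2t})$; by translation invariance this is the same for every vertex. The key step is bounding the joint degree: for distinct $\bu,\bv$, $\jd(\bu,\bv)$ counts vertices $\bw$ adjacent to both, i.e.\ with $d_b(\bu,\bw)\leq 2t$ and $d_b(\bv,\bw)\leq 2t$, so $\jd(\bu,\bv)=\abs{\Ball_{2t,b}(\bu)\cap\Ball_{2t,b}(\bv)}$ (minus a negligible correction for whether $\bu,\bv$ themselves are counted). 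The worst case is when $\bu,\bv$ are as close as possible but still distinct, namely $d_b(\bu,\bv)=1$; then Corollary~\ref{cor:Intersection} (with $t\mapsto 2t$, $k=1$) gives $\jd(\bu,\bv)=O(n^{2t-1})$. Hence $\jdmax(G)=O(n^{2t-1})$. Now I compare with the threshold $\Delta/(2\log\Delta)^7$: since $\Delta=\Theta(n^{2t})$ we have $\log\Delta=\Theta(\log n)$, so $\Delta/(2\log\Delta)^7=\Theta(n^{2t}/(\log n)^7)$, which dominates $O(n^{2t-1})$ for large $n$. Therefore the hypotheses of Theorem~\ref{codegree} are met for all sufficiently large $n$.

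Applying Theorem~\ref{codegree} then yields a $(t,b)$-burst-correcting code of size
\[
\abs{\cC}\geq\alpha(G)\geq(1-o(1))\frac{q^n\log\Delta}{\Delta}=(1-o(1))\frac{q^n\log\left((1+o(1))(q^b-1)^{2t}\binom{n}{2t}\right)}{(1+o(1))(q^b-1)^{2t}\binom{n}{2t}}.
\]
Since $\log\binom{n}{2t}=(2t)\log n+O(1)$ and $(q^b-1)^{2t}$, $t$, $b$, $q$ are all constants, the numerator's logarithmic factor is $(1+o(1))\cdot 2t\log n=\Theta(\log n)$, while $\binom{n}{2t}=\frac{n^{2t}}{(2t)!}(1+o(1))$. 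Taking $\log_q$ of the reciprocal redundancy and converting the natural log to $\log_q$ via a constant factor, I get
\[
r(\cC)=n-\log_q\abs{\cC}\leq 2t\log_q n-\log_q\log n+O(1),
\]
which is the claim. (The $\log\log$ loss arises because $\log\Delta=\Theta(\log n)$, so $\log_q(\log\Delta)=\log_q\log n+O(1)$, which is exactly the gain over the naive GV bound of Corollary~\ref{cor:Redu}.)

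The main obstacle is the joint-degree estimate: one must verify that $\jd(\bu,\bv)$ is genuinely governed by the ball-intersection size and that the worst case is the minimal-distance pair, so that Corollary~\ref{cor:Intersection} applies with the right parameters. A subtle point is that Corollary~\ref{cor:Intersection} is stated with the channel radius $t$ and a distance condition $d_b\geq 2k-1$; here I need it with channel radius $2t$ and $k=1$, which is fine since $2k-1=1\leq d_b(\bu,\bv)$ always holds for distinct vertices, giving the bound $O(n^{2t-1})$. One should also double-check the bookkeeping of the $\pm 1$ corrections (whether $\bu,\bv$ count toward each other's degree or joint degree), but these are lower-order and do not affect the asymptotics. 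Everything else — plugging into Theorem~\ref{codegree} and simplifying binomial coefficients and logarithms — is routine.
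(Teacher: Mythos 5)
Your proposal is correct and follows essentially the same route as the paper: the same conflict graph on $\Sigma_q^n$ with adjacency $d_b(\bu,\bv)\leq 2t$, the same degree estimate $\Theta(n^{2t})$ from Theorem~\ref{thm:ball size}, the same joint-degree bound $O(n^{2t-1})$ via Corollary~\ref{cor:Intersection}, and the same application of Theorem~\ref{codegree}. The bookkeeping you flag (the $\pm1$ corrections and the instantiation of Corollary~\ref{cor:Intersection} with radius $2t$ and $k=1$) is handled exactly as you describe.
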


\begin{proof}[Proof of~\cref{thm:impoved upperred}]
We construct an auxiliary graph $G=(V(G),E(G))$ with vertex set $V(G) = \Sigma_q^n$ and edge set $E(G) = \set{(\bx,\by):\bx,\by\in \Sigma_q^n, d_b(\bx,\by) \leq 2t}$. Note that any independent set $\cC \subseteq \Sigma_q^n$ corresponds to a $(t,b)$-burst-correcting code.

For any two distinct $\bx,\by\in V(G)$, by~Theorem~\ref{thm:ball size} and Corollary~\ref{cor:Intersection}, we have the degree and joint degree satisfying
    \begin{align*}
        \gd(\bx) &= \abs*{\set*{\bz: d_b(\bx,\bz) \leq 2t}} = \abs*{\Ball_{2t,b}(\bx)} = \Theta(n^{2t}),\\
        \jd(\bx,\by) &= \abs*{\set*{\bz: d_b(\bx,\bz) \leq 2t, d_b(\by,\bz) \leq 2t}} = \abs*{\Ball_{2t,b}(\bx) \cap \Ball_{2t,b}(\by)} = O(n^{2t-1}).
    \end{align*}
That means, $\gdmax(G) = \Delta = \Theta(n^{2t})$, and $\jdmax(G) = O(n^{2t-1}) \leq \frac{\Delta}{(2\log \Delta)^7}$. By~Theorem~\ref{codegree}, for all sufficiently large $n$, we have an independent set in $G$ of size
    \[
    \alpha(G) \geq (1-o(1))\frac{q^n\log \Delta}{\Delta} = (1-o(1)) \frac{q^n 2t \log n}{n^{2t}}.
    \]
Thus the redundancy is at most $n-\log_q \alpha(G) = 2t \log_q n - \log_q \log n + O(1)$.
\end{proof}

\section{List-Reconstruction Codes in \texorpdfstring{$\ch(t,b)$}{}}
\label{sec:list}

Reviewing our progress thus far, we started by noting the obvious -- over the channel $\ch(t,b)$, we can uniquely decode using a single read by using a $(t,b)$-burst-correcting code. Trading code power for number of reads, in Section~\ref{sec:reconstruction code} we showed that over the same channel we can achieve unique decoding by using a $(t-s-1,b)$-burst-correcting code with $O(n^s)$ reads.

The goal of this section is to take the next step, which is to trade reads for list size (thereby, departing from unique decoding). Instead of using $O(n^s)$ reads (required for unique decoding), we use only $\Theta(n^{s-h})$ reads. We determine $\La(h)$ for some non-negative integer $0\leq h\leq s$. When $h = 0$, Theorem~\ref{thm:construction degree} implies that a list of size $1$ is enough. Hence, we always assume $1 \leq h \leq s$. 

For fixed positive integers $w$ and $r$, Chee and Ling~\cite{chee2006constructions} provided the Johnson bound for $(r,1)$-burst-correcting codes $\cS \subseteq \Sigma_q^n$, where $\cS \subseteq \Ball_{w,1} (\mathbf{0})$, i.e., where the codewords have weights no more than $w$. For sufficiently large $n$, they showed that $\abs{\cS} = O(n^{w-r})$. Recently, Liu and Shangguan~\cite[Theorem 1.4]{liu2025approximate} demonstrated the existence of such a code $\cS$ with $\abs{\cS} = \Omega(n^{w-r})$. We will generalize these results to the channel $\ch(r,b)$ for $b\geq 1$.

\begin{theorem}\label{thm:upper johnson}
Let $w,r,b,n,q$ be positive integers, with $w\geq r$, and $q\geq 2$. Then for any $(r,b)$-burst-correcting code, $\cS \subseteq \Sigma_q^n$, such that $\cS \subseteq \Ball_{w,b} (\mathbf{0})$, we have
\[
\abs*{\cS}  \leq (w+1)(q^b-1)^{w-r}n^{w-r}.
\]
\end{theorem}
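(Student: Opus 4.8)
My plan is to split $\cS$ into the at most $w+1$ weight classes $\cS_k=\set{\bx\in\cS:\wt_b(\bx)=k}$, $0\le k\le w$ (every codeword has $\wt_b\le w$ since $\cS\subseteq\Ball_{w,b}(\mathbf 0)$), and to bound each class by $(q^b-1)^{w-r}n^{w-r}$; summing then gives the claim. The classes naturally fall into two regimes: the \emph{light} ones with $k<w-r$, handled by a crude counting bound, and the \emph{heavy} ones with $k\ge w-r$, handled by a packing-type injection which is the real content.

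For a light codeword $\bx\in\cS_k$ with $k\le w-r-1$, I would just use $\cS_k\subseteq B_{k,\leq b}$ together with the non-asymptotic estimate $\abs{B_{k,\leq b}}\le\binom{n}{k}(q^b-1)^k$ proved inside the argument for \cref{thm:ball size} (this is exactly inequality~\eqref{eq:claimupper}). Since $k<w-r$ and $q\ge 2$, monotonicity gives $\abs{\cS_k}\le(q^b-1)^{w-r}n^{w-r}$, which is all these classes need to contribute.

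For the heavy part $\cS_{\geq w-r}=\bigcup_{k\ge w-r}\cS_k$, I want an injection into the family of $(w-r)$-subsets of nonzero $b^{\leq}$-bursts. Fix once and for all, for every $\bx\in\cS$, a decomposition of $\bx$ into $\wt_b(\bx)$ pairwise disjoint \emph{nonzero} $b^{\leq}$-bursts, listed by increasing starting position, and let $T(\bx)$ be the set of the first $w-r$ of these bursts (well defined on $\cS_{\geq w-r}$). The claim is that $T$ is injective on $\cS_{\geq w-r}$: if $T(\bx)=T(\by)$ with $\bx\neq\by$, then $\bx$ is the sum of the $w-r$ bursts of $T(\bx)$ and at most $\wt_b(\bx)-(w-r)\le r$ further disjoint bursts, and likewise for $\by$; subtracting and cancelling the shared bursts in $T(\bx)=T(\by)$ writes $\bx-\by$ as a sum of at most $2r$ $b^{\leq}$-bursts (the negative of a $b^{\leq}$-burst is again one, as $\Sigma_q$ is a group), so $\bx-\by\in B_{\leq 2r,\leq b}$, i.e.\ $d_b(\bx,\by)\le 2r$. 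This contradicts the defining property of a $(r,b)$-burst-correcting code, that distinct codewords are at $b^{\leq}$-burst distance $\ge 2r+1$. Since there are at most $n(q^b-1)$ nonzero $b^{\leq}$-bursts (each supported in one of the $n$ length-$b$ cyclic windows, carrying fewer than $q^b$ nonzero values), we get $\abs{\cS_{\geq w-r}}\le\binom{n(q^b-1)}{w-r}\le(q^b-1)^{w-r}n^{w-r}$.

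Combining the one heavy part with the $w-r$ light parts yields $\abs{\cS}\le(w-r+1)(q^b-1)^{w-r}n^{w-r}\le(w+1)(q^b-1)^{w-r}n^{w-r}$. I expect the main obstacle to be precisely the interaction between the two regimes: the injection $T$ requires $\wt_b(\bx)\ge w-r$, so genuinely low-weight codewords slip through it and must be disposed of by other means — but there the crude ball-size bound of \cref{thm:ball size} is comfortably enough, since a small burst weight itself caps the number of such codewords well below $n^{w-r}$.
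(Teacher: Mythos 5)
Your proof is correct, and it actually yields the slightly sharper constant $w-r+1$ in place of $w+1$; the route is genuinely different from the paper's. The paper argues weight class by weight class: writing $A(n,2r+1,k)$ for the largest weight-$k$ code of minimum $b^{\leq}$-burst distance $2r+1$, it shortens on a window $[i,i+b-1]$ to obtain the recursion $A(n,2r+1,k)\le n(q^b-1)\,A(n-b,2r+1,k-1)$, iterates this $k-r$ times down to the base case $A(n,2r+1,r)=1$, and sums over $k\le w$. You instead handle all codewords of weight at least $w-r$ in one shot, via the injection $\bx\mapsto T(\bx)$ into $(w-r)$-subsets of the at most $n(q^b-1)$ nonzero $b^{\leq}$-bursts, and dispose of the remaining light classes with the crude count $\abs{B_{k,\le b}}\le\binom{n}{k}(q^b-1)^k$ from~\eqref{eq:claimupper}. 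The key step is sound: a minimal decomposition of $\bx$ into $\wt_b(\bx)$ pairwise disjoint nonzero bursts exists by Lemma~\ref{lem:disjoint} (with the same implicit assumption $n\gtrsim wb$ that the paper's shortening also uses), and if $T(\bx)=T(\by)$ then cancelling the $w-r$ shared bursts leaves at most $r$ bursts on each side, so $d_b(\bx,\by)\le 2r$, contradicting the code's minimum distance. Both arguments rest on the same underlying fact --- that a codeword of a distance-$(2r+1)$ code inside $\Ball_{w,b}(\mathbf{0})$ is pinned down by $w-r$ of its bursts --- but your global injection avoids the recursion and the bookkeeping of repeatedly shortening a cyclic code from length $n$ to $n-b$, at the small price of treating the low-weight codewords separately (where the paper's base case $A(n,2r+1,k)=1$ for $k\le r$ absorbs them instead).
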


\begin{proof}[Proof of~\cref{thm:upper johnson}]
Let \( A(n,d,k) \) denote the maximum size of a code in \( \Sigma_q^n \), with weight \( k \), and minimum \( b^\leq \)-burst distance at least \( d \). Clearly,  
\[
\abs*{\cS} \leq \sum_{k=0}^{w} A(n, 2r+1, k).
\]  
For \( k \leq r \), we have \( A(n, 2r+1, k) = 1 \). 

We turn to handle the case $k\geq r+1$. Let \( \cC \subseteq \Sigma_q^n \) be a code of size \( A(n, 2r+1, k) \), with weight \( k\geq r+1 \), and minimum \( b^\leq \)-burst distance at least \( 2r+1 \). For each \( i \in [n] \) and \( \bu \in \Sigma_q^b \setminus \set{ \mathbf{0} } \), define  
\[
\cC_{i,\bu} = \set*{ \bx \in \cC : \bx[i, i+b-1] = \bu, \wt_b(\bx\overline{[i, i+b-1]}) = k-1 },
\]  
and  
\[
\cC_{i,\bu}^* = \set*{ \bx\overline{[i, i+b-1]} : \bx \in \cC_{i,\bu} },
\]  
(see the definitions in Section~\ref{sec:prelim}). By construction, we have  
\[
\abs*{\cC_{i,\bu}} = \abs*{\cC_{i,\bu}^*}\leq A(n-b, 2r+1, k-1).
\]  
On the other hand, since  
\[
\cC = \bigcup_{i\in [n],\bu \in \Sigma_q^b \setminus \set{ \mathbf{0} }} \cC_{i,\bu},
\]  
we conclude that  
\[
A(n, 2r+1, k) \leq n (q^b-1) A(n-b, 2r+1, k-1).
\]
By recursively using the above inequality $k-r$ times, we obtain that 
\[
A(n,2r+1,k) \leq n^{k-r} (q^b-1)^{k-r} A(n-(k-r)b,2r+1,r) = n^{k-r} (q^b-1)^{k-r}.
\]
Then we have 
\[\abs*{\cS}\leq \sum_{k=0}^w A(n,2r+1,k) \leq (w+1)(q^b-1)^{w-r}n^{w-r} = O(n^{w-r}),\]
as claimed.
\end{proof}
    
\begin{theorem}\label{thm:johnson}
For any fixed positive integers $w,r,b$, there exists a sequence of $(r,b)$-burst-correcting code, $\cC_n \subseteq \Sigma_q^n$, with $\cC_n \subseteq \Ball_{w,b} (\mathbf{0})$, such that $\abs{\cC_n} = \Omega(n^{w-r})$.
\end{theorem}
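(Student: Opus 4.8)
The plan is to realize the desired code as a large matching in an auxiliary uniform hypergraph and apply Lemma~\ref{lem:kahn}. The cases $r=0$ and $w=r$ are trivial (respectively: all the weight-$w$ ``grid'' codewords constructed below already work, since distinct codewords are at burst distance $\ge 1$; and a single codeword suffices), so assume $1\le r<w$ and $q\ge 2$.

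First I would fix a nonzero $g\in\Sigma_q$ and a \emph{sparse grid} $\mathcal{G}\subseteq[n]$ whose points are pairwise at cyclic distance at least $4b$, so that $\abs{\mathcal{G}}=\Theta(n)$. For a $w$-subset $P\in\binom{\mathcal{G}}{w}$ let $\bx_P$ be the vector with $\bx_P[p]=g$ for $p\in P$ and $\bx_P[p]=0$ otherwise; being a sum of $w$ pairwise-disjoint (length-$1$) $b^{\le}$-bursts, $\bx_P\in\Ball_{w,b}(\mathbf{0})$. The crucial point, which I would verify next, is that since any two grid points are more than $b-1$ apart, no $b^{\le}$-burst can be nonzero at two of them; hence for distinct $P,P'$ the support $\supp(\bx_P-\bx_{P'})=P\triangle P'$ cannot be covered by fewer than $\abs{P\triangle P'}$ disjoint $b^{\le}$-bursts, and is covered by exactly that many, so
\[
d_b(\bx_P,\bx_{P'})=\abs{P\triangle P'}=2\bigl(w-\abs{P\cap P'}\bigr).
\]
Consequently $\set{\bx_P:P\in\cP}$ is an $(r,b)$-burst-correcting code contained in $\Ball_{w,b}(\mathbf{0})$ whenever $\cP\subseteq\binom{\mathcal{G}}{w}$ satisfies $\abs{P\cap P'}\le w-r-1$ for all distinct $P,P'\in\cP$ (then $d_b\ge 2r+2>2r$); equivalently, whenever every $(w-r)$-subset of $\mathcal{G}$ lies in at most one member of $\cP$. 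It therefore suffices to produce such a ``packing'' $\cP$ of size $\Omega(n^{w-r})$.

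To that end I would take the $\binom{w}{r}$-uniform hypergraph $\cH$ (note $\binom{w}{r}\ge 2$ is a fixed constant since $1\le r<w$) with $V(\cH)=\binom{\mathcal{G}}{w-r}$ and $E(\cH)=\bigl\{\binom{W}{w-r}:W\in\binom{\mathcal{G}}{w}\bigr\}$, so that matchings of $\cH$ correspond precisely to the packings $\cP$ above and the largest such $\abs{\cP}$ is $\nu(\cH)$. Writing $N=\abs{\mathcal{G}}=\Theta(n)$, one checks directly that $\abs{E(\cH)}=\binom{N}{w}=\Theta(n^{w})$, that every vertex has degree $\binom{N-w+r}{r}$ so $\gdmax(\cH)=\Theta(n^{r})$, and that distinct vertices $K\neq K'$ have $\abs{K\cup K'}\ge w-r+1$ and hence codegree at most $\binom{N}{r-1}$, so $\codmax(\cH)=O(n^{r-1})$. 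Thus $\codmax(\cH)/\gdmax(\cH)=O(1/n)=o(1)$ while $\abs{V(\cH)}=\Theta(n^{w-r})\to\infty$, and Lemma~\ref{lem:kahn} gives
\[
\nu(\cH)\ge(1-o(1))\frac{\abs{E(\cH)}}{\gdmax(\cH)}=(1-o(1))\frac{\binom{N}{w}}{\binom{N-w+r}{r}}=\Omega(n^{w-r}).
\]
A maximum matching $\cP$ then yields the code $\cC_n=\set{\bx_P:P\in\cP}$ of size $\Omega(n^{w-r})$.

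I do not expect a real obstacle here; the whole argument hinges on choosing the right representatives. The one step requiring genuine care — and the reason for forcing \emph{all} burst positions onto the globally sparse grid $\mathcal{G}$, rather than merely demanding well-separated bursts inside each codeword — is the exact identity $d_b(\bx_P,\bx_{P'})=2(w-\abs{P\cap P'})$. Without global separation, a single burst of length up to $b$ could simultaneously touch two nearby active positions coming from the two different codewords, halving the effective distance and leaving us only with the packing number of $(w-2r)$-subsets, i.e.\ $\Theta(n^{w-2r})$, which is far too small. Everything else — the burst-weight bound on $\bx_P$, the degree and codegree estimates for $\cH$, and the final invocation of Lemma~\ref{lem:kahn} — is routine.
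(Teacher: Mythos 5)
Your proposal is correct and follows essentially the same route as the paper: encode $w$-subsets of well-separated positions as low-burst-weight vectors, verify that the burst distance equals twice the size of the symmetric difference, and extract a large packing via the $\binom{w}{w-r}$-uniform auxiliary hypergraph together with Lemma~\ref{lem:kahn}. Your one deviation --- placing single nonzero symbols on a globally $4b$-separated grid instead of length-$b$ all-one blocks with only pairwise separation --- is a genuine simplification, since it makes the distance identity $d_b(\bx_P,\bx_{P'})=2(w-\abs{P\cap P'})$ immediate and eliminates the partial-overlap (Type III) case analysis that occupies most of the paper's Claim~\ref{claim:IntersectionPatterns}.
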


\begin{proof}[Proof of~\cref{thm:johnson}]
Let $\cI$ be the collection of $w$-subsets of $[n]$ defined by
\[
\cI = \set*{\set*{i_1,\ldots,i_w}\subseteq[n] : 1 \leq i_1 < \cdots < i_w \leq n, i_{j} + 3b \leq i_{j+1} \text{ for $j\in [w]$, where } i_{w+1} = i_1+n}.
\]
We define a $w$-uniform hypergraph $\cG_q(n,w)$ with vertices and edges,
\begin{align*}
    V(\cG_q(n,w))&=[n], &
    E(\cG_q(n,w))&=\cI.
\end{align*}
With each edge $I=\set{i_1,\dots,i_w} \in E(\cG_q(n,w))$ we associate a unique length-$n$ sequence $\bx_I \in \Ball_{w,b}(\mathbf{0})$ all of whose entries are zeros, except $\bx[i_j,i_j+b-1]$ is an all-one vector, for all $j\in[w]$. The number of edges in this hypergraph is $\abs{E(\cG_q(n,w))} \geq \binom{n-(3b-1)w}{w} = \Omega(n^w)$. The following claim is crucial in the proof.

\begin{claim}\label{claim:IntersectionPatterns}
For any distinct edges $I = \set{i_1,\ldots, i_w},J = \set{j_1,\ldots, j_w}\in E(\cG_q(n,w))$, $d_b(\bx_I, \bx_J) = 2w - 2\abs{I \cap J}$. 
        \end{claim} 
\begin{poc}
Let \( U_k = [i_k, i_k+b-1] \) and \( V_k = [j_k, j_k+b-1] \) denote cyclic intervals for \( k \in [w] \). Then, the support sets satisfy  
\[
\supp(\bx_I) = \bigcup_{k=1}^w U_k, \quad \supp(\bx_J) = \bigcup_{k=1}^w V_k.
\]  
Each \( U_k \) intersects with at most one interval from \( \set{V_1, \ldots, V_k} \), since the gap (see Section~\ref{sec:prelim}) between any two intervals in \( \set{V_1, \ldots, V_k} \) is strictly greater than \( 2b \), ensuring that they remain disjoint. Similarly, each \( V_k \) intersects with at most one interval from \( \set{U_1, \ldots, U_k} \).

\begin{figure}
    \centering
    \includegraphics[width=0.99\linewidth]{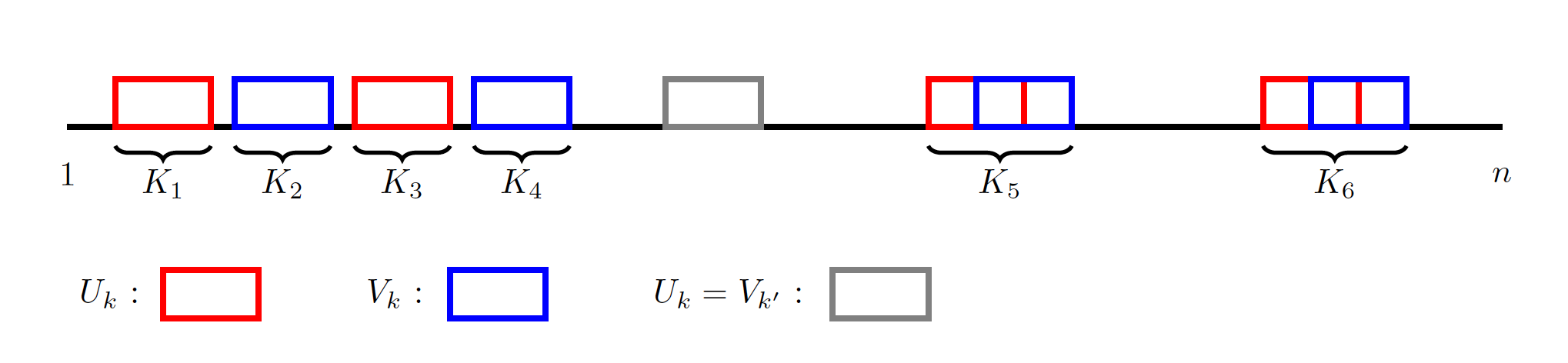}
    \caption{There are three cases of the intersections between $\set{U_1,\ldots,U_k}$ and $\set{V_1,\ldots,V_k}$ in $\supp(\bx_I - \bx_J)$.}
    \label{fig:supp}
\end{figure}

Let us denote by $m_1$ the number of indices $k\in[w]$ such that $U_k$ intersects some $V_\ell$, $\ell\in[w]$ but not fully, i.e., $U_k\cap V_\ell\neq\emptyset$ and $U_k\neq V_\ell$. Let us also denote by $m_2$ the number of indices $k\in[w]$ such that $U_k$ is disjoint from all $V_\ell$, $\ell\in[w]$. In Fig.~\ref{fig:supp} an example is shown. Note that the gray bursts in the figure completely intersect, and are of no interest to us. By definition we have $ m_1 + m_2 = w - \abs{I \cap J} $.

We now consider the vector $\bx_I-\bx_J$. There exist $ m_1 + 2m_2 $ disjoint consecutive cyclic intervals (some perhaps longer than $b$), we denote by $ K_1, \ldots, K_{m_1 + 2m_2} $, such that the support $ \supp(\bx_I - \bx_J) $ is contained within their union. Notably, the endpoints of each interval $ K_i $ lie within this support for all $ i \in [m_1 + 2m_2] $.
There are three possible types of $ K_i $:
        \begin{enumerate}
            \item Type I: $K_i = U_k = [i_k,i_k+b-1]$ for some $k\in [w]$ (created by $U_k$ that does not intersect any $V_\ell$),
            \item Type II: $K_i = V_\ell = [j_\ell,j_\ell+b-1]$ for some $\ell\in [w]$ (created by $V_\ell$ that does not intersect any $U_k$),
            \item Type III: $K_i = U_k \cup V_\ell$ for some $U_k \cap V_\ell \neq \emptyset, U_k \neq V_\ell$, where 
            \[ K_i = 
            \begin{cases}
                [i_k,j_\ell+b-1], & \text{when }j_\ell \in [i_k+1,i_k+b-1], \\
                [j_\ell,i_k+b-1],& \text{when }i_k \in [j_\ell+1,j_\ell+b-1].
            \end{cases}
            \]
        \end{enumerate}
We further index the $K_i$ such that their starting point is in ascending order.

For a pair of consecutive intervals \( (K_i, K_{i+1}) \), there are \( 3 \times 3 = 9 \) possible combinations of interval types. We now show that the gap between any such pair is at least \( b \) for the following seven cases:

\textbf{Case 1:} \( (\text{Type}(K_i), \text{Type}(K_{i+1})) = (\text{Type I, Type I}) \) or \( (\text{Type II, Type II}) \)  

By definition, the gap between these intervals is at least \( 3b \), since \( K_i \) and \( K_{i+1} \) are two disjoint intervals from either \( \set{U_1, \ldots, U_w} \) or \( \set{V_1, \ldots, V_w} \).  

\textbf{Case 2:} \( (\text{Type}(K_i), \text{Type}(K_{i+1})) = (\text{Type I, Type III}) \) or \( (\text{Type III, Type I}) \) or \( (\text{Type II, Type III}) \) or \( (\text{Type III, Type II}) \)  

We consider the case \( (\text{Type}(K_i), \text{Type}(K_{i+1})) = (\text{Type I, Type III}) \), as the proofs for the other cases are analogous. Without loss of generality, assume  
\[
K_i = [i_s, i_s+b-1], \quad K_{i+1} = [j_\ell, i_k+b-1],
\]  
where \( i_s < i_k \) and \( i_k \in [j_\ell+1, j_\ell+b-1] \). The gap is given by  
\[
\min \set*{ j_\ell - i_s - b, i_s + n - i_k - b }.
\]  
Since  
\[
j_\ell - i_s - b \geq (i_k - b + 1) - i_s - b \geq 3b - 2b + 1 = b+1,
\]  
and  
\[
i_s + n - i_k - b \geq 3b - b = 2b,
\]  
we conclude that the gap is at least \( b+1 \).  

\textbf{Case 3:} \( (\text{Type}(K_i), \text{Type}(K_{i+1})) = (\text{Type III, Type III}) \)

By symmetry, it suffices to consider the case where  
\[
K_i = [j_s, i_t + b - 1], \quad K_{i+1} = [j_\ell, i_k + b - 1],
\]  
with \( i_t < i_k \), \( i_t \in [j_s+1, j_s+b-1] \), and \( i_k \in [j_\ell+1, j_\ell+b-1] \). The gap is given by  
\[
\min \set*{ j_\ell - i_t - b, j_s + n - i_k - b } \geq b+1.
\]  

To complete the proof of the claim, let \( m = d_b(\bx_I, \bx_J) \), and let \( L_1, \dots, L_m \) be pairwise disjoint cyclic intervals of length at most \( b \) such that their union contains \( \supp(\bx_I - \bx_J) \). Clearly, we have  
\[
m \leq 2m_1 + 2m_2,
\]
since we can cover $K_i$ of type I or II with a single interval $L_j$, and those of type III by two intervals.

In the other direction, for each \( K_j \) of Type III, the interval \( L_i \cap K_j \neq \emptyset \) cannot intersect any other \( K_\ell \) for \( \ell \neq j \), as the gap between \( K_j \) and \( K_\ell \) exceeds the length of \( L_i \). To cover both endpoints of \( K_j \), at least two cyclic intervals of length at most \( b \) from \( \set{L_1, \dots, L_m} \) are required. 
For the remaining \( 2m_2 \) intervals \( K_j \) of Type I or Type II, we similarly require at least \( 2m_2 \) cyclic intervals of length at most \( b \) to cover them. Thus,  
\[
m \geq 2m_1 + 2m_2,
\]  
which implies  
\[
m = 2(m_1 + m_2) = 2w - 2 \abs*{I \cap J}.
\]  
This completes the proof of the claim.
\end{poc}

Now we apply Lemma~\ref{lem:kahn} to prove the theorem. Let \( t = w - r \), and define a hypergraph \( \cH \) with vertices and edges
\begin{align*}
V(\cH) &= \binom{V(\cG_q(n,w))}{t}, &
E(\cH) &= \set*{ \binom{e_I}{t} : e_I \in E(\cG_q(n,w)) }.
\end{align*}
Then, \( \cH \) is a \( \binom{w}{t} \)-uniform hypergraph, and its number of edges satisfies  
\[
\abs*{E(\cH)} = \abs*{E(\cG_q(n,w))} = \Omega(n^w).
\]  

Assume \( u, v \in V(\cH) \) are two distinct vertices, say  
\[
u = \set*{ i_1, \ldots, i_t }, \quad v = \set*{ j_1, \ldots, j_t }.
\]  
The degree of \( u \) is given by  
\[
\gd(u) = \abs*{ \set*{ I \in \cI : \set*{i_1, \ldots, i_t} \subseteq I } } = O(n^{w-t}).
\]  
Moreover, the maximum degree, \( \gdmax(\cH)=\Omega(n^{w-t}) \) as evident, for example, from the vertex $\set*{1, 3b+1, \ldots, 3b(t-1)+1}$.
Additionally, we can see the codegree of $u$ and $v$ is 
\[
\cod(u,v) = \abs*{I\in \cI : \set{i_1,\ldots,i_t,j_1,\ldots,j_t} \subseteq I} \leq \binom{n}{w-\abs{\set{i_1,\ldots,i_t,j_1,\ldots,j_t}}} =O(n^{w-t-1}).
\]
        Therefore, we obtain  
\begin{align*}
\gdmax(\cH) &= \Theta(n^{w-t}), &
\codmax(\cH) &= O(n^{w-t-1}) = o(\Delta(\cH)).
\end{align*}
Applying Lemma~\ref{lem:kahn}, we conclude that  
\[
\nu(\cH) \geq (1 - o(1)) \frac{\abs*{E(\cH)}}{\gdmax(\cH)} = \Omega(n^t) = \Omega(n^{w-r}).
\]  

Let $S$ be a largest matching in $\cH$, i.e., $\abs{S}=\nu(\cH)$. By our construction, there exists $\cS\subseteq\cI$ such that
\[
S = \set*{ \binom{\be_I}{t} : I \in \cS}.
\]  
We construct a code  
\[
\cC_S = \set*{ \bx_I : I \in \cS }.
\]  
By the definition of \( \cI \), it follows that  
$\cS \subseteq \Ball_{w,b}(\mathbf{0})$.
Since $S$ is a matching, for each pair of distinct \( \bx_I, \bx_J \in \cC_S \), it is clear that $\abs{\be_I \cap \be_J} \leq t-1$. Thus, for any distinct $\bx_I,\bx_J\in\cC_S$, by Claim~\ref{claim:IntersectionPatterns}, we obtain  
\[
d_b(\bx_I, \bx_J) = 2w - 2\abs{I \cap J} = 2w - 2\abs{\be_I \cap \be_J} \geq 2w - 2(t-1) = 2r + 2 > 2r + 1.
\]  
Thus, \( \cC_S \) forms an \( (r,b) \)-burst-correcting code with size \( \abs{S}=\nu(\cH)=\Omega(n^{w-r}) \).
\end{proof}
    
By applying Theorem~\ref{thm:johnson}, we can obtain a lower bound on the asymptotic growth rate of the list size of reconstruction codes.

\begin{theorem}\label{thm:lower list recon}
Let $q,b,t,s,h$ be constant positive integers, $q \geq 2$, and $t\geq s\geq h\geq 1$. Then, when using $(t-s-1,b)$-burst-correcting codes of length $n$, over $\ch(t,b)$, with $N(n)$ reads, $\Ord_b(N(n))=s-h$, the asymptotic growth rate of the list size satisfies
\[
    \La(h) \geq h,
\]
namely, the list size grows at least as $n^{(1-o(1))h}$.
\end{theorem}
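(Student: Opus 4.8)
The plan is to prove the lower bound constructively: for every admissible read count $N(n)$ we exhibit a single $(t-s-1,b)$-burst-correcting code $\cC$ together with a set of $N(n)$ reads whose list-reconstruction list has size $\Omega(n^h)$. Fix any $N(n)$ with $\Ord_b(N(n))=s-h$. It then suffices to show $L(s,N(n))=\Omega(n^h)$, since this gives $\liminf_{n\to\infty}\log_q L(s,N(n))/\log_q n\geq h$, and taking the infimum over all such $N(n)$ yields $\La(h)\geq h$.

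\emph{The code.} Set $w=t-s+h-1$ and $r=t-s-1$. Using the standing assumption $0\leq s\leq t-1$ we have $r\geq 0$ and $w\geq 1$, and $w-r=h\geq 1$. Apply Theorem~\ref{thm:johnson} with parameters $w,r,b$ to obtain an $(r,b)=(t-s-1,b)$-burst-correcting code $\cC=\set{\bx_1,\dots,\bx_p}\subseteq\Ball_{w,b}(\mathbf{0})$ with $p=\abs{\cC}=\Omega(n^{w-r})=\Omega(n^h)$. (In the edge case $r=0$, i.e.\ $s=t-1$, one can bypass Theorem~\ref{thm:johnson} entirely: then $w=h$, and by Theorem~\ref{thm:ball size} $\abs{\Ball_{h,b}(\mathbf{0})}=\Theta(n^h)$, so any subset of $\Ball_{h,b}(\mathbf{0})$ of size $\Omega(n^h)$ is a valid $(0,b)$-burst-correcting code contained in $\Ball_{w,b}(\mathbf{0})$.)

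\emph{The reads.} By the triangle inequality for $d_b$, every $\bw$ with $\wt_b(\bw)\leq t-w=s-h+1$ satisfies $d_b(\bx_i,\bw)\leq\wt_b(\bx_i)+\wt_b(\bw)\leq w+(s-h+1)=t$ for each $i$, hence $\Ball_{s-h+1,b}(\mathbf{0})\subseteq\bigcap_{i=1}^p\Ball_{t,b}(\bx_i)$. By Theorem~\ref{thm:ball size}, $\abs{\Ball_{s-h+1,b}(\mathbf{0})}=\Theta(n^{s-h+1})$, whereas $\Ord_b(N(n))=s-h$ forces $N(n)=o(n^{s-h+1})$; so for all large $n$ we may choose $N(n)$ distinct vectors $Y=\set{\by_1,\dots,\by_{N(n)}}\subseteq\Ball_{s-h+1,b}(\mathbf{0})$. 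Then $Y\subseteq\Ball_{t,b}(\bx_1)$ with $\bx_1\in\cC$, so $Y$ is a legitimate choice in the definition of $L(n,\cC,\ch(t,b),N(n))_q$; and the same triangle-inequality estimate shows every $\bx_i$ lies in every $\Ball_{t,b}(\by_j)$, so $p(Y)=\abs{\cC\cap\bigcap_j\Ball_{t,b}(\by_j)}=\abs{\cC}=p=\Omega(n^h)$. Hence $L(s,N(n))\geq L(n,\cC,\ch(t,b),N(n))_q\geq p(Y)=\Omega(n^h)$, which is what we wanted.

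All the real content is already carried by Theorem~\ref{thm:johnson} (the Kahn-matching construction of a dense constant-$b^\leq$-burst-weight code with prescribed minimum burst distance), so the work that remains is light. The one genuinely load-bearing choice is the parameter $w=t-s+h-1$: it is tuned so that the code coming out of Theorem~\ref{thm:johnson} is simultaneously $(t-s-1,b)$-burst-correcting and tight enough that $\Ball_{s-h+1,b}(\mathbf{0})$ sits inside every codeword's error ball; the second point to check is that this common ball is large enough to host $N(n)$ reads, which is exactly where the hypothesis $\Ord_b(N(n))=s-h$ is used. I do not expect any obstacle beyond this bookkeeping.
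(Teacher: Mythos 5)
Your proposal is correct and follows essentially the same route as the paper: apply Theorem~\ref{thm:johnson} with $w=t-s+h-1$, $r=t-s-1$ to get a $(t-s-1,b)$-burst-correcting code of size $\Omega(n^h)$ inside $\Ball_{t-s+h-1,b}(\mathbf{0})$, observe that $\Ball_{s-h+1,b}(\mathbf{0})$ lies in the intersection of all codewords' radius-$t$ balls, and use $N(n)=o(n^{s-h+1})$ to conclude the list must contain the whole code. Your explicit treatment of the degenerate case $r=0$ (where Theorem~\ref{thm:johnson}, stated for positive $r$, does not literally apply) is a small point of added care that the paper glosses over.
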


\begin{proof}[Proof of~\cref{thm:lower list recon}]
We construct a sequence of codes $\cC_n\subseteq\Sigma_q^n$ in the following way. Setting \( r = t-s-1 \) and \( w = t-s+h-1 \) in Theorem~\ref{thm:johnson}, we can construct a \( (t-s-1,b) \)-burst-correcting code \( \cC_n \) such that 
$\cC_n \subseteq \Ball_{t-s+h-1,b}(\mathbf{0})$,
and $\abs{\cC_n} = \Omega(n^h)$. Since each codeword \( \bx \in \cC_n \) has \( b^\leq \)-weight at most \( t-s+h-1 \), it follows that  
\[
\Ball_{s-h+1,b}(\mathbf{0}) \subseteq \Ball_{t,b}(\bx).
\]  
Thus,  
\[
\Ball_{s-h+1,b}(\mathbf{0}) \subseteq \bigcap_{\bx \in \cC_n} \Ball_{t,b}(\bx).
\]  
It now follows that
\[
\abs*{ \bigcap_{\bx\in\cC_n} \Ball_{t,b}(\bx) } \geq \abs*{\Ball_{s-h+1,b}(\mathbf{0})} = \Omega(n^{s-h+1}).
\]  
This means that unique reconstruction of $\cC_n$ requires $\Omega(n^{s-h+1})$ reads. Conversely, let $N(n)$ be any sequence such that $\Ord_b(N(n))=s-h$, which in particular, implies $N(n)=o(n^{s-h+1})$. Thus, for all sufficiently large $n$, using $N(n)$ reads for decoding $\cC_n$, the list size must include all of $\cC_n$, i.e., for all sufficiently large $n$,
\[
L(n,\cC_n,\ch(t,b),N(n)) = \abs*{\cC_n}=\Omega(n^h),
\]
and therefore
\[ L(s,N(n)) \geq L(n,\cC,\ch(t,b),N(n)) = \Omega(n^h).\]
Hence,
\[
\liminf_{n\to\infty} \frac{\log_q L(s,N(n))}{\log_q n} \geq h.
\]
Since this holds for any $N(n)$ such that $\Ord_b(N(n))=s-h$, it follows that
\[
\La(h) =\inf_{N(n):\Ord_{b}(N(n))=s-h}\liminf_{n\to\infty} \frac{\log_q L(s,N)}{\log_q n} \geq h.
\]
\end{proof}

We turn to consider an upper bound on $\La(h)$. Let us first consider the substitution channel $\ch(t,1)$. Here, the strategy is finding a pair of reads $\by_i,\by_j$ with large distance among the $N$ reads, which induces $p+1$ codewords located in the intersection $\Ball_{t,1}(\by_i) \cap \Ball_{t,1}(\by_j)$. Then, the classical Johnson bound yields an upper bound.

Several works may help us in this strategy. For example, a fundamental result of Kleitman~\cite{1966Kleitman} states that for $n\ge 2d+1$, if any pair of binary sequences in a set  $\cF\subseteq \Sigma_2^n$ has Hamming distance at most $2d$, then $\abs{\cF}\le \sum_{i=0}^{d}\binom{n}{i}$. Also see~\cite{ahlswede1992diametric,ahlswede1998diametric,frankl1980erdos,2023Nonshadow,huang2020subsets}. In the context of larger alphabets and under the condition of bounded Hamming distance, Ahlswede \textit{et al}.~\cite{ahlswede1992diametric} generalized  Kleitman's Theorem as follows.

\begin{theorem}[{{\cite[Theorem 7]{ahlswede1992diametric}}}]
\label{thm:Ahl}
Let $q,d$ be fixed integers, $n\geq (q-1)^{2d-1}+2d$, and let $\cA\subseteq\Sigma_q^n$. If for any two $\bx,\by \in \cA$, the Hamming distance $d_1(\bx,\by) \leq 2d$, then we have  
\[
      \abs*{\cA} \leq \sum_{i=0}^d \binom{n}{i}(q-1)^i = O(n^d).
      \]
\end{theorem}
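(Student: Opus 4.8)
The plan is to prove Theorem~\ref{thm:Ahl} in two tiers. The exact extremal value $\sum_{i=0}^{d}\binom{n}{i}(q-1)^i$ is the $q$-ary diametric theorem of Ahlswede and Khachatrian~\cite{ahlswede1992diametric}, which I would simply quote; but since only the order $\abs{\cA}=O(n^d)$ is ever used in the sequel, I will give a short self-contained argument for that order and then indicate where the extra work for the sharp constant lies.

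I would prove $\abs{\cA}=O(n^d)$ by induction on $d$; the base case $d=0$ is trivial since diameter $0$ forces $\abs{\cA}\leq 1$. For $d\geq 1$ assume $\abs{\cA}\geq 2$, set $D=\max_{\bu\neq\bv\in\cA}d_1(\bu,\bv)$, and fix a diametral pair $\ba,\bb\in\cA$ with $d_1(\ba,\bb)=D$. If $D\leq 2(d-1)$, the induction hypothesis applied with $d-1$ already gives $\abs{\cA}=O(n^{d-1})$, so I may assume $D\in\set{2d-1,2d}$. Put $T=\supp(\ba-\bb)$, so $\abs{T}=D\leq 2d$ and $\ba,\bb$ agree on the complement $\overline{T}$ of $T$ in $[n]$.

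Now, for an arbitrary $\bx\in\cA$, I would split $T$ by comparing $\bx$ to $\ba$ and $\bb$: since $a_i\neq b_i$ for $i\in T$, write $T=T_a\cup T_b\cup T_\circ$ with $T_a=\set{i\in T:x_i=a_i}$, $T_b=\set{i\in T:x_i=b_i}$ and $T_\circ$ the rest, and put $U=\supp(\bx-\ba)\cap\overline{T}=\supp(\bx-\bb)\cap\overline{T}$. Then
\[
d_1(\bx,\ba)=\abs{T_b}+\abs{T_\circ}+\abs{U}\leq 2d,\qquad d_1(\bx,\bb)=\abs{T_a}+\abs{T_\circ}+\abs{U}\leq 2d;
\]
adding these two inequalities and using $\abs{T_a}+\abs{T_b}+\abs{T_\circ}=D$ gives $D+\abs{T_\circ}+2\abs{U}\leq 4d$, hence $\abs{T_\circ}\leq 2d+1$ and $\abs{U}\leq\floor{(4d-D)/2}\leq d$. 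Consequently $\bx$ is recovered from the fixed vectors $\ba,\bb$ together with (i) the ordered partition $(T_a,T_b,T_\circ)$ of the fixed set $T$, (ii) the symbols of $\bx$ on $T_\circ$ (each avoiding $\set{a_i,b_i}$), and (iii) the set $U\subseteq\overline{T}$ with $\abs{U}\leq d$ together with the symbols of $\bx$ on $U$ (each avoiding the common value $a_i=b_i$). The data in (i)--(ii) range over at most $3^{2d}(q-2)^{2d}$ possibilities, a constant depending only on $q,d$, while the data in (iii) range over $\sum_{u=0}^{d}\binom{n}{u}(q-1)^u=O(n^d)$. Hence $\abs{\cA}=O(n^d)$, closing the induction; note this argument needs no lower bound on $n$.

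The main obstacle is the second tier: upgrading $O(n^d)$ to the sharp $\sum_{i=0}^{d}\binom{n}{i}(q-1)^i$, i.e.\ showing that a Hamming ball of radius $d$ is the largest diameter-$2d$ anticode. The diametral-pair reduction above is no longer enough. The standard route is a $q$-ary down-compression argument combined with the complete-intersection theorem for $t$-intersecting families; the delicate point is that pushing a coordinate toward $0$ can \emph{increase} a pairwise Hamming distance when two vectors already share a common nonzero symbol there, so the compression operators and the accompanying induction must be set up with care, and the hypothesis $n\geq(q-1)^{2d-1}+2d$ is exactly what places us in the range where the ball, rather than a competing ``generalized ball'', is extremal. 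For the applications of Theorem~\ref{thm:Ahl} in this paper only the order matters, so I would state the exact bound with a reference to~\cite{ahlswede1992diametric} and include the $O(n^d)$ argument above in full.
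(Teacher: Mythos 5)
Theorem~\ref{thm:Ahl} is quoted in the paper directly from \cite{ahlswede1992diametric} with no proof given, so there is nothing in-paper to compare against; your decision to cite the reference for the exact extremal value is precisely what the paper does, and your diagnosis that the sharp constant requires compression plus the complete intersection theorem is accurate. Your supplementary diametral-pair argument for the order $\abs{\cA}=O(n^d)$ is correct: the identities $d_1(\bx,\ba)=\abs{T_b}+\abs{T_\circ}+\abs{U}$ and $d_1(\bx,\bb)=\abs{T_a}+\abs{T_\circ}+\abs{U}$ hold, summing them does give $\abs{U}\le\floor{(4d-D)/2}\le d$ once $D\ge 2d-1$, the case $D\le 2(d-1)$ is handled by induction, and $\bx$ is indeed reconstructed from the listed data. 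Two small points. First, your count ``at most $3^{2d}(q-2)^{2d}$'' for items (i)--(ii) degenerates to $0$ when $q=2$; the clean bound is $\prod_{i\in T}\bigl(1+1+(q-2)\bigr)=q^{\abs{T}}\le q^{2d}$, still a constant, so nothing breaks. Second, your remark that ``only the order is ever used in the sequel'' deserves a caveat: Theorem~\ref{thm:Ahl} itself is purely motivational in this paper, but its burst-metric analogue, Theorem~\ref{thm:burst diameter}, \emph{is} used with the exact constant (in Theorem~\ref{thm:upper list recon}, $N=\abs{\Ball_{s-h,b}(\mathbf{0})}+1$ reads must force a pair at distance $\ge 2(s-h)+1$, which an $O$-bound with an unspecified constant would not deliver); that is why the paper proves the burst version sharply via the shifting argument rather than settling for an order-of-magnitude bound.
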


Therefore, an interesting question arises: Under which distance definitions and for different alphabets can we obtain results analogous to Kleitman's diametric inequality? Next, we will prove that under the burst distance, a similar inequality holds, which may be of independent interest. Before mentioning our result, we first introduce some necessary definitions.

\begin{definition}
For any set $\cA \subseteq \Sigma_q^n$, $\bx = (x_1,\ldots, x_n) \in \cA$, $i\in [n]$, and $a\in \Sigma_q$, we define the shifting function $S_{i,a}$ of $\bx$ to be 
    \[
    S_{i,a}(\bx) = \begin{cases}
        \bx' = (x_1,\ldots, x_{i-1}, 0, x_{i+1},\ldots,x_n), &\text{ if } x_i = a \text{ and } \bx' \notin \cA,\\
        \bx, &\text{ otherwise}.
    \end{cases}
    \]
Define $S_{i,a}(\cA) = \set{S_{i,a}(\bx): \bx \in \cA}$.
\end{definition}

It is clear that $\abs{S_{i,a}(\cA)} = \abs{\cA}$. For any $\cA \subseteq \Sigma_q^n$ we define the \emph{diameter} of $\cA$ as the largest distance between a pair of its elements. Now, if $\cA$ has diameter $D$, we show that $S_{i,a}(\cA)$ also has diameter at most $D$, namely, the shifting of $\cA$ does not increase its diameter.

\begin{prop}\label{prop:NoChangeShift}
Let $\cA\subseteq \Sigma_{q}^{n}$ be a set with $b^{\leq}$-diameter $D$. Then for any $\bx,\by\in\cA$,  we have $d_b(S_{i,a}(\bx),S_{i,a}(\by))\le D$.
\end{prop}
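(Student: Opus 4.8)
The plan is to fix $\bx,\by\in\cA$ and analyze how the shift $S_{i,a}$ acts on this pair, distinguishing cases based on whether each of $\bx,\by$ is moved or not. If neither $\bx$ nor $\by$ is shifted by $S_{i,a}$, then $S_{i,a}(\bx)=\bx$ and $S_{i,a}(\by)=\by$, so $d_b(S_{i,a}(\bx),S_{i,a}(\by))=d_b(\bx,\by)\le D$ trivially. Similarly, if both are shifted, then $S_{i,a}(\bx)=\bx'$ and $S_{i,a}(\by)=\by'$ differ from $\bx,\by$ respectively only by zeroing out coordinate $i$; since $\bx[i]=\by[i]=a$ in that case, we have $\bx-\by=\bx'-\by'$, hence $d_b(S_{i,a}(\bx),S_{i,a}(\by))=d_b(\bx,\by)\le D$. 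So the only substantive case is when exactly one of the two, say $\bx$, is shifted and $\by$ is not (the other subcase being symmetric). Here $S_{i,a}(\bx)=\bx'$ (with $\bx[i]=a$, $\bx'[i]=0$, and $\bx'\notin\cA$) while $S_{i,a}(\by)=\by$.

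In this main case I would argue as follows. Let $d=d_b(\bx',\by)$ and pick pairwise disjoint $b^{\le}$-bursts $\be_1,\dots,\be_d$ whose supports cover $\supp(\bx'-\by)$, realizing this distance. First, since $\bx$ and $\bx'$ agree everywhere except coordinate $i$, we have $\supp(\bx-\by)\subseteq\supp(\bx'-\by)\cup\{i\}$. If coordinate $i$ already lies in one of the intervals $I(\be_k)$, then the same $d$ bursts also witness $\bx-\by\in B_{\le d,\le b}$, so $d_b(\bx,\by)\le d$; but $d_b(\bx,\by)\le D$ is what we want for $\bx'$, so I need the reverse. Instead, observe that because $\by$ \emph{was not} shifted, either $\by[i]\ne a$, or $\by[i]=a$ but the vector obtained from $\by$ by zeroing coordinate $i$ lies in $\cA$. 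The key point to exploit is the latter: let $\by'$ be $\by$ with coordinate $i$ set to $0$; if $\by'\in\cA$, then $d_b(\bx',\by')\le D$, and since $\bx'[i]=\by'[i]=0$ we get $\bx'-\by'=(\bx-\by)$ restricted appropriately — more precisely $\supp(\bx'-\by')=\supp(\bx'-\by)\setminus\{i\}$ and also $d_b(\bx',\by)$ differs from $d_b(\bx',\by')$ by at most $1$, so $d_b(S_{i,a}(\bx),S_{i,a}(\by))=d_b(\bx',\by)\le d_b(\bx',\by')+1$; this still has a stray $+1$.

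The clean way, which I would actually carry out, is to compare $d_b(\bx',\by)$ directly with $d_b(\bx,\by)$. We have $\bx'=\bx-a\be_i^{(0)}$ where $\be_i^{(0)}$ is the unit-type vector supported on $\{i\}$ (a $1^{\le}$-burst, hence a $b^{\le}$-burst). If $d_b(\bx,\by)<d_b(\bx',\by)$, then in particular $\bx'\ne\bx$, so $\bx[i]=a$ and $\bx'\notin\cA$; I claim this forces a contradiction with the choice of $\bx'$ being a genuine shift. Indeed, take an optimal burst decomposition $\bx-\by=\bc_1+\dots+\bc_m$ with $m=d_b(\bx,\by)\le D$ and disjoint $\bc_j$. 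Coordinate $i$ lies in $\supp(\bx-\by)$ (since $\bx[i]=a\ne0$ forces $\bx[i]\ne\by[i]$, because if $\by[i]=a$ too then zeroing $i$ in $\by$ stays in $\cA$, and... ) — here I need a short argument that $\by[i]\ne\bx[i]$ in this subcase, which follows because if $\by[i]=a$ and $\by'\in\cA$ then by symmetry of the construction we could equally have shifted $\by$; the definition of $S_{i,a}$ applied to $\by$ would have returned $\by'\ne\by$, contradicting that $\by$ is unshifted — so $\by[i]\ne a$, hence $i\in\supp(\bx-\by)$, so $i\in I(\bc_{j_0})$ for some $j_0$. Replacing $\bc_{j_0}$ by $\bc_{j_0}$ with coordinate $i$ zeroed (still a $b^{\le}$-burst supported inside the same interval) gives a decomposition of $\bx'-\by$ into at most $m$ disjoint $b^{\le}$-bursts, so $d_b(\bx',\by)\le m=d_b(\bx,\by)\le D$. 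This is exactly the bound we want.

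The main obstacle I anticipate is the bookkeeping in the ``exactly one shifted'' case — specifically, nailing down why $\by[i]\ne a$ there (equivalently, why $\by$ being unshifted but with $\by[i]=a$ would be contradictory), and handling cleanly the interaction between the support of $\bx-\by$, the support of $\bx'-\by$, and the chosen disjoint burst intervals. Everything else reduces to the observation that zeroing one coordinate inside an existing $b^{\le}$-burst interval keeps it a $b^{\le}$-burst, so distances cannot increase. I would structure the final writeup as: (1) trivial cases (neither/both shifted); (2) WLOG $\bx$ shifted, $\by$ not; (3) show $\by[i]\ne\bx[i]$, hence $i\in\supp(\bx-\by)$; (4) take an optimal $\le D$ decomposition of $\bx-\by$, zero out coordinate $i$ in the burst containing it, and conclude $d_b(\bx',\by)\le D$.
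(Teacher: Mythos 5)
Your case split and the handling of the trivial cases (neither or both shifted) are fine, and in the main case your argument works correctly in the subcase $\by[i]\neq a$: there $i\in\supp(\bx-\by)$, zeroing coordinate $i$ inside the burst containing it gives $\supp(\bx'-\by)\subseteq\supp(\bx-\by)$, hence $d_b(\bx',\by)\leq d_b(\bx,\by)\leq D$. But there is a genuine gap in the remaining subcase, and your attempt to rule it out by contradiction misreads the definition of $S_{i,a}$. If $\by[i]=a$ but the vector $\by'$ obtained by zeroing coordinate $i$ of $\by$ already lies in $\cA$, then $S_{i,a}(\by)=\by$ \emph{by definition} (the shift is suppressed precisely to preserve $\abs{S_{i,a}(\cA)}=\abs{\cA}$); no contradiction arises, so you cannot conclude $\by[i]\neq a$. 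In that subcase your main argument breaks: since $\bx[i]=\by[i]=a$, position $i$ is \emph{not} in $\supp(\bx-\by)$, whereas $\bx'[i]-\by[i]=-a\neq 0$, so $\supp(\bx'-\by)=\supp(\bx-\by)\cup\set{i}$ strictly grows, and in general $d_b(\bx',\by)$ can exceed $d_b(\bx,\by)$ by one (position $i$ may need a fresh burst), which would violate the bound when $d_b(\bx,\by)=D$.

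The fix, which is what the paper does, is to exploit the membership $\by'\in\cA$ (guaranteed in exactly this subcase) rather than comparing against $d_b(\bx,\by)$: the vectors $\bx'-\by$ and $\bx-\by'$ agree off coordinate $i$ and are both nonzero (namely $-a$ and $a$) at coordinate $i$, so they have the same support and hence $d_b(\bx',\by)=d_b(\bx,\by')\leq D$, the last inequality because $\bx,\by'\in\cA$. Note that your first, abandoned attempt in the second paragraph had a related flaw: you invoked $d_b(\bx',\by')\leq D$, but $\bx'\notin\cA$ in this case (that is why $\bx$ was shifted), so that bound is not available. With the subcase $\by[i]=a$ repaired as above, the rest of your writeup goes through.
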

\begin{proof}[Proof of~\cref{prop:NoChangeShift}]
For any two \( \bx, \by \in \cA \), we bound \( d_b(S_{i,a}(\bx), S_{i,a}(\by)) \) by considering four cases.  

\textbf{Case 1:} If \( S_{i,a}(\bx) = \bx \) and \( S_{i,a}(\by) = \by \), then it is clear that  
\[
d_b(S_{i,a}(\bx), S_{i,a}(\by)) = d_b(\bx, \by) \leq D.
\]  

\textbf{Case 2:} If \( S_{i,a}(\bx) = \bx \) and \( S_{i,a}(\by) \neq \by \), then by definition, we have \( \by[i] = a \neq 0 \) and \( S_{i,a}(\by)[i] = 0 \). If \( \bx[i] \neq \by[i] \), then  
\[
\supp(S_{i,a}(\bx)- S_{i,a}(\by)) \subseteq \supp(\bx - \by),
\]  
which implies that  
\[
d_b(S_{i,a}(\bx), S_{i,a}(\by)) \leq d_b(\bx, \by).
\]  
Otherwise, when \( \bx[i] = \by[i] \), we have \( \bx[i] = a \). Since \( S_{i,a}(\bx) = \bx \), it follows that the sequence  
$\bx' = (x_1, \ldots, x_{i-1}, 0, x_{i+1}, \ldots, x_n)$  
belongs to \( \cA \). Thus,  
\[
\supp(S_{i,a}(\bx)- S_{i,a}(\by)) = \supp(\bx' - \by),
\]  
which yields  
\[
d_b(S_{i,a}(\bx), S_{i,a}(\by)) = d_b(\bx', \by) \leq D.
\]  

\textbf{Case 3:} If \( S_{i,a}(\bx) \neq \bx \) and \( S_{i,a}(\by) = \by \), we can apply the same argument as in Case 2, and we omit the repeated details.  

\textbf{Case 4:} If \( S_{i,a}(\bx) \neq \bx \) and \( S_{i,a}(\by) \neq \by \), then by definition, we have \( \bx[i] = a = \by[i] \) and \( S_{i,a}(\bx)[i] = 0 = S_{i,a}(\by)[i] \). Therefore,  
\[
S_{i,a}(\bx) - S_{i,a}(\by) = \bx - \by,
\]  
which implies that  
\[
d_b(S_{i,a}(\bx), S_{i,a}(\by)) = d_b(\bx, \by) \leq D.
\]  
\end{proof}

Ahlswede \textit{et al}.~\cite{ahlswede1998diametric} showed that any set $\cA \subseteq \Sigma_q^n$ can be transformed into a set $\cA'$ by finitely repeating shifting on $\cA$, such that $\cA'$ is a fixed point, namely,
$S_{i,a}(\cA') = \cA'$,
for all $i\in [n]$ and $a\in \Sigma_q$. In other words, $S_{i,a}(\bx) = \bx$ for $\bx \in \cA'$. We shall use a similar approach. In the following, we say $\cA$ is isomorphic to $\cA'$ if we can obtain $\cA'$ from $\cA$ using a sequence of shift operations.
 
\begin{theorem}\label{thm:burst diameter}
Let $q,d,b$ be fixed positive integers, $q\geq 2$. For all sufficiently large $n$, if $\cA\subseteq\Sigma_{q}^{n}$ has $b^{\leq}$-diameter $2d$, then
      \[
      \abs*{\cA} \leq \abs{\Ball_{d,b}(\boldsymbol{0})},
      \]
and equality holds if and only if $\cA$ is isomorphic to $\Ball_{d,b}(\boldsymbol{0})$.
\end{theorem}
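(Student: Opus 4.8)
The plan is to adapt the shifting/compression strategy behind Kleitman-type diametric inequalities to the $b^{\leq}$-burst metric. First I reduce to a compressed configuration: by \cref{prop:NoChangeShift} every shift $S_{i,a}$ preserves $\abs{\cA}$ and does not increase the $b^{\leq}$-diameter, and, as in Ahlswede \textit{et al.}~\cite{ahlswede1998diametric}, iterating the $S_{i,a}$ over all $i\in[n]$ and $a\in\Sigma_q$ terminates at a set $\cA'$ that is a fixed point, $S_{i,a}(\cA')=\cA'$ for all $i,a$. Since $\abs{\cA'}=\abs{\cA}$ and $D(\cA')\le D(\cA)\le 2d$, it suffices to treat $\cA$ itself a fixed point. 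A fixed point is exactly a \emph{downset}: zeroing any coordinate of any member keeps it in $\cA$, hence so does zeroing any set of coordinates; in particular $\mathbf{0}\in\cA$ and so $\wt_b(\bx)=d_b(\bx,\mathbf{0})\le 2d$ for every $\bx\in\cA$.

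Next I record the form of the hypothesis that the argument uses. Because $\wt_b(\be)$ equals the least number of pairwise-disjoint length-$\le b$ cyclic intervals covering $\supp(\be)$, it depends only on $\supp(\be)$, so $d_b(\bx,\by)$ depends only on $\set{i:\bx[i]\neq\by[i]}$. Together with the downset property this shows that $D(\cA)\le 2d$ is equivalent to the following: any $\bx,\by\in\cA$ whose supports sit in cyclic regions separated by a gap of at least $b$ satisfy $\wt_b(\bx)+\wt_b(\by)\le 2d$ --- the burst analogue of ``no two disjoint members of a downset have total size exceeding $2d$''.

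The combinatorial heart is then to show that a downset $\cA$ of $b^{\leq}$-diameter at most $2d$ satisfies $\abs{\cA}\le\abs{B_{\le d,\le b}}=\abs{\Ball_{d,b}(\mathbf{0})}$, with strict inequality unless $\cA=\Ball_{d,b}(\mathbf{0})$. I would argue by induction (the base case $d=0$ being trivial). If every member has $\wt_b\le d$ then $\cA\subseteq\Ball_{d,b}(\mathbf{0})$ and we are done --- and this is the only way equality can occur. Otherwise a heavy member $\bx^{*}$ with $w:=\wt_b(\bx^{*})>d$ exists; fix a canonical minimal cover of $\supp(\bx^{*})$ by disjoint length-$\le b$ cyclic intervals $I_1,\dots,I_w$, noting that for $n$ large the complement $\overline{I_1\cup\dots\cup I_w}$ contains a cyclic gap of length $\Omega(n)$. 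Now peel off one block, say $I_w$: partition $\cA$ into the members whose support avoids a $b$-neighborhood of $I_w$ and the remaining ones, zero out the $I_w$-part of the latter (a downset move, with fibers of size $\le q^{b}$), and pass to the space in which $I_w$ is frozen. By the reformulation above, the presence of $\bx^{*}$ forces the diameter budget to drop on the ``link'' part, while the ``deletion'' part is essentially the same problem with one fewer coordinate block; feeding these into the induction --- tracking, as in the classical proof, the parity of the diameter and the nesting of the link inside the deletion --- bounds $\abs{\cA}$, and one checks that this bound is at most $\abs{\Ball_{d,b}(\mathbf{0})}$, and strictly smaller when a heavy member is present. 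Finally, equality then forces the compressed set to equal $\Ball_{d,b}(\mathbf{0})$ for all large $n$, and unwinding the shift operations shows the original $\cA$ is isomorphic to $\Ball_{d,b}(\mathbf{0})$.

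The hard part is this last stage. One must set up the block-peeling so that isolating a burst-block genuinely decreases the diameter parameter and does not distort $b^{\leq}$-distances among the surviving coordinates despite the cyclic geometry; one must carry the even and odd diameter cases through together (the classical recursion is already delicate here, since the naive two-term recursion overshoots $\abs{\Ball_{d,b}(\mathbf{0})}$ and must be tightened using the nesting of the link in the deletion); and, above all, one must keep the estimate sharp enough to land on the exact value $\abs{\Ball_{d,b}(\mathbf{0})}$ rather than merely its order $\Theta(n^{d})$ --- which is precisely where the hypothesis ``$n$ sufficiently large'' is consumed, and from which the equality characterization is read off.
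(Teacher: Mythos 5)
Your first half matches the paper exactly: shift to a fixed point via \cref{prop:NoChangeShift}, observe that a fixed point is a downset containing $\mathbf{0}$ with all weights at most $2d$, and note that if every member has $\wt_b \le d$ then $\cA \subseteq \Ball_{d,b}(\mathbf{0})$, which is also the only equality case. The gap is in the remaining case, where a heavy member $\bx^*$ with $\wt_b(\bx^*) = k \ge d+1$ exists. There you propose a Kleitman/Frankl-style induction with block-peeling, link/deletion decompositions, and parity tracking, and you explicitly defer the substance (``one checks that this bound is at most $\abs{\Ball_{d,b}(\mathbf{0})}$\dots The hard part is this last stage''). That deferred step is the entire content of the theorem in this case, and nothing in your sketch shows how the recursion would close, how the cyclic geometry is controlled, or why the final count lands below $\abs{\Ball_{d,b}(\mathbf{0})}$. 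As written, the heavy-member case is not proved.

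What you are missing is that no sharp recursion is needed in that case, precisely because the statement only claims the bound for all sufficiently large $n$. The paper's argument is short: cover the bursts of the heavy member $\bv$ by disjoint intervals $I_1,\dots,I_k$, take their $(1,b)$-extensions $I^*$ with $\ell = \abs{I^*} \le 6bd = O(1)$, and partition $\cA$ according to the restriction to $I^*$. For any $\bw$ in a part, the downset property puts $\bw'$ (obtained by zeroing $I^*$) back in $\cA$, and since every burst of $\bw'$ is at gap more than $b$ from every burst of $\bv$, the distance is additive: $d_b(\bw',\bv) \ge \wt_b(\bw') + k$, forcing $\wt_b(\bw') \le 2d - k \le d-1$. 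Hence each part injects into $\Ball_{d-1,b}(\mathbf{0})$ and
\[
\abs*{\cA} \le q^{\ell}\,\abs*{\Ball_{d-1,b}(\mathbf{0})} = O(n^{d-1}) < \abs*{\Ball_{d,b}(\mathbf{0})} = \Theta(n^d)
\]
for large $n$ by Theorem~\ref{thm:ball size}. So the heavy case gives a strictly smaller, non-sharp bound and can never meet equality; sharpness is only needed in the light case, where it is automatic. (A minor secondary point: your claimed ``equivalence'' between diameter $\le 2d$ and the separated-support condition is only an implication in the direction you use; the converse is neither needed nor obviously true.)
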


\begin{proof}[Proof of~\cref{thm:burst diameter}]
By definition and Proposition~\ref{prop:NoChangeShift}, the shifting operation \( S_{i,a}(\cA) \) neither alters the size of \( \cA \) nor increases its diameter. By repeating shifts we must reach a fixed point. Thus, we may assume without loss of generality that \( \cA \) is that fixed point, namely, it satisfies  
\[
S_{i,a}(\cA) = \cA \quad \text{for all } i \in [n] \text{ and } a \in \Sigma_q.
\]  

For any \( \bx = (x_1, \ldots, x_n) \in \cA \) with \( x_i \neq 0 \) for some \( i \in [n] \), we observe that 
$S_{i,x_i}(\bx) = \bx$.
This implies that the vector  
$\bx' = (x_1, \ldots, x_{i-1}, 0, x_{i+1}, \ldots, x_n)$,  
obtained by setting the \( i \)-th entry of \( \bx \) to zero, also belongs to \( \cA \). Thus, for any \( \bx \in \cA \), any vector obtained by flipping an arbitrary subset of nonzero entries in \( \bx \) to zero remains in \( \cA \), and in particular, \(\mathbf{0} \in \cA\). Additionally, \( \wt_b(\bx) \leq 2d \) for all \( \bx \in \cA \). We analyze the size of \( \cA \) by considering the following cases.

\textbf{Case 1:} If there exists \( \bv \in \cA \) with \( \wt_b(\bv) = k \), where \( k \in [d+1, 2d] \), let the bursts of \( \bv \) be covered by \( k \) disjoint intervals of length at most \( b \), denoted by \( I_1, \ldots, I_k \). Define the extended intervals \( I_i^* \) as the corresponding \( (1, b) \)-extensions of \( I_i \) (see definition at~\eqref{eq:ext}), and let  
\[
I = \bigcup_{i=1}^k I_i, \quad I^* = \bigcup_{i=1}^k I_i^*, \quad \text{and} \quad \ell = \abs*{I^*},
\]  
hence, \( \ell \leq 3bk \leq 6bd \).  

For any \( \bu \in \Sigma_q^\ell \), define  
\[
\cA_{\bu} = \set*{ \bx \in \cA : \bx[I^*] = \bu }.
\]  
Clearly,  
\[
\cA = \bigcup_{\bu \in \Sigma_q^\ell} \cA_{\bu},
\]  
is a partition of $\cA$.

For any \( \bu \in \Sigma_q^\ell \) and \( \bw \in \cA_{\bu} \), define \( \bw' \) as follows:  
\[
\bw'[j] =  
\begin{cases}  
\bw[j], & j \in [n] \setminus I^*, \\  
0, & j \in I^*.
\end{cases}
\]  
By our earlier observation, \( \bw' \in \cA \). Since the gap between each burst of \( \bw' \) and each burst of \( \bv \) is greater than \( b \), we obtain  
\[
d_b(\bw', \bv) \geq d_b(\bw', \mathbf{0}) + d_b(\mathbf{0}, \bv) = \wt_b(\bw') + k,
\]  
where the inequality follows from the definition of the \( b^\leq \)-burst distance. Thus,  
\[
\wt_b(\bw') \leq d_b(\bw', \bv) - k \leq 2d - k \leq d - 1.
\]  
This implies that $\cA_{\bu}\subseteq \Ball_{d-1,b}(\mathbf{0})$, 
therefore $\set{\bm{w}':\bm{w}\in \cA_{\bm{u}}} \subseteq \Ball_{d-1,b}(\mathbf{0})$. Moreover, since there is a one-to-one correspondence between $\bm{w}'$ and $\bm{w}$ in $\cA_{\bm{u}}$, we get
\[
\abs*{\cA_{\bu}} \leq \abs*{\Ball_{d-1, b}(\mathbf{0})}.
\]  
It follows that,  
\[
\abs*{\cA} = \sum_{\bu \in \Sigma_q^\ell} \abs*{\cA_{\bu}} \leq q^\ell \abs*{\Ball_{d-1,b}(\mathbf{0})}.
\]  
By the asymptotic bound on the ball size from Theorem~\ref{thm:ball size}, we know that for all sufficiently large $n$,
\[
q^\ell \abs*{\Ball_{d-1,b}(\mathbf{0})} < \abs*{\Ball_{d,b}(\mathbf{0})}.
\]  
Consequently, in this case we have 
\[
\abs*{\cA} < \abs*{\Ball_{d,b}(\mathbf{0})}.
\]  

\textbf{Case 2:}
If every element \( \bv \in \cA \) satisfies \( \wt_b(\bv) \leq d \), then by the definition of the \( b^\leq \)-burst distance, we have  
$\cA \subseteq \Ball_{d,b}(\mathbf{0})$.
Consequently, the size of \( \cA \) satisfies  
\[
\abs*{\cA} \leq \abs*{\Ball_{d,b}(\mathbf{0})}.
\]  
Moreover, equality holds if and only if \( \cA \) is isomorphic to \( \Ball_{d,b}(\mathbf{0}) \).
\end{proof}

We are now at a position to prove the upper bound on $\La(h)$.

\begin{theorem}~\label{thm:upper list recon}
Let $q,b,t,s,h$ be constant positive integers, $q \geq 2$, and $t\geq s\geq h\geq 1$. Assume $\cC\subseteq \Sigma_q^n$ is a $(t-s-1,b)$-burst-correcting code. Then for $N\geq \abs{\Ball_{s-h,b}(\mathbf{0})}+1$ reads, the list size for $\cC$ satisfies,
\[
L(n,\cC,\ch(t,b),N) \leq t q^{b(h+10t)} n^h.
\]
\end{theorem}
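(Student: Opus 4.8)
The plan is to fix any $N$ reads $Y=\set{\by_1,\dots,\by_N}\subseteq\Ball_{t,b}(\bx)$ (for some unknown transmitted codeword $\bx$), and to bound the list size $p=p(Y)=\abs{\cC\cap\bigcap_{i=1}^N\Ball_{t,b}(\by_i)}$. First I would argue that among the $N$ reads there must exist a pair $\by_i,\by_j$ that are far apart in the $b^{\le}$-burst metric. Concretely, suppose to the contrary that $d_b(\by_i,\by_j)\le 2(s-h)$ for all $i,j$; then $Y$ is a set of $b^{\le}$-diameter at most $2(s-h)$, so by Theorem~\ref{thm:burst diameter} (applied with $d=s-h$, valid for $n$ large) we get $\abs{Y}=N\le\abs{\Ball_{s-h,b}(\mathbf 0)}$, contradicting the hypothesis $N\ge\abs{\Ball_{s-h,b}(\mathbf 0)}+1$. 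Hence there exist $\by_i,\by_j\in Y$ with $d_b(\by_i,\by_j)\ge 2(s-h)+1 = 2(s-h+1)-1$.

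Next I would localize the list. Every codeword $\bz$ on the list lies in $\Ball_{t,b}(\by_i)\cap\Ball_{t,b}(\by_j)$, i.e. $d_b(\bz,\by_i)\le t$ and $d_b(\bz,\by_j)\le t$. I would like to say that this intersection is a "ball-like" region of bounded weight around a common center, so that Theorem~\ref{thm:upper johnson} (the Johnson bound for $(r,b)$-burst-correcting codes contained in a weight-$w$ ball) applies to $\cC$ restricted there. The code $\cC$ is a $(t-s-1,b)$-burst-correcting code, so it has minimum $b^{\le}$-distance at least $2(t-s-1)+1 = 2(t-s)-1$; thus we should take $r=t-s-1$ in Theorem~\ref{thm:upper johnson}. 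The key is to find a single vector $\bc$ and a weight bound $w$ with $\bigcap_{\ell}\Ball_{t,b}(\by_\ell)\subseteq \bc+\Ball_{w,b}(\mathbf 0)$ and $w-r = h$, i.e. $w = t-1$. To produce such a center: pick $\bc=\by_i$ (or better, translate everything so that $\by_i=\mathbf 0$); then each list codeword has $d_b(\bz,\mathbf 0)\le t$, so $\cC':=(\cC-\by_i)\cap\bigcap_\ell\Ball_{t,b}(\by_\ell-\by_i)$ sits inside $\Ball_{t,b}(\mathbf 0)$, is still a $(t-s-1,b)$-burst-correcting code, but a priori only has weight $\le t$, giving from Theorem~\ref{thm:upper johnson} with $w=t$ a bound $O(n^{t-(t-s-1)})=O(n^{s+1})$, which is far too weak. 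So the real work is to use the \emph{second} read $\by_j$ to cut the effective weight down by $s-h+1$: since $d_b(\by_j-\by_i,\mathbf 0)\ge 2(s-h+1)-1$ and $d_b(\bz,\by_j-\by_i)\le t$, a triangle-type / burst-overlap argument (in the spirit of the classical Johnson-bound geometry, and analogous to the support bookkeeping in the proof of Theorem~\ref{thm:construction degree}) should show that the list codewords, after translating by a suitably chosen center lying "between" $\mathbf 0$ and $\by_j-\by_i$, all have $b^{\le}$-weight at most $t - (s-h+1) + (\text{constant}) = h + (t-s-1) + O(1)$, up to a constant number of short bursts. More precisely I expect: writing $m=d_b(\mathbf 0,\by_j-\by_i)$, the set of $\bz$ with $d_b(\bz,\mathbf 0)\le t$ and $d_b(\bz,\by_j-\by_i)\le t$ is contained in a translate of $\Ball_{t',b}(\mathbf 0)$ where $t' = t - \lceil m/2\rceil + O(1) \le t - (s-h+1) + O(1)$, the $O(1)$ absorbing boundary effects of merging/splitting overlapping bursts (each merge costs at most one extra burst of length $\le 2b-1$, which is at most $2$ bursts of length $\le b$). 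The accumulated constant-size slack, together with the $q^{b\cdot(\text{const})}$ ways to fill those $O(1)$ extra bursts, is exactly what produces the crude constant $t q^{b(h+10t)}$ in the statement.

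Then I would finish by applying Theorem~\ref{thm:upper johnson} to $\cC$ restricted to this region: it is a $(t-s-1,b)$-burst-correcting code inside a ball $\Ball_{w,b}(\mathbf 0)$ (after the translation) with $w = h + (t-s-1) + O(1)$, hence of size at most $(w+1)(q^b-1)^{w-(t-s-1)}n^{w-(t-s-1)} = O(n^{h+O(1)})$ — and here I must be careful that the exponent is exactly $h$, not $h+O(1)$, which forces me to keep the extra $O(1)$ bursts \emph{outside} the Johnson recursion: instead of enlarging $w$, I partition the region by the (constantly many) possible values of those $O(1)$ short bursts, apply the clean $w = t-1$ Johnson bound $(t)(q^b-1)^{h}n^h$ to each part, and multiply by the number of parts, which is at most $q^{b\cdot O(t)}$ — giving the final $t q^{b(h+10t)}n^h$ after bounding the number of part-choices and burst-fillings crudely by $q^{b(h+10t)}/(q^b-1)^h \le q^{b(h+10t)}$. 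Taking the maximum over all read-sets $Y$ yields $L(n,\cC,\ch(t,b),N)\le t q^{b(h+10t)}n^h$.

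\textbf{Main obstacle.} The crux — and the step I expect to be most delicate — is the burst-geometry lemma underpinning the second paragraph: showing that $\Ball_{t,b}(\mathbf 0)\cap\Ball_{t,b}(\be)$, for $\be$ of burst-weight $\ge 2(s-h)+1$, is contained in a \emph{single} translate of a burst-ball of radius $t-(s-h+1)$ up to an absolute constant number of short bursts, and controlling that constant cleanly enough that it only inflates the leading coefficient and not the exponent of $n$. This is the burst-metric analogue of the elementary distance computation $\cI_t(n,d)=\Theta(n^{t-\lceil d/2\rceil})$ used in the substitution case (equation~\eqref{eq:Ind}) and of Corollary~\ref{cor:Intersection}, but here I need the stronger structural statement that the intersection actually \emph{looks like a ball} (so Theorem~\ref{thm:upper johnson} can be invoked on a code inside it), not merely that it has the right cardinality; making the "looks like a ball, up to $O(1)$ bursts" claim precise, and choosing the center correctly, is where the real care is needed.
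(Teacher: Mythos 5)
Your skeleton matches the paper's proof almost exactly: extract a far pair $\by_i,\by_j$ with $d_b(\by_i,\by_j)\geq 2(s-h)+1$ via Theorem~\ref{thm:burst diameter}, translate by $\by_i$, argue that the list codewords have small burst weight away from the support of $\by_j-\by_i$, partition by the (constantly many) values on a constant-size window, and apply Theorem~\ref{thm:upper johnson} with $w-r=h$ to each part. However, the step you yourself flag as the ``main obstacle'' is precisely the load-bearing lemma, and you assert it rather than prove it; as written the proposal has a genuine gap there. Moreover, the form in which you state it --- that $\Ball_{t,b}(\mathbf 0)\cap\Ball_{t,b}(\be)$ sits inside a \emph{single translate} of a ball of radius $t-\ceil{m/2}+O(1)$ --- is not what is needed and is not how the paper proceeds: the intersection contains vectors of full burst weight $t$ (with many bursts packed near $\supp(\be)$), so no single small ball contains it. The correct statement is coordinate-local: writing $I^{**}$ for the $(2,b)$-extension of the $k$ intervals carrying $\by_j-\by_i$ (so $\abs{I^{**}}\leq 5bk\leq 10bt$), every list codeword $\bx'_m$ decomposes as $\bv+\bw+\bz$ with $\bv$ supported on $[n]\setminus I^{**}$, and the claim is $\wt_b(\bv)\leq t-s+h-1$. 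The paper proves this by contradiction: if $\wt_b(\bv)\geq t-s+h$, then because the bursts of $\bv$ are separated from $I$ by gaps of length at least $b$, the part $\bz$ of $\bx'_m$ living on $I$ has $\wt_b(\bz)\leq s-h$, whence $\wt_b(\bz-(\by_j-\by_i))\geq k-(s-h)\geq s-h+1$, and the burst weights of the well-separated pieces add to give $d_b(\bx'_m,\by_j-\by_i)\geq t+1$, contradicting $\bx'_m\in\Ball_{t,b}(\by_j-\by_i)$. This additivity-across-separated-supports argument (not a generic triangle inequality) is the content you would need to supply; the $(2,b)$-extension is chosen exactly so that the separation holds.

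Two smaller points. First, your arithmetic ``$w-r=h$, i.e.\ $w=t-1$'' is a slip: with $r=t-s-1$ one needs $w=t-s+h-1$ (you use the correct value implicitly later, since your exponent is $h$, but the clean Johnson bound per class is $(t-s+h)(q^b-1)^hn^h$, not ``$w=t-1$''). Second, once the claim above is in hand, your ``partition by the values of the $O(1)$ extra bursts'' should be made concrete as partitioning $\set{\bx'_1,\dots,\bx'_{p(Y)}}$ into the $q^{\abs{I^{**}}}\leq q^{10bt}$ classes $\cS_{\bu}=\set{\bx'_m:\bx'_m[I^{**}]=\bu}$; each class, after zeroing out $I^{**}$, remains a $(t-s-1,b)$-burst-correcting code contained in $\Ball_{t-s+h-1,b}(\mathbf 0)$, and summing the Johnson bound over classes gives exactly $tq^{b(h+10t)}n^h$.
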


\begin{proof}[Proof of~\cref{thm:upper list recon}]
Assume the codeword $\bx\in\cC$ was transmitted, and $N$ reads were received, $Y = \set{\by_1, \ldots, \by_{N}} \subseteq \Ball_{t,b}(\bx)$. Using these reads, we list-reconstruct 
\[
\cC\cap \bigcap_{k=1}^{N} \Ball_{t,b}(\by_k) = \set*{\bx_1, \ldots, \bx_{p(Y)}} \subseteq \bigcap_{k=1}^{N} \Ball_{t,b}(\by_k).
\]

Since $N\geq \abs{\Ball_{s-h,b}(\mathbf{0})}+1$, by Theorem~\ref{thm:burst diameter}, there exists a pair \( \by_i, \by_j \) such that  
\[
d_b(\by_i, \by_j) \geq 2(s-h) + 1.
\]  
Define \( \bx'_m = \bx_m - \by_i \) for each \( m \in [p(Y)] \). Then, we obtain  
\[
\set*{ \bx_1', \ldots, \bx_{p(Y)}' } \subseteq \bigcap_{k=1}^{N} \Ball_{t,b}(\by_k - \by_i) \subseteq \Ball_{t,b}(\mathbf{0}) \cap \Ball_{t,b}(\by_j - \by_i).
\]
Consider \( \by_j - \by_i \) as burst errors with \( k \) bursts, \( d_b(\by_i,\by_j)= k \geq 2(s-h) + 1 \), where the bursts are contained in pairwise disjoint cyclic intervals \( I_1, \ldots, I_k \). For each \( i \in [k] \), define \( I_i^{**} \) as the \( (2, b) \)-extension of \( I_i \). Set  
\[
I = \bigcup_{i=1}^k I_i, \quad I^{**} = \bigcup_{i=1}^k I_i^{**}, \quad \text{and} \quad
\ell = \abs*{I^{**}}.
\]  
It follows that  $\ell \leq 5bk \leq 10bt$.

Let us consider some arbitrary $\bx'_m$, $m\in[p(Y)]$. Since \( \bx'_m \in \Ball_{t,b}(\mathbf{0}) \cap \Ball_{t,b}(\by_j - \by_i) \), denote $\wt_b(\bx'_m)=w\leq t$. Assume that the bursts of \( \bx'_m \) are contained in pairwise disjoint intervals \( J_1, \ldots, J_w \). Additionally, we can decompose $\bx'_m$ into the sum of three sequences $\bv$, $\bw$, and $\bz$ as follows:
\[
     \bv[j] = \begin{cases}
         \bx'_m[j], &j\in [n]\setminus I^{**},\\
         0, &j\in I^{**},
     \end{cases} 
     \ 
     \bw[j] = \begin{cases}
         \bx'_m[j], &j\in I^{**}\setminus I,\\
         0, &j\in [n]\setminus I^{**} \cup I,
     \end{cases} 
     \ 
     \bz[j] = \begin{cases}
         \bx'_m[j], &j\in I,\\
         0, &j\in [n]\setminus I,
     \end{cases}
     \]
     and $\bx'_m = \bv + \bw + \bz$.
     
     \begin{claim}\label{claim:part weight}
         $\wt_b(\bv) \leq t-s+h-1$.
     \end{claim}
     
\begin{poc}
Assume to the contrary that \( \wt_b(\bv) = r \geq t-s+h \). Let \( K_1, \ldots, K_r \) be disjoint intervals, each containing a distinct burst of \( \bv \). Define \( K_i^* \) as the \( (1, b) \)-extension of \( K_i \), and set  
\[
K = \bigcup_{i=1}^r K_i, \quad K^* = \bigcup_{i=1}^r K_i^*.
\]  
By construction, \( K \) must intersect at least \( t-s+h \) intervals among \( J_1, \ldots, J_w \); otherwise, \( \bv \) would consist of at most \( t-s+h-1 \) bursts, contradicting our assumption that \( \wt_b(\bv) = r \geq t-s+h\). Expanding each \( K_i \) to \( K_i^* \), the set \( K^* \) captures all intervals \( J_1, \ldots, J_w \) that intersect with \( K \). Consequently, \( K^* \) contains at least \( t-s+h \) intervals from \( J_1, \ldots, J_w \).

By definition, the gap between \( K_i^* \) and \( I_j \) is at least \( 2b-b = b \) for all \( i \in [r] \) and \( j \in [k] \). Thus, \( I \subseteq [n] \setminus K^* \), implying that at most \( w - (t-s+h) \) cyclic intervals from \( J_1, \ldots, J_w \) remain in \( I \). Therefore,  
\[
\wt_b(\bz) \leq w - (t-s+h) \leq t - (t-s+h) = s-h.
\]  
Applying the triangle inequality,  
\[
\wt_b(\bz - (\by_j - \by_i)) \geq \wt_b(\by_j - \by_i) - \wt_b(\bz) \geq k - s + h \geq s-h+1.
\]  
Now, considering the distance between \( \bx'_m \) and \( \by_j - \by_i \), we have  
\[
\bx'_m - (\by_j - \by_i) = \bv + \bw + (\bz - (\by_j - \by_i)).
\] Therefore 
    \[
    d_b((\by_j-\by_i), \bx'_m) = \wt_b(\bx'_m - (\by_j-\by_i)) \geq \wt_b(\bv) + \wt_b(\bz-(\by_j-\by_i)) 
    \]
    \[
    = r + \wt_b(\bz-(\by_j-\by_i)) \geq t+1,
    \] 
    where the first inequality comes from the property that the gap between each $I_i$ and each $K_j$ is at least $b$ for $i\in [k], j\in [r]$, which yields that any burst is contained in either $I$ or $K$. This is a contradiction to the assumption that $\bx'_m\subseteq \Ball_{t,b}(\by_{i}-\by_{j})$. 
    \end{poc}

By definition, the set \( \cS = \set{ \bx'_1, \ldots, \bx'_{p(Y)} } \) forms a \( (t-s-1, b) \)-burst-correcting code. Partitioning \( \cS \) into \( q^{\ell} \) disjoint classes, we define  
\[
\cS = \bigcup_{\bu \in \Sigma_q^\ell} \cS_{\bu}, \quad \text{where} \quad \cS_{\bu} = \set*{ \bx'_m \in \cS : \bx'_m[I^{**}] = \bu }.
\]
It is clear that each $\cS_{\bu}$ is also a $(t-s-1,b)$-burst-correcting code. For each class, consider the set
\[
\cS_{\bu}^* = \set*{ \by_{\bx'_m} \in \Sigma_q^n : \by_{\bx'_m}[I^{**}] = 0^\ell, \, \by_{\bx'_m}\overline{[I^{**}]} = \bx'_m\overline{[I^{**}]}, \, \bx'_m \in \cS_{\bu} }.
\]  
By Claim~\ref{claim:part weight}, each \( \by_{\bx'_m} \in \cS_{\bu}^* \) satisfies \( \wt_b(\by_{\bx'_m}) \leq t-s+h-1 \), i.e., \( \cS_{\bu}^* \subseteq \Ball_{t-s+h-1,b}(\mathbf{0}) \), and is also a \( (t-s-1, b) \)-burst-correcting code. Applying Theorem~\ref{thm:upper johnson}, we obtain  
\[
\abs*{\cS_{\bu}} = \abs*{\cS_{\bu}^*} \leq (t-s+h)(q^b-1)^h n^h \leq tq^{bh} n^h.
\]
 By summing up all $\abs{\cS_{\bu}}$, we have   
    \[
    \abs*{\cS} = \sum_{\bu \in \Sigma_q^\ell} \abs{\cS_{\bu}} \leq \sum_{\bu \in \Sigma_q^\ell} tq^{bh}n^{h} \leq t q^{b(h+10t)} n^h.
    \]
Since this holds for any transmitted codeword in $\cC_n$, we have
\[L(n, \cC_n, \ch(t,b), N(n)) \leq t q^{b(h+10t)} n^h.
\]
\end{proof}

\begin{cor}
\label{cor:upperh}
Let $q,b,t,s,h$ be constant positive integers, $q \geq 2$, and $t\geq s\geq h\geq 1$. Then, when using $(t-s-1,b)$-burst-correcting codes of length $n$, over $\ch(t,b)$, with $N(n)$ reads, $\Ord_b(N(n))=s-h$, the asymptotic growth rate of the list size satisfies
\[
    \La(h) \leq h.
\]
\end{cor}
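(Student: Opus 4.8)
The plan is to read off the corollary directly from Theorem~\ref{thm:upper list recon}, which already packages the substantive work (the burst analogue of Kleitman's theorem in Theorem~\ref{thm:burst diameter}, used to locate a far-apart pair of reads, together with the burst Johnson bound of Theorem~\ref{thm:upper johnson}). What remains is purely to unwind the definitions of $b$-order and of $\La(h)$.

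First I would fix an arbitrary function $N(n)$ with $\Ord_b(N(n))=s-h$. By Definition~\ref{def:ord} this means that for all sufficiently large $n$ we have $N(n)>\abs{\Ball_{s-h,b}(\mathbf{0})}$, equivalently $N(n)\ge \abs{\Ball_{s-h,b}(\mathbf{0})}+1$ — which is precisely the hypothesis required to invoke Theorem~\ref{thm:upper list recon}. (Here $\abs{\Ball_{s-h,b}(\mathbf{0})}$ is a well-defined finite integer, indeed $\Theta(n^{s-h})$ by Theorem~\ref{thm:ball size}, though finiteness is all we use.)

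Next, for each such $n$ and every $(t-s-1,b)$-burst-correcting code $\cC\subseteq\Sigma_q^n$, Theorem~\ref{thm:upper list recon} gives $L(n,\cC,\ch(t,b),N(n))\le t\,q^{b(h+10t)}\,n^h$. Taking the maximum over all such codes $\cC$ yields $L(s,N(n))\le t\,q^{b(h+10t)}\,n^h$ for all large $n$. Dividing logarithms, $\frac{\log_q L(s,N(n))}{\log_q n}\le h+\frac{\log_q\!\big(t\,q^{b(h+10t)}\big)}{\log_q n}$, whose $\liminf$ as $n\to\infty$ is at most $h$. Since this bound holds for every admissible $N(n)$, taking the infimum over all $N(n)$ with $\Ord_b(N(n))=s-h$ preserves it, so $\La(h)\le h$.

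I do not expect any real obstacle: the corollary is a bookkeeping consequence of Theorem~\ref{thm:upper list recon}. The only point requiring a moment's care is matching the ``for all sufficiently large $n$'' quantifier in the definition of $b$-order with the threshold $\abs{\Ball_{s-h,b}(\mathbf{0})}+1$ appearing in Theorem~\ref{thm:upper list recon}, and noting that discarding finitely many small $n$ does not affect the $\liminf$. Combined with the matching lower bound $\La(h)\ge h$ from Theorem~\ref{thm:lower list recon}, this determines $\La(h)=h$ exactly, completing the trade-off picture (with $\epsilon=t-s-1$, $\rho=s-h$, $\lambda=h$, so that $\epsilon+\rho+\lambda=t-1$).
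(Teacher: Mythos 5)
Your proposal is correct and follows essentially the same route as the paper's proof: both reduce the corollary to Theorem~\ref{thm:upper list recon} by observing that $\Ord_b(N(n))=s-h$ forces $N(n)\ge\abs{\Ball_{s-h,b}(\mathbf{0})}+1$, then unwind the definition of $\La(h)$. Your extra remark about the ``sufficiently large $n$'' quantifier is a reasonable clarification but does not change the argument.
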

\begin{proof}[Proof of~\cref{cor:upperh}]
Let \( \cC_n\subseteq \Sigma_q^n \) be a sequence of \( (t-s-1,b) \)-burst-correcting codes such that the list size \( L(n, \cC_n, \ch(t,b), N(n)) \) is maximized, i.e.,
\[
L(s, N(n)) = L(n, \cC_n, \ch(t,b), N(n)).
\]
Since $\Ord_b(N(n))=s-h$ implies $N(n)\geq \abs{\Ball_{s-h,b}(\mathbf{0})}+1$, we can use Theorem~\ref{thm:upper list recon} to get 
\[L(s, N(n)) = L(n, \cC_n, \ch(t,b), N(n)) \leq t q^{b(h+10t)} n^h.
\]
Finally, by definition,
\[
\La(h) = \inf_{N(n):\Ord_{b}(N(n))=s-h}\liminf_{n\to\infty} \frac{\log_q L(s,N(n))}{\log_q n} \leq h.
\]
\end{proof}

In summary, we obtain the following result:
\begin{theorem}\label{Theorem:Main 1:List}
Let $q,b,t,s,h$ be constant positive integers, $q \geq 2$, and $t\geq s\geq h\geq 1$. Then, when using $(t-s-1,b)$-burst-correcting codes of length $n$, over $\ch(t,b)$, with $N(n)$ reads, $\Ord_b(N(n))=s-h$, the asymptotic growth rate of the list size satisfies
\[
    \La(h) =  h.
\]
\end{theorem}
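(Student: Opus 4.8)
The plan is to prove the equality $\La(h)=h$ by establishing the two matching inequalities $\La(h)\ge h$ and $\La(h)\le h$ separately; these are precisely Theorem~\ref{thm:lower list recon} and Corollary~\ref{cor:upperh}, so the final statement follows by combining them. Below I describe how I would obtain each half and where the real work lies.

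For the lower bound $\La(h)\ge h$, I would exhibit, for every admissible $N(n)$ with $\Ord_b(N(n))=s-h$, a single sequence of $(t-s-1,b)$-burst-correcting codes whose list size already grows like $n^{(1-o(1))h}$. The natural source is the Johnson-type existence result, Theorem~\ref{thm:johnson}, applied with $r=t-s-1$ and $w=t-s+h-1$: it produces a $(t-s-1,b)$-burst-correcting code $\cC_n\subseteq\Ball_{w,b}(\mathbf{0})$ with $\abs{\cC_n}=\Omega(n^h)$. Because every codeword has $b^\le$-weight at most $t-s+h-1$, the triangle inequality gives $\Ball_{s-h+1,b}(\mathbf{0})\subseteq\Ball_{t,b}(\bx)$ for each $\bx\in\cC_n$, hence $\Ball_{s-h+1,b}(\mathbf{0})$ lies in the intersection of all the error balls. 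By Theorem~\ref{thm:ball size} this intersection has size $\Omega(n^{s-h+1})$, so $N(n)=o(n^{s-h+1})$ reads can never single out a unique codeword and the reconstruction list must contain all of $\cC_n$; taking $\log_q$ and dividing by $\log_q n$ yields $\La(h)\ge h$.

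For the upper bound $\La(h)\le h$, I would show that \emph{every} $(t-s-1,b)$-burst-correcting code, when read $N\ge\abs{\Ball_{s-h,b}(\mathbf{0})}+1$ times over $\ch(t,b)$, has list size $O(n^h)$. The key steps are: (i) use the burst-metric diametric inequality, Theorem~\ref{thm:burst diameter}, to guarantee that among the $N$ reads there is a pair $\by_i,\by_j$ with $d_b(\by_i,\by_j)\ge 2(s-h)+1$, since otherwise the reads would form a set of $b^\le$-diameter $2(s-h)$ and size exceeding $\abs{\Ball_{s-h,b}(\mathbf{0})}$; (ii) translate by $\by_i$ so that all candidate codewords lie in $\Ball_{t,b}(\mathbf{0})\cap\Ball_{t,b}(\by_j-\by_i)$; (iii) fix a constant-size window $I^{**}$ obtained from the $(2,b)$-extensions of the bursts of $\by_j-\by_i$, and partition the candidates according to their restriction to $I^{**}$, getting $q^{\abs{I^{**}}}=q^{O(t)}$ classes; (iv) show (this is the role of Claim~\ref{claim:part weight}) that after zeroing out $I^{**}$ each class becomes a set in $\Ball_{t-s+h-1,b}(\mathbf{0})$ that is still a $(t-s-1,b)$-burst-correcting code, and finally apply the Johnson-type upper bound Theorem~\ref{thm:upper johnson} to bound each class by $O(n^h)$, hence the total list by $O(n^h)$.

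The main obstacle is the upper-bound direction, and within it two ingredients. First, the burst-metric version of Kleitman's theorem (Theorem~\ref{thm:burst diameter}): proving that the shifting operation does not increase the $b^\le$-diameter and that the extremal set is a ball requires careful case analysis, because burst distance is not coordinatewise. Second, the weight-splitting argument of Claim~\ref{claim:part weight}: one must control the gaps between the bursts of the candidate codeword, the bursts of $\by_j-\by_i$, and their $(1,b)$- and $(2,b)$-extensions, so that a burst ``escaping'' the window forces a corresponding drop of $b^\le$-weight inside the window, contradicting $\bx'_m\in\Ball_{t,b}(\by_j-\by_i)$. Once these are in place, the remainder is routine bookkeeping with Theorems~\ref{thm:ball size}, \ref{thm:upper johnson}, and~\ref{thm:johnson}.
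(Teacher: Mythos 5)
Your proposal is correct and follows exactly the paper's route: the theorem is obtained by combining the lower bound of Theorem~\ref{thm:lower list recon} (via the Johnson-type existence result, Theorem~\ref{thm:johnson}, with $r=t-s-1$, $w=t-s+h-1$) with the upper bound of Corollary~\ref{cor:upperh} (via the burst-metric Kleitman theorem, the window/partition argument of Claim~\ref{claim:part weight}, and the Johnson-type upper bound, Theorem~\ref{thm:upper johnson}). Your identification of the two genuinely hard ingredients in the upper bound matches where the paper's technical effort actually lies.
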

\begin{proof}[Proof of~\cref{Theorem:Main 1:List}]
Simply combine Theorem~\ref{thm:lower list recon} and Corollary~\ref{cor:upperh}.
\end{proof}
 
\section{List-Reconstruction Algorithm}
\label{sec:alg}

In Section~\ref{sec:list}, Theorem~\ref{thm:upper list recon}, we proved that any $(t-s-1,b)$-burst-correcting code, used over $\ch(t,b)$, with $N = \abs{\Ball_{s-h,b}(\mathbf{0})} + 1 = \Theta(n^{s-h})$ reads, has a list size at most $ tq^{2b(h+5t)}n^h$. There remains, of course, the problem of designing an efficient algorithm for finding this list. To this end, we leverage the \emph{majority logic with threshold} method, a powerful tool originally introduced by Levenshtein~\cite{levenshtein2001efficient}. This approach was initially developed for constructing algorithms in scenarios where the reconstruction code is the entire space. Over time, the technique has been adapted and extended in various contexts, as highlighted in recent works~\cite{abu2021levenshtein, junnila2023levenshtein}. Before describing the algorithm, we introduce some necessary definitions.

Let $Y = \set{\by_1,\ldots,\by_N} \subseteq \Sigma_q^n$ be a collection of sequences. For any $i\in [n]$ and $c\in \Sigma_q$, we define 
\[
m(j,c) = \abs*{\set{i\in [N]: \by_i[j] = c}},
\]
and 
\[
m(j) = \argmax_{c\in \Sigma_{q}} m(j,c).
\]
It is clear that $m(j)$ is unique when $\max_{c\in \Sigma_{q}} m(j,c) > \frac{N}{2}$. 

Next, we introduce the majority function $\Maj_\tau: 2^{\Sigma_q^n} \to {\Sigma_{*}^{n}}$, where we define $\Sigma_* = \Sigma_q \cup \set{*}$, and where $*$ denotes a new symbol not in $\Sigma_q$. For any $Y = \set{\by_1,\ldots,\by_N} \subseteq \Sigma_q^n$, and any given real $\tau\geq 0$, we define $\Maj_\tau(Y)$ to be the sequence in $\Sigma_*^n$ satisfying for all $j\in[n]$,
\[
\Maj_\tau(Y)[j] = \begin{cases}
    m(j), &\text{ if } \max_{c\in \Sigma_{q}} m(j,c) > \frac{N+\tau}{2},\\
    *, &\text{ otherwise.}
\end{cases}
\]

We present out list-reconstruction algorithm in Algorithm~\ref{alg:reconstruction}. While the code we use, $\cC$, is only a $(t-s-1,b)$-burst-correcting code, our algorithm uses as a black box a list-decoding algorithm for $\cC$ over $\ch(t-s+h,b)$, denoted $LD_{\cC}$. Namely, the algorithm $LD_{\cC}$ computes for any  $\by \in \Sigma_q^n$,
\[LD_{\cC}(\by)=\cC\cap \Ball_{t-s+h,b}(\by).\]
We denote the complexity of the list-decoding algorithm $LD_{\cC}$ as $\Com(LD_{\cC})$.

\begin{algorithm}
\caption{List reconstruction algorithm of a $(t-s-1,b)$-burst-correcting code $\cC\subseteq \Sigma_q^n$ over the channel $\ch(t,b)$ with $N= \abs{\Ball_{s-h,b}(\mathbf{0})} + 1$ reads}
\label{alg:reconstruction}
		\begin{algorithmic}[1]
			\Require{$N$ different sequences $Y = \set{\by_1,\ldots,\by_N} \subseteq \Ball_{t,b}(\bx)$ for some $\bx \in \cC$}                
			\Ensure{$\cC\cap\bigcap_{i\in[N]}\Ball_{t,b}(\by_i)$}
			\LineComment Set-up phase:\\
			$\cL^* \gets \emptyset$\\
                $\tau \gets (1 - \frac{2}{(t-s+h+1)b+1})N$\\
                $\bz \gets \Maj_\tau(Y)$\\
                $S \gets \set{j\in [n]: \bz[j] = *}$\\ $\cZ \gets \set{\bu \in \Sigma_q^n: \bu[j] = \bz[j], j\in [n]\setminus S}$
			
			\LineComment Traversal phase:
			
			\For {$\bu \in \cZ$} 
                \State $\cL_{\bu} \gets LD_{\cC}(\bu)$ \label{line:Lu}
                \State $\cL^* \gets \cL^* \cup \cL_{\bu}$
                \EndFor\\
                $\cL^* \gets \cL^*\cap \bigcap_{i=1}^N \Ball_{t,b}(\by_i)$ \label{line:int}
                \State \Return {$\cL^*$}
		\end{algorithmic}
	\end{algorithm}

\begin{theorem}\label{thm:alg}
Let $\cC \subseteq \Sigma_q^n$ be a $(t-s-1,b)$-burst-correcting code. Assume $N = \abs{\Ball_{s-h,b}(\mathbf{0})} + 1$, with $n \geq \max \set{4b(s-h), s^2 q^{3bt} (4s-4h)^{s-h} ((t-s+h+1)b+1)}$.
Then Algorithm~\ref{alg:reconstruction} list-reconstructs codewords of $\cC$ over $\ch(t,b)$, using $N$ reads, with complexity $O(\Com(LD_{\cC}) + n^{h+2}N)$.
\end{theorem}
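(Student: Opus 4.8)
The plan is to show that Algorithm~\ref{alg:reconstruction} returns exactly $\cC\cap\bigcap_{i\in[N]}\Ball_{t,b}(\by_i)$, and then to bound its running time. Throughout write $\mu=(t-s+h+1)b+1$, so that the threshold satisfies $\frac{N+\tau}{2}=N(1-1/\mu)$ and $\frac{N-\tau}{2}=N/\mu$. One inclusion is immediate: by line~\ref{line:int} every element returned lies in $\bigcap_i\Ball_{t,b}(\by_i)$, and every element added to $\cL^*$ in the traversal phase comes from some $\cL_{\bu}=LD_{\cC}(\bu)\subseteq\cC$; hence the output is contained in the target set.

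For the reverse inclusion, let $\bx\in\cC$ be the transmitted codeword (which itself lies in the target, since all reads lie in $\Ball_{t,b}(\bx)$) and let $\bx_m\in\cC\cap\bigcap_i\Ball_{t,b}(\by_i)$ be arbitrary. The first step is to bound the number of jokers: if $j\in S$ then the most frequent symbol at coordinate $j$ occurs at most $N(1-1/\mu)$ times, so at least $N/\mu$ reads disagree with $\bx$ at $j$; since each $\by_i\in\Ball_{t,b}(\bx)$ disagrees with $\bx$ in at most $tb$ coordinates, summing over $j\in S$ gives $|S|\cdot N/\mu\le tbN$, hence $|S|\le tb\mu=O(1)$ and $|\cZ|=q^{|S|}=O(1)$. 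The second step is to produce a good member of $\cZ$ for $\bx_m$: let $\bu_m$ agree with $\bz$ off $S$ and with $\bx_m$ on $S$, so $\bu_m\in\cZ$ and $\supp(\bu_m-\bx_m)=T_m:=\set{j\notin S:\bz[j]\neq\bx_m[j]}$. Granting the estimate $d_b(\bu_m,\bx_m)=\wt_b(T_m)\le t-s+h$, we obtain $\bx_m\in\cC\cap\Ball_{t-s+h,b}(\bu_m)=LD_{\cC}(\bu_m)\subseteq\cL^*$ in the traversal phase, and since $\bx_m\in\bigcap_i\Ball_{t,b}(\by_i)$ it survives line~\ref{line:int}, so $\bx_m$ is output. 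To prove the estimate I would argue by contradiction: if $\wt_b(T_m)\ge t-s+h+1$, a greedy covering of $T_m$ by cyclic intervals of length at most $b$ yields coordinates $p_1,\dots,p_{t-s+h+1}\in T_m$ that are pairwise at distance at least $b$. For each $p_\ell$, since $\bz[p_\ell]\neq\bx_m[p_\ell]$ and $\bz[p_\ell]$ occurs in more than $N(1-1/\mu)$ reads, those reads disagree with $\bx_m$ at $p_\ell$; because a burst of length at most $b$ meets at most one of the $b$-separated coordinates, each such read devotes a distinct burst (relative to $\bx_m$) to $p_\ell$. A union bound over the $p_\ell$, combined with the choice of $\mu$, with the fact that all $N$ reads are common to $\bx$ and $\bx_m$ together with the a priori bound $d_b(\bx,\bx_m)\le 2(t-s+h)$ from Corollary~\ref{cor:Intersection} (which controls the coordinates where $\bz$ already agrees with $\bx$), and with the lower bound $N=\Omega(n^{s-h})$ supplied by the hypothesis on $n$ via Remark~\ref{rmk:ballsizecomb}, then forces some read to use more than $t$ disjoint bursts relative to $\bx_m$, contradicting $\by_i\in\Ball_{t,b}(\bx_m)$.

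For the running time, computing all counts $m(j,c)$ from the $N$ reads costs $O(nN)$, after which $\bz$, $S$, and $\cZ$ are obtained in $O(n)$ time with $|\cZ|=O(1)$. The traversal phase makes $|\cZ|=O(1)$ calls to $LD_{\cC}$, contributing $O(\Com(LD_{\cC}))$; moreover each $\cL_{\bu}=\cC\cap\Ball_{t-s+h,b}(\bu)$, after translating $\bu$ to $\mathbf{0}$, is a $(t-s-1,b)$-burst-correcting code contained in $\Ball_{t-s+h,b}(\mathbf{0})$, so Theorem~\ref{thm:upper johnson} with $r=t-s-1$ and $w=t-s+h$ gives $\abs{\cL_{\bu}}\le(t-s+h+1)(q^b-1)^{h+1}n^{h+1}=O(n^{h+1})$, whence $\abs{\cL^*}=O(n^{h+1})$ before line~\ref{line:int}. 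Line~\ref{line:int} then tests, for each of the $O(n^{h+1})$ candidates against each of the $N$ reads, whether the $b^{\le}$-burst distance is at most $t$, at cost $O(n)$ per pair via a greedy interval covering, for a total of $O(n^{h+2}N)$. Adding these contributions yields the claimed bound $O(\Com(LD_{\cC})+n^{h+2}N)$.

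The main obstacle is the burst-distance estimate $d_b(\bu_m,\bx_m)\le t-s+h$ of the second step: the threshold $\tau$ must be tuned precisely so that, after greedily covering the disagreement set $T_m$ by short intervals, every $b$-separated witness coordinate attracts a super-threshold majority of reads, and one then has to balance the burst budget of a single read against the number of such witnesses together with the a priori proximity of $\bx$ and $\bx_m$. It is exactly for this counting to close that the statement forces $n$, and hence $N=\abs{\Ball_{s-h,b}(\mathbf{0})}+1$, to be sufficiently large.
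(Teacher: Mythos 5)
Your overall architecture matches the paper's: the two inclusions, the construction of $\bu_m\in\cZ$ agreeing with $\bx_m$ on $S$, the reduction to the estimate $d_b(\bu_m,\bx_m)\le t-s+h$, the bound $\abs{S}=O(1)$ via counting nonzero entries of the error array, and the complexity accounting via Theorem~\ref{thm:upper johnson}. Those parts are correct and essentially identical to the paper's proof.

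However, the central step --- the burst-distance estimate $d_b(\bu_m,\bx_m)\le t-s+h$, which is exactly the paper's Claim~\ref{claim:u-x} and is where the choice of $\tau$ and the hypothesis on $n$ do all the work --- is not actually proved, and the mechanism you propose for the contradiction is wrong. You select $t-s+h+1$ pairwise $b$-separated witness coordinates $p_\ell$ and claim that a ``union bound \ldots forces some read to use more than $t$ disjoint bursts relative to $\bx_m$.'' But since $t-s+h+1\le t$ whenever $h<s$, a single read can perfectly well devote one of its $t$ allowed bursts to every witness point; summing the per-point supermajorities gives a total of more than $(t-s+h+1)N(1-1/\mu)$ disagreement incidences against a trivial upper bound of $(t-s+h+1)N$, which is no contradiction. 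No individual read ever exceeds its budget, so the contradiction cannot be of the form you state. The paper's argument is genuinely different at this point: it double-counts over \emph{all} $m\le(t-s+h+1)b$ support positions $P$ of the offending bursts, writes $\sum_{Q\subseteq P}\abs{Q}\abs{Y_Q}\le(m-1)N+\abs{Y_P}$, and deduces that the set $Y_P$ of reads agreeing with $\bz$ on \emph{all} of $P$ must have size $\Omega(N)=\Omega(n^{s-h})$; it then shows $\abs{Y_P}=O(n^{s-h-1})$ because any such read's error pattern (relative to the codeword) is forced to spend at least $t-s+h+1$ of its $t$ bursts inside a bounded window, leaving only $s-h-1$ free bursts. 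That quantitative clash is the contradiction, and it is precisely what the lower bound on $n$ is calibrated for. Your sketch names some of the right ingredients (the choice of $\mu$, $N=\Omega(n^{s-h})$) but never performs this counting, and the invoked bound $d_b(\bx,\bx_m)\le 2(t-s+h)$ from Corollary~\ref{cor:Intersection} plays no role in the paper's argument and does not substitute for it. A secondary omission: the case $h=s$ (where $N=2$ and the asymptotic counting degenerates) needs the separate elementary treatment the paper gives it.
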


\begin{proof}[Proof of~\cref{thm:alg}]
We begin by proving the correctness of the algorithm. Assume $\bc \in \cC$ was transmitted, and $Y = \set{\by_1,\ldots,\by_N} \subseteq \Ball_{t,b}(\bc)$ are $N$ distinct reads. Let $\cL=\set{\bx_1,\dots,\bx_L}$ be the true list, namely,
\[
\cL = \cC\cap\bigcap_{i\in[N]}\Ball_{t,b}(\by_i),
\]
and our goal is to prove that the algorithm's output, $\cL^*$, is the same, i.e., $\cL^*=\cL$.

By the algorithm, the set $\cZ$ contains exactly $q^{\abs{S}}$ sequences, which are all the ways of filling in the $*$ entries with elements from $\Sigma_q$. Let $\bu\in\cZ$, and denote $\bu[S]$ the restriction of $\bu$ to the coordinates in $S$.

\begin{claim}\label{claim:u-x}
If $\bu\in\cZ$, and $\bu[S]=\bx[S]$ for some $\bx\in\cL$, then $\bu \in \Ball_{t-s+h, b}(\bx)$.
\end{claim}
\begin{poc}
We first contend that the claim holds for $h=s$. Indeed, when $h=s$ we have $N=2$. If $\by_1,\by_2$ are the two received reads, they agree on all positions, except those in the set $S$. The sequence $\bu$, therefore, agrees with $\bx$ in the positions of $S$, and with $\by_1$ and $\by_2$ on the positions of $[n]\setminus S$. Thus,
\[
d_b(\bu,\bx) \leq d_b(\by_1,\bx) \leq t.
\]

Let us therefore consider the case of $h<s$. Assume to the contrary the claim does not hold. Then,
\[
k=\wt_b(\bu-\bx) \geq t-s+h+1.
\]
Hence, we can write $\bu-\bx=\be_1+\dots+\be_k$, where  $\be_1, \ldots, \be_k $ are disjoint $b^{\leq}$-bursts, each contained in disjoint cyclic intervals $ I_1, \ldots, I_{k} $, respectively. For the remainder of the proof, it suffices to focus on $t-s+h+1$ bursts of the $k$. Define $ I_i^* $ as the $ (1,b) $-extension of $ I_i $ for $ i \in [t-s+h+1] $, and let $ I^* = \bigcup_{i=1}^{t-s+h+1} I_i^* $. Denote  
\[
P = \set*{p_1, \ldots, p_m} = \bigcup_{i=1}^{t-s+h+1} \supp(\be_i) \subseteq \bigcup_{i=1}^{t-s+h+1} I_i \subseteq I^*,
\]  
where $ m \leq (t-s+h+1)b $, and $ P $ must be covered by at least $ t-s+h+1 $ intervals, each of length at most $ b $. Since $\supp(\bu - \bx) \subseteq [n] \setminus S $ and $\bu[j] = \bx[j]$ for $j\in S$, we have $ P \subseteq [n] \setminus S $, and $ \bu[j] = \bz[j] $ for all $ j \in [n]\setminus P $.  

For any $ Q \subseteq P $, define  
\[
Y_Q = \set*{\by_i \in Y : \text{for all } j \in P, \by_i[j] = \bz[j] \text{ if and only if } j \in Q},
\]  
and  
\[
S_{p_i} = \set*{Q \subseteq P : p_i \in Q}.
\]  

Then $ N = \abs{Y} = \sum_{Q \subseteq P} \abs{Y_Q} $, and for each $ i \in [m] $, we have  
\[
\sum_{Q \in S_{p_i}} \abs*{Y_Q} > \frac{N + \tau}{2},  
\]  
which follows from the definition of the majority function. On the one hand, we have 
    \[
    \sum_{i=1}^m \sum_{Q\in S_{p_i}} \abs*{Y_Q} > \frac{N+\tau}{2}m.
    \]
On the other hand, we have 
    \[
     \sum_{i=1}^m \sum_{Q\in S_{p_i}} \abs*{Y_Q} = \sum_{Q\subseteq P} \abs*{Q} \abs*{Y_Q} \leq \sum_{Q\subseteq P} (m-1)\abs*{Y_Q} + \abs*{Y_P} = (m-1)N + \abs{Y_P}.
    \]
    Then $(m-1)N + \abs{Y_P} > \frac{N+\tau}{2}m$. By simplifying the inequality, we have 
    \begin{equation}\label{eq:burst count}
    \tau < N - \frac{2}{m}(N-\abs{Y_P}) \leq N - \frac{2}{(t-s+h+1)b}(N-\abs{Y_P}). 
    \end{equation}

We now compute the size of $Y_P$. For all $i\in[N]$, let $\balpha_i = \by_i - \bx\in B_{\leq t, \leq b}$ be the error of $\by_i$ w.r.t.~$\bx$. Since $(\bz - \bx)[j] = (\bu - \bx)[j] \neq 0$ for any $j\in P$, we have
\[
Y_P = \set*{\by_i \in Y:  \by_i[j] = \bz[j] \text{ for } j\in P}
\subseteq \set*{\bx+\balpha : \balpha \in B_{\leq t,\leq b}, \balpha [j] = \bz[j]-\bx[j] \neq 0 \text{ for } j\in P}.
\]
Hence,
\[
\abs*{Y_P} \leq \sum_{\ell=t-s+h+1}^t \abs*{\set*{\balpha \in B_{\ell,\leq b}: \balpha[j] = (\bz -\bx)[j]\neq 0, \text{ for } j\in P}}.
\]
    
Since $P \subseteq I \subseteq I^*$, we have for each $\balpha\in B_{\ell,\leq b}$ as in the expression above, positions $[n] \setminus I^*$ contain at most $\ell-(t-s+h+1)$ number of $b^{\leq}$-bursts. In other words, $\balpha\overline{[I^*]} \in B_{\leq \ell-t+s-h-1,\leq b}(n-\abs{I^*},q)$. By Remark~\ref{rmk:ballsizecomb} and the fact that $n\geq 4b(s-h)$ as well as $\abs{I^*} \leq 3b(t-s+h+1)$, we have 
\begin{align*}
&\abs*{\set*{\balpha \in B_{\ell,\leq b}: \balpha[j] = (\bz -\bx)[j]\neq 0, \text{ for } j\in P}}
\leq \abs*{B_{\leq \ell-t+s-h-1,\leq b}(n-\abs{I^*},q)} q^{\abs{I^*}} \\
&\leq \abs*{B_{\leq s-h-1,\leq b}(n,q)} q^{\abs{I^*}} 
\leq (s-h)\binom{n}{s-h-1}(q^b-1)^{s-h-1} q^{3b(t-s+h+1)} \\
&\leq (s-h)q^{3bt} \binom{n}{s-h-1}.
\end{align*}
If follows that
\[
\abs*{Y_p} \leq (s-h)^2 q^{3bt} \binom{n}{s-h-1} \leq s^2 q^{3bt} n^{s-h-1}.
\]
    
Since $n \geq 4b(s-h)$, by Remark~\ref{rmk:ballsizecomb}, we have 
\[
N=\abs*{\Ball_{s-h,b}(\mathbf{0})} + 1 > \parenv*{1-\frac{1}{q}}^{s-h}(q^b-1)^{s-h} \binom{n-(2b-2)(s-h)}{s-h} \geq \parenv*{\frac{n}{4(s-h)}}^{s-h}.
\]
Thus,
\[
    \abs*{Y_p} < \frac{N}{(t-s+h+1)b+1},
\]
for $n \geq \max \set{4b(s-h), s^2 q^{3bt} (4s-4h)^{s-h} ((t-s+h+1)b+1)}$. 
Therefore, Equation~\eqref{eq:burst count} leads to a contradiction since
    \[
    \tau = \parenv*{1 - \frac{2}{(r-s+h+1)b+1}}N < N - \frac{2}{(t-s+h+1)b} \parenv*{N - \frac{N}{(t-s+h+1)b+1}} = \tau.
    \]
\end{poc}

We proceed to show that $\cL$ and $\cL^*$ are subsets of each other, and therefore equal. In the first direction, by assumption, $LD_{\cC}(\bu)\subseteq \cC$, for every $\bu\in\Sigma_q^n$, and thus, $\cL^*\subseteq\cC$. Additionally, by line~\ref{line:int}, we have $\cL^*\subseteq\bigcap_{i\in[N]}\Ball_{t,b}(\by_i)$. Thus,
\[
\cL^*\subseteq \cC\cap\bigcap_{i\in[N]}\Ball_{t,b}(\by_i)=\cL.
\]
In the other direction, let $\bx\in\cL$, and let $\bu\in\cZ$ be such that $\bu[S]=\bx[S]$. By Claim~\ref{claim:u-x}, $d_b(\bu,\bx)\leq t-s+h$. Thus, in line~\ref{line:Lu}, $\bx\in\cL_{\bu}$, and $\bx$ does not get removed in line~\ref{line:int} since $\bx\in\cL\subseteq \bigcap_{i\in[N]}\Ball_{t,b}(\by_i)$. Therefore, when the algorithm finishes, $\bx\in\cL^*$. This gives us
\[
\cL \subseteq \cL^*,
\]
and hence, $\cL^*=\cL$, namely, the algorithm computes the correct list.
    
To compute the complexity of the algorithm, we first show that the size of $S$ is a constant. Let $A$ be an $N\times n$ array such that the $i$-th row $A[i]$ is $\balpha_i = \by_i - \bc$, where $\bc\in\cC$ is the transmitted codeword. Since $\balpha_i \in B_{\leq t,\leq b}$, the total number of non-zero entries of $A$ is at most $btN$. On the other hand, for each column $j\in S$, the number of non-zero entries in $j$-th column is 
\[
N - m(j,\bc[j]) \geq N - \max_{c\in \Sigma_q}\set*{m(j,c)} \geq N - \frac{N+\tau}{2} = \frac{N}{(t-s+h+1)b+1}.
\]
Thus, the total number of non-zero entries of $A$ is at least $\frac{N}{(t-s+h+1)b+1} \abs{S}$. Then 
\[
\abs{S} \leq bt((t-s+h+1)b+1) = O(1).
\]

Next, we bound the list size $\cL_{\bu}$ computed in line~\ref{line:Lu}. Since $\cC$ is a $(t-s-1,b)$-burst-correcting code, and the decoding radius is $t-s+h$, the list size $\abs{\cL_{\bu}}$ from line~\ref{line:Lu}, is upper bounded by the number of codewords a $(t-s-1,b)$-burst-correcting code may have inside a ball of radius $t-s+h$. By Theorem~\ref{thm:upper johnson},
\[
\abs*{\cL_{\bu}} \leq (t-s+h+1)(q^b-1)^{h+1}n^{h+1}=O(n^{h+1}).
\]
Since we have $O(1)$ iterations, $\abs{\cL^*}=O(n^{h+1})$ as well.

In the Set-up phase, the time complexity is $O(n N)$. For the traversal phase, we have $O(1)$ iterations of running $LD_{\cC}$, requiring $O(\Com(LD_{\cC}))$. Taking the union of lists of length $O(n^{h+1})$ takes $O(n^{h+1}\log n)$. Finally, taking the intersection in line~\ref{line:int} requires $N$ distance checks for each element in $\cL^*$. Measuring the burst distance between two vectors of length $n$ takes $O(n)$ times (see the appendix). Thus, line~\ref{line:int} has time complexity $O(n^{h+2}N)$. Therefore the total time complexity is $O(\Com(LD_{\cC}) + n^{h+2}N)$.
\end{proof}
 
\section{Conclusion}\label{sec:con}
In this paper, we explored the trade-off between the redundancy of reconstruction codes and the number of reads in a channel subject to at most $t$ burst substitutions of length $\leq b$. We established bounds on the redundancy of burst-correcting codes, which can be viewed as a special case of reconstruction codes with a single read. For cases involving $\Theta(n^s)$ reads, the optimal choice for reconstructing codewords is a $t-s-1$ error-correcting code (as discussed in Section~\ref{sec:reconstruction code}). Our findings generalize the work of Levenshtein~\cite{levenshtein2001efficient}, extending the analysis from channels with multiple substitutions to those with multiple burst substitutions.  For scenarios involving smaller $\Theta(n^{s-h})$ reads,  $t-s-1$ error-correcting codes can be list-decoded with list size $O(n^h)$ has been studied in Section~\ref{sec:list}. Roughly speaking, we can summarize this in the following way: In the presence of a channel introducing up to $t$ burst errors of length at most $b$, we have a ``budget'' of $t-1$ which we can divide between error-correction capability $\epsilon$, number of reads $\rho$, and list size $\lambda$,
\[
t-1 = \epsilon + \rho + \lambda.
\]
Then we can use, over this channel, an burst-correcting code capable of correcting $\epsilon$ errors, with $\Theta(n^\rho)$ reads, and list size $O(n^\lambda)$, where $\epsilon,\rho,\lambda$ are non-negative integers.

Several open questions remain. One area of interest is the development of error-correcting codes for channels with multiple burst substitutions that offer improved redundancy compared to the improved GV bound. While it is known that $\ra(\ch(t,b),s)=\ra(\ch(t-s-1,b),-1)$ (see Theorem~\ref{thm:MARR for burst}), the exact values have not yet been fully determined. Finally, we also leave Conjecture~\ref{conj: burst} as a topic for future investigation.

\appendix 

\section{Appendix -- Computing the Burst Distance}

In this appendix we present a straightforward linear-time algorithm for finding $d_b(\bx,\by)$ for any two $\bx,\by\in\Sigma_q^n$. The algorithm is given in Algorithm~\ref{alg:distance}.

To prove correctness, assume $ d_b(\bx, \by) = k $, so $ \supp(\be) $ is covered by $ k $ disjoint cyclic intervals $ I_1, \ldots, I_k $. For each $ p \in P $, the algorithm greedily generates $ t_p $ cyclic intervals $ J_1, \ldots, J_{t_p} $ covering $ \supp(\be) $. Hence, the output satisfies $ t_p \geq k $. Now assume $ p_1 \in I_1 $. The starting point $ f_1 $ of $ I_1 $ must belong to $ P $. For $ p = f_1 $, the algorithm constructs $ t_{f_1} $ intervals $ J_1, \ldots, J_{t_{f_1}} $ that cover $\supp(\be)$. It is easy to see that once the right starting point is used, a greedy choice of intervals results in $t_{f_1}\leq k$, and therefore, $ t_{f_1} = k $. The algorithm outputs $ \min t_p = k$ ensuring the correctness.

\begin{algorithm}
    \caption{Calculation of the $d_b$ burst distance}
    \label{alg:distance}
    \begin{algorithmic}[1]
        \Require {$\bx, \by \in \Sigma_q^n$, maximum burst length $b$}
        \Ensure {$d_b(\bx, \by)$}

        \If{$\bx=\by$}
            \State \Return $0$
        \EndIf
        \State $\be \gets \bx - \by$ \Comment{Compute the error vector}
        \State $p_1 \gets \min \supp(\be)$ \Comment{Find leftmost error position}
        \State $P \gets [p_1 - b + 1, p_1] \cap \supp(\be)$ 
        \State $T_P\gets \emptyset$
        
        \For{$p \in P$}
            \State $U \gets \supp(\be)$ \Comment{Remaining error positions}
            \State $t_p \gets 0$ \Comment{Burst count for current $p$}
            
            \While{$U \neq \emptyset$}
                \State $I_{t_p} \gets [p, p + b - 1]$
                \State $U \gets U \setminus I_{t_p}$
                \State $p \gets \min U$ \Comment{Next starting point}
                \State $t_p \gets t_p + 1$
            \EndWhile
            \State $T_P\gets T_P\cup\set{t_p}$
        \EndFor
        \State \Return $\min T_p$ \Comment{Minimum burst count across all start points}
    \end{algorithmic}
\end{algorithm}

\bibliographystyle{abbrv}
\bibliography{reference}

\end{document}